%% file: main_arxiv.tex
\documentclass{article}

\usepackage[creativeai, preprint, nonatbib]{neurips_2026}
\usepackage[numbers,compress]{natbib}

\usepackage[utf8]{inputenc}
\usepackage[T1]{fontenc}
\usepackage{lineno}  % needed for \nolinenumbers wrappers in 03_theory_v2.tex
\renewcommand{\linenumbers}{}  % arXiv: keep line numbers OFF (neutralize the \linenumbers call in 03_theory_v2)

\PassOptionsToPackage{hyphens}{url}
\usepackage{hyperref}

\usepackage{url}
\usepackage{booktabs}
\usepackage{amsfonts}
\usepackage{amsmath}
\usepackage{amssymb}
\usepackage{amsthm}
\usepackage{mathtools}
\usepackage[capitalise]{cleveref}
\usepackage{microtype}
\usepackage{xcolor}
\usepackage{graphicx}
\usepackage{longtable}
\usepackage{float}
\usepackage{placeins}

\newtheorem{theorem}{Theorem}
\newtheorem{proposition}{Proposition}

\newtheorem{corollary}{Corollary}
\newtheorem{lemma}{Lemma}
\newtheorem{remark}{Remark}

\DeclareMathOperator{\Var}{Var}
\DeclareMathOperator{\Cov}{Cov}
\DeclareMathOperator{\SNR}{SNR}

\newcommand{\E}{\mathbb{E}}
\newcommand{\R}{\mathbb{R}}

\newcommand{\lz}[1]{}
\newcommand{\cd}[1]{}
\newcommand{\jd}[1]{}
\newcommand{\jm}[1]{}
\newcommand{\bl}[1]{}

\title{What's a Credit Worth? A Market Framework for Attribution-Aware Compensation in Generative Music}

\author{
  Luyang Zhang$^{1}$\thanks{Equal contribution.} \quad Xirui Jiang$^{2}$\footnotemark[1] \quad Junwei Deng$^3$ \\
  \textbf{Beibei Li}$^1$ \quad \textbf{Jiaqi W. Ma}$^3$ \quad \textbf{Chris Donahue}$^1$ \\[0.5em]
  $^1$Carnegie Mellon University \quad $^2$University of Michigan, Ann Arbor \\
  $^3$University of Illinois Urbana-Champaign \\[0.5em]
  \texttt{\{luyangz, beibeili\}@andrew.cmu.edu} \\
  \texttt{xirui@umich.edu} \\
  \texttt{\{junweid2, jiaqima\}@illinois.edu}
}

\begin{document}

\maketitle

\input{section_arxiv/00_abstract}

%----------------------------------------------------------------------
% Body
%----------------------------------------------------------------------
\input{section_arxiv/01_intro}
\input{section_arxiv/02_related}
\input{section_arxiv/03_theory_v2}
\input{section_arxiv/05_empirical}
\input{section_arxiv/06_conclusion}

%----------------------------------------------------------------------
% References
%----------------------------------------------------------------------
\bibliographystyle{plainnat}
\bibliography{refs}

%----------------------------------------------------------------------
% Appendix
%----------------------------------------------------------------------
\clearpage
\appendix
\crefalias{section}{appendix}
\crefalias{subsection}{appendix}
\input{section_arxiv/A_gaussian}
\input{section_arxiv/B_proofs}
\input{section_arxiv/C_ic}
\input{section_arxiv/D_uniform_welfare}
\input{section_arxiv/D2_competition}
\input{section_arxiv/E_modeling}
\input{section_arxiv/F_additional_empirical}
\input{section_arxiv/G_bargaining}
\input{section_arxiv/H_dynamics}
\input{section_arxiv/I_synthetic}

\end{document}

%% file: section_arxiv/00_abstract.tex
% !TEX root = ../main.tex
\begin{abstract}
Advances in generative AI are rapidly increasing the quality and commercial value of generated music, and this progress depends on large catalogs of creators' recordings.
This raises a central question for platform design: how should creators be compensated when their work is used to train generative AI models that in turn produce commercial outputs?
We develop a framework for fairly compensating creators in generative-music markets, where each creator's payment depends on a data-attribution score estimating their contribution to model outputs. 
% CHRIS: We might want something like this to distinguish our payment framework from others that have been proposed?
Compared to past compensation frameworks, our framework has two unique considerations: (1) attribution is traced to entire creator catalogs, 
%, 
not individual songs, and (2) the \emph{informativeness} (signal-to-noise ratio) of the attribution score is an input to the payment mechanism.
The framework yields a closed-form payment rule per creator and measures the welfare cost of inaccurate attribution for both creators and the platform.
Whether the welfare-optimal contract is royalty-based or takes the form of fixed-fee licensing depends on how informative attribution is for that creator's catalog.
We show that better attribution translates directly into welfare gains for both creators and the platform, yet under multi-platform competition a platform only captures gains from attribution improvements when its signal becomes the most precise in the market. 
%Empirically, we propose leave-one-catalog-out fine-tuning as a scalable measurement protocol for catalog-level attribution, applied to a generative music model.
To ground our framework in empirical behavior, 
we train acoustic and symbolic music generation models and measure the informativeness of scalable attribution techniques against a leave-one-catalog-out ground truth. 
Our experiments reveal that noisy attribution signals push payment toward fixed-fee licensing and diminish welfare for both creators and the platform, providing an economic motivation for further research on improved attribution.
% CHRIS: below sentence felt a bit redundant (felt like we made this point several times) so I shortened
%This calls for improving attribution signal quality, which raises welfare on both the platform and creators.
\end{abstract}

%% file: section_arxiv/01_intro.tex
% !TEX root = ../main.tex
\section{Introduction}
\label{sec:intro}
%----------------------------------------------------------------------
Generative music models trained on copyrighted recordings are reshaping how music is produced and consumed. As their output capabilities have advanced, 
AI music is crowding music streaming services~\citep{deezer2026aiuploads},
copyright disputes have emerged~\citep{riaa2024lawsuit}, 
and regulators have responded with disclosure mandates~\citep{euaiact2024}, suggesting the absence of a reliable and fair compensation rule. Training-data provenance is rarely disclosed, and licensing, where it exists, is negotiated at the catalog or genre level for flat fees that ignore two key factors: 
%which creators contributed most to a given output \citep{ifpi2025}
(1)~that some data contributes more than others for a given generated output~\citep{ifpi2025}, and 
(2)~that flat licensing fees do not grow proportionally to revenue from generated outputs. 
This raises a central question: how should creators be compensated when their work trains generative AI models that are then used commercially?

Answering this question requires combining two technical fields. The first is \emph{data attribution}, which estimates how much each creator's catalog contributed to a given model output. The second is \emph{payment design}, which turns those estimates into contracts that handle noise, risk, and creator participation.
For data attribution, machine-learning methods \citep{ghorbani2019data,koh2017understanding,ilyas2022datamodels} can estimate each creator's contribution, but the estimates carry noise that varies by creator and by method. 
Past work in data attribution for music AI~\citep{barnett2024embeddings,choi2025unlearning,deng2023copyright} seek to measure or reduce this noise, but do not factor it into the payment mechanism. 
For payment design, contract theory \citep{holmstrom1979moral} supplies general tools for payment under noisy signals, but has rarely been combined with the attribution methods that produce the signal.
We unify these isolated literatures.

% P3 — what our framework is about
To this end, we develop a framework for fairly compensating creators in generative-music markets, where each creator's payment depends on a data-attribution score estimating their contribution to model outputs. 
The framework's central concept is \emph{informativeness}, a measure of how well the noisy attribution signal tracks the true contribution. 
While most work in data attribution tracks influence per-datum (i.e.,~per song), we make a simplifying assumption that tracking influence \emph{per-catalog} may suffice for compensation, as creators' catalogs naturally group the training data by distinct payee.
In practice, here \emph{creator} can refer to any type of music rights holder, e.g., an independent aritst, or a record label. 
We derive a closed-form payment rule per creator that balances a fixed \emph{licensing} fee and \emph{royalties} proportional to the catalog attribution score,
% CHRIS: this feels redundant w.r.t. the "implications" discussed int he next paragraph
% (more royalty when informativeness is high, more fixed fee when it is low), 
along with a closed-form expression for the welfare loss caused by imperfect 
%attribution. 
informativeness. 

The framework yields several implications about how attribution informativeness shapes contracts, welfare, and creator incentives.
For music AI platforms, it is optimal to adopt royalty-based payment when the attribution to a creator's catalog is precise, and fixed-fee licensing when it is noisy.
\textbf{We show that both creators and the AI platform can benefit from more precise attribution.}
Under multi-platform competition, the leading platform is mostly incentivized to improve the attribution quality, since better signals widen its lead and let it capture more value from creators.
For creators, the welfare-optimal contract form (royalty-based or fixed-fee licensing) is determined by the platform's attribution informativeness for their catalog. When their own beliefs about that informativeness diverge from the platform's measurement, both creators and the platform incur welfare losses.%  
% CD: Why does this produce systemic under-investment? Not obvious to me --- seems like it could incentive investment depending on the magnitude of potential benefits. Also, I thought we were only studying single platform scenarios? I'll keep reading...
% CD: I don't follow this argument. How would creators report attribution scores? Doesn't the platform have to report attribution scores?
% LZ: reshape the positioning based on the overall feedback

% P5 — empirical
Under this framework, we propose a leave-one-catalog-out measurement pipeline and study it empirically across two settings.
The first setting applies a latent audio diffusion model Stable Audio 
Open \citep{evans2024stable} on $22$ catalogs of Creative Commons music data \citep{bogdanov2019mtg} and measure informativeness of EKFAC \citep{mlodozeniec2025influence} and D-TRAK \citep{zheng2024dtrak}.
The second setting (symbolic generation) trains a music language model~\citep{thickstun2023anticipatory} on $100$ catalogs of commercial melodies and harmonies from TheoryTab \citep{donahue2022melody} and evaluate LoGra \citep{choe2024your} and TRAK \citep{park2023trak}.
We observe that two attribution methods can \emph{rank} creators similarly yet differ by orders of magnitude in the \emph{scores} that determine payment.
In the audio setting, current attribution noise makes fixed-fee licensing welfare-optimal for nearly all creators.
Creators with highly identifiable styles would benefit from royalty-based payment, but current audio attribution is too noisy to track their contribution.
The symbolic setting reproduces the same diagnostic --- current attribution methods sit below the royalty threshold across both modalities --- confirming that the precision shortfall is a property of current ML broadly, not a quirk of audio diffusion.
Platforms deploying attribution-based compensation should therefore benchmark multiple methods rather than commit to one, benefiting both creators and the platform.
%Stable Audio Open, a commercial-grade audio diffusion model, with EKFAC \citep{mlodozeniec2025influence} and D-TRAK \citep{zheng2024dtrak}. 
%uses a symbolic-music language model for higher-precision attribution.

% P6 — three contributions
Our contributions are threefold:
\begin{itemize}
\item \textbf{Framework.}
The first principal-agent framework bridging data attribution and contract design for generative music markets. 
It yields closed-form expressions for the optimal payment rule per creator (royalty-based or fixed-fee licensing) and for the welfare cost of imperfect attribution.

\item \textbf{Measurement.}
The first empirical benchmark of attribution informativeness in 
%deployed 
generative music models, with experiments that reflect the model paradigms and content modalities of today's commercial music AI systems under tractable attribution methods. %

\item \textbf{Implications.}
Combined theoretical and empirical analysis yields concrete 
% CHRIS: just trying to get rid of dangler. %economic 
implications for the generative music industry, establishing attribution quality and market transparency as measurable economic primitives and providing a basis for industry policy and governance.
\end{itemize}

%----------------------------------------------------------------------

%% file: section_arxiv/02_related.tex
% !TEX root = ../main.tex
\section{Related work}
\label{sec:related}
%----------------------------------------------------------------------

\textbf{Data pricing and contract design.}
Several works study the design of data markets---marketplace design \citep{agarwal2019marketplace}, the price of information \citep{bergemann2018design}, and pricing data for an ML buyer \citep{chen2022selling}---alongside macroeconomic analyses of data as a nonrival good \citep{jones2020nonrivalry,cong2022data,acemoglu2022toomuch}. Contract theory provides the payment-design tools: \citet{holmstrom1979moral} characterizes optimal payment under noisy signals, \citet{mussa1978monopoly} treats contract design when creators differ, and \citet{guruganesh2021contracts} extends both algorithmically. Most recently, \citet{zhang2025fairshare} propose data pricing for general LLM data market, assuming risk-neutral creators and a reliable attribution score. None of these works combine ML data-attribution methods with contract design, nor do they treat signal informativeness as a heterogeneous quantity across creators that the payment rule must explicitly accommodate.

\textbf{Data attribution and music attribution.}
General data-attribution methods estimate per-datum contribution to model outputs. Data Shapley \citep{ghorbani2019data,jia2019towards} provides an axiomatic per-datum attribution; influence functions \citep{koh2017understanding,giordano2019swiss} and TracIn \citep{pruthi2020estimating} trace predictions to training examples; datamodels \citep{ilyas2022datamodels} and TRAK \citep{park2023trak} fit linear surrogates for the data-to-output mapping. Recent extensions scale these methods to approximate unrolled differentiation \citep{bae2024training}, datamodel-based selection \citep{engstrom2024dsdm}, LLM-scale influence \citep{grosse2023studying}, LoRA-tuned models \citep{kwon2024datainf}, and diffusion models \citep{georgiev2023journeytrak}. In the music domain specifically, fingerprinting \citep{wang2003shazam} catches near-exact matches, embedding methods \citep{spijkervet2021clmr,serra2010audio} measure distributional similarity, and \citet{barnett2024embeddings,choi2025unlearning,deng2023copyright} apply attribution to generative music for royalty assignment. This past work broadly seeks to improve informativeness in data attribution; we instead factor its (often imperfect) informativeness into a payment structure for music data.

\textbf{Generative music models.}
A growing line of generative music models are trained on large catalogs of audio recordings. Modern approaches involve two families of methods: codec LMs \citep{vandenoord2017vqvae} are language models trained on tokenized music representations \citep{dieleman2018challenge,dhariwal2020jukebox,agostinelli2023musiclm,copet2023simple}, while other approaches train diffusion models on latent audio representations \citep{ho2020ddpm,rombach2022highresolution,forsgren2022riffusion,evans2024stable}.
Output quality from these methods is now sufficient for substantial commercial value and competition with human artists, e.g., music AI startup Suno recently reported $300$m in annual recurring revenue \citep{silberling2026suno}, and a streaming service Deezer reported that $44\%$ of newly uploaded tracks are AI generated \citep{deezer2026aiuploads}.
Recent open-weight models \citep{caillon2025livemusic,gong2026acestep} have broadened access to high quality music generation outside major labs.
The increasing value and proliferation of this technology motivates urgent study of mechanisms for economic and creative sustainability, such as the payment mechanisms that we consider here.
% CHRIS: don't want to sound too biased here. this sort of implies infringement, while several of these works license their training catalogs
%copyrighted

%----------------------------------------------------------------------

%% file: section_arxiv/03_theory_v2.tex
% !TEX root = ../main.tex
\section{A framework for attribution-aware compensation}
\label{sec:theory}
%----------------------------------------------------------------------

Here we formalize our proposed music AI data payment framework as a market solution that incorporates creators and AI platform.
Beyond characterizing the equilibrium payment rule, the framework reveals how attribution shapes contracts, welfare, and competition across the market.
%\cd{I think we need to emphasize here that this is a market solution that considers creators and platforms on a neutral basis, not a regulatory one}\lz{done.}

% \nolinenumbers around the figure prevents a lineno + hyperref + figure-float
% interaction that triggers an `endlink`/`startlink` nesting bug in pdftex.
\nolinenumbers
\begin{figure*}[t]
\centering
\includegraphics[width=\textwidth]{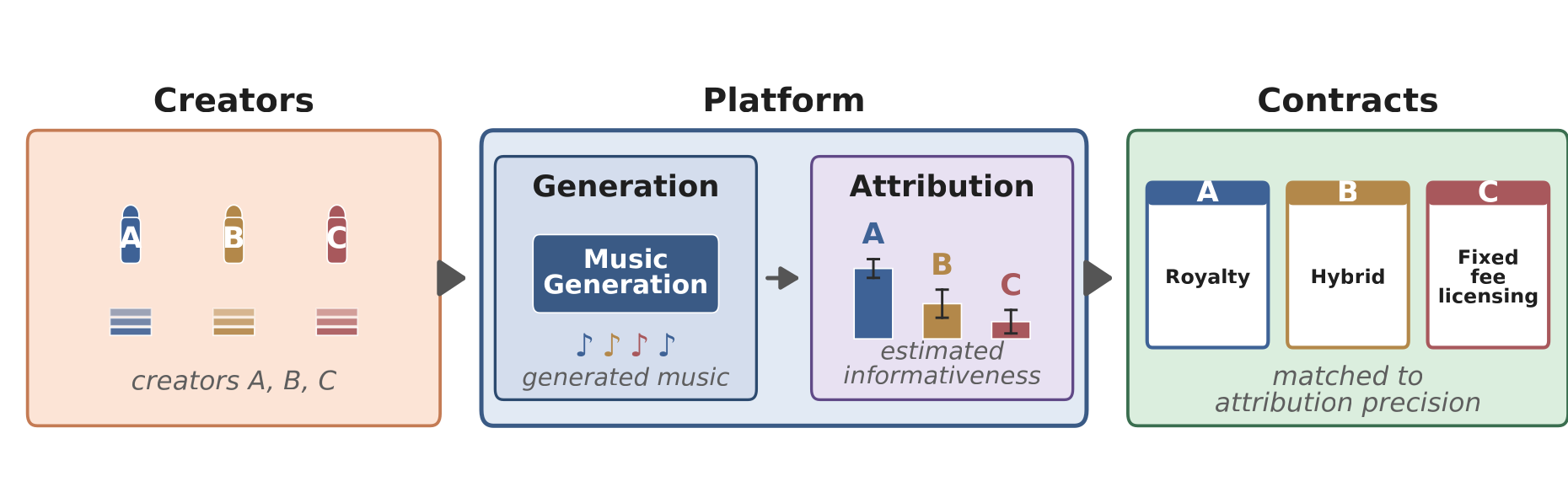}
\vspace{-25pt}
\caption{The attribution-aware compensation framework for generative music. Each creator's contract form follows from how precisely their contribution can be attributed to a generated output.}
\label{fig:framework}
\vspace{-15pt}
\end{figure*}
\linenumbers

\subsection{Framework for musical market and data attribution}
\label{sec:setup}

We model a single music AI platform that licenses training catalogs from many creators and pays each through a contract whose form depends on how well the platform can attribute generated outputs back to that creator's catalog.
The setup has four primitives --- the market and its creators, the data-attribution layers, an informativeness measure, and the contract space with creator preferences over fixed-fee and royalty bundles.

\textbf{Market and Creators.}
The market has three participants --- $N$ creators, an AI platform, and its users.
We focus on the platform--creator interaction, taking user demand as exogenous (observable from platform logs).
The platform operates a generative music service.
A user request $r$ arrives, and the platform earns gross utility $u(x,r)$ where $x\in\{0,1\}^N$ indicates the catalogs active in the training set.
User requests $r$ are drawn from a distribution $\mathcal{D}$ that the platform can estimate empirically from request logs; we make no iid assumption across requests.

Each creator $j\in\{1,\ldots,N\}$ controls a distinct, non-overlapping catalog $C_j$ (e.g.,~an artist's discography, or a label's collection) to be licensed wholesale --- all or nothing.
%We treat catalogs atomically because each creator has a distinct, non-overlapping catalog. 
%\cd{this doesn't sanity check. as a user, I can play a single track on Spotify which triggers a direct royalty. let's focus instead on the payee justification I added to the intro? that each payee has a distinct, non-overlapping catalog}\lz{done.}
We further treat attribution informativeness as accessible to both creators and the platform.
%\cd{Bookmark to revisit this... I think we should explain here that we later relax this assumption? otherwise I'd be very skeptical at this point as a reader}\lz{done --- consolidated the two-regimes discussion into the later ``We analyze two cases regarding information transparency'' paragraph (a few paragraphs below in the same Setup subsection), where it naturally belongs; removed the early forward pointer here per CD's follow-up that the duplication felt redundant. A skeptical reader encounters the full symmetric/asymmetric framing within reading distance.}
The underlying methods (EKFAC, D-TRAK, TRAK) are publicly available open-source academic tools, so creators with model access (or third-party auditors acting on their behalf, e.g., royalty collection societies or music-tech consortia) can independently verify the platform's measurements rather than rely on its claims.

Each creator has a risk-aversion parameter $\alpha_j>0$ --- how strongly they prefer steady payments to volatile ones \citep{pratt1964risk}. The platform's utility $u$ is non-additive across creators, reflecting the non-additive nature of data attribution (two creators with similar styles add less joint value than two distinct ones).

\textbf{Data attribution.}
The central informational primitive is attribution --- how much each creator's catalog contributed to a generated output.
The latent true attribution $a^*_{j}\in\R$ is the counterfactual contribution of catalog $C_j$ to the platform's outputs over a billing period (e.g., monthly in streaming).
The measured attribution $\hat a_{j}$ is a noisy estimate produced by an ML method (e.g., data Shapley \citep{ghorbani2019data}, influence functions \citep{koh2017understanding}, audio fingerprinting), and is what the platform pays on.
Unlike most data-pricing settings \citep{agarwal2019marketplace,chen2022selling,zhang2025fairshare}, music has a perceptual property --- listeners can recognize an artist's style in generated output, so $a^*_{j}$ 
%is a meaningful quantity to attribute, 
may be semantically meaningful, 
not a marginal predictive effect.
% We analyze two cases --- \emph{information symmetry}, where the platform and creators share the same attribution score (the baseline used in \Cref{sec:optimal}), and \emph{information asymmetry}, where the platform and creators do not have equal access to the score (addressed in \Cref{sec:tier3}). 

We analyze two cases regarding information transparency. \emph{Information symmetry} is the baseline in information economics \citep{akerlof1970lemons,holmstrom1979moral,hirshleifer1971private,bergemann2018design}, aligned with music-industry practice as audio fingerprinting \citep{wang2003shazam}, open-weight models, and third-party auditors (collection societies, music-tech consortia) enable independent verification of attribution scores.
\emph{Information asymmetry} is a more practical case, where platforms use proprietary or white-box attribution methods (like influence functions or EKFAC) that creators cannot reproduce without model access. The welfare gap between the two cases (\Cref{sec:tier3}) measures how much attribution transparency is worth to creators and the platform.
%\cd{above, we're strongly emphasizing the information symmetry scenario. here, this seems a bit redundant in the best case, or contradictory in the worst case?}\lz{done --- the early ``We analyze two information regimes'' paragraph that was over-emphasizing symmetric was removed; this paragraph is now the single canonical location for the two-cases framing, so there is no longer redundancy or apparent contradiction with anything earlier.}

%\cd{can we explain why these might both be relevant in the real world? if we're talking about white-box attribution, it seems like only the information asymmetry scenario is relevant. I guess if we're talking about fingerprinting, the symmetry scenario is relevant? I'd include concrete examples of each}\lz{done.}

\textbf{Informativeness.}
The quantity used by the attribution-aligned royalty benchmark (\Cref{thm:optimal_rho}) is the \emph{informativeness} $\mathcal{I}_j$ of the payment signal --- how well the noisy measurement tracks the catalog's true value to the platform.
In plain terms, $\mathcal{I}_j$ is the OLS slope of true contribution on measured attribution. Under classical measurement error ($\hat a_j=a^*_j+\xi_j$, $\xi_j\perp a^*_j$, $\E[\xi_j]=0$), $\mathcal{I}_j\in[0,1]$ --- taking the value $0$ when the signal carries no information about value and $1$ when it is a perfect linear predictor. Multiplicative scale bias or sign-flipped calibration can yield $\widehat{\mathcal{I}}_j\notin[0,1]$; we treat such cases as calibration flags rather than as violations of the definition.
%\cd{I don't understand why this is a function of payments / revenue $\pi$, rather than just $a^*_{j},\hat a_{j}$. can we factor out $\pi$? would this disrupt the rest of the framework?}\lz{done --- refactored definition to pure attribution form $\Cov(\hat a_j, a^*_j)/\Var(\hat a_j)$, with $\pi\perp(\hat a_j, a^*_j)$ now an explicit model assumption (was implicit in proofs); Theorem 1 and Prop 1 unchanged in statement, proofs updated to invoke the assumption explicitly.}
Formally,
\begin{equation}
\label{eq:informativeness}
\mathcal{I}_j:=\frac{\Cov(\hat a_{j},\;a^*_{j})}{\Var(\hat a_{j})},
\end{equation}
which depends only on the joint distribution of attribution scores $(\hat a_j, a^*_j)$, not on platform revenue.
The definition requires only finite second moments and the regression-slope identification of $a^*_j$ on $\hat a_j$; the $[0,1]$ bound is conditional on the classical-noise regime noted above.
Period-total revenue $\pi$ is the dollar revenue base on which royalties are calculated.
%\cd{two key questions about $\pi$. first, is it revenue per \emph{output}, or revenue over all outputs in some time period? if the former, this is potentially problematic, because our contract formulation below is per-creator, not per-request. also, if $\pi$ is per-request, then fixed licensing fees $f_j$ are a bit weird as usually they would be independent of platform revenue or user requests. second, what is the relationship between $\pi$ and utility $u(x, r)$ defined above? is $\pi$ supposed to be revenue, and utility is profit?}\lz{done --- (Q1) $\pi$ is period-total. (Q2) $\pi = u(x,\cdot)$ summed over the period in the default case, or the revenue subset of $u$ when platform value also includes non-monetary outcomes. Per-period scope makes $f_j$ a one-time licensing fee.}
We consider $\pi$ as $u(x,\cdot)$ summed over the period, taking $u$ as the platform's gross per-request revenue.
The framework also covers cases where the platform values non-monetary outcomes such as user retention or market position, in which case $\pi$ is the revenue component within $u(x,\cdot)$.
We adopt the centering convention $\E[\hat a_j]=\E[a^*_j]=0$ \citep{holmstrom1979moral} and assume $\pi\perp(\hat a_j, a^*_j)$, letting us factor $\pi$ out of payment-variance calculations.
% CHRIS: TODO for post deadline, can we run a normality test on $\xi_j$ in our empirical measurements to examine this assumption?
Under additive Gaussian measurement noise $\hat a_{j}=a^*_{j}+\xi_{j}$ with $\xi_{j}\sim\mathcal{N}(0,\sigma^2_{\xi,j})$ independent of $a^*_{j}$, $\mathcal{I}_j$ reduces to the signal-to-noise-ratio (SNR) form $\mathcal{I}_j=\SNR_j/(1+\SNR_j)$, where $\SNR_j:=\Var(a^*_{j})/\sigma^2_{\xi,j}$ \citep{scharf1991statistical}.
The SNR is a foundational measurement-quality metric, widely used across signal processing and information theory \citep{shannon1948mathematical,cover2006elements}.
We use this Gaussian parametrization for examples and simulations; the structural results (\Cref{thm:optimal_rho} and \Cref{prop:welfare_loss}) hold for any noise distribution with finite second moments via the covariance form in eq.~\eqref{eq:informativeness}.

\textbf{Contracts and preferences.}
The platform--creator interaction is described by three primitives --- the contract space, the creator's utility, and outside option.
The platform offers each creator a linear contract $\theta_j=(f_j,\rho_j)$ --- a fixed fee $f_j\geq 0$ 
%CHRIS: note for the future: maybe we should relax assumption and say $f_j \in \mathbb{R}$? it's concevable to me that platforms could actually *charge* creators to include their data in royalties are sufficiently rewarding
and a royalty rate $\rho_j\in[0,1]$ --- with total payment
\begin{equation}
\label{eq:payment}
w_{j}(\theta_j)=x_{j}\,[f_j+\rho_j\,v_{j}],\qquad v_{j}=\hat a_{j}\,\pi,
\end{equation}
%\cd{I wonder if the following would be more intuitive? I guess the idea is that $x_j=0\implies v_j=0$, so it's equivalent? $w_j(\theta_j)=x_j[f_j+\rho_j v_j]$, $v_j=\hat a_j\pi$.}\lz{done --- adopted the bracketed form above. Equivalent under the implicit assumption $\hat a_j=0$ when $x_j=0$, but more robust (no implicit assumption needed) and clearer; all downstream results unaffected (Thm~1, Prop~1, IC, competition, bargaining, dynamics all condition on the active case $x_j=1$).}
% Note: keep CD's alternative as inline math above; do NOT wrap a \begin{equation}...\end{equation} block inside \cd{...} --- that combination triggers a pdftex infinite-loop on TL2025.
nesting three canonical industry forms (pure fixed fee with $\rho_j=0$, pure royalty with $f_j=0$, and hybrid with $f_j>0,\rho_j>0$).
Creator $j$ evaluates contracts using mean--variance utility --- the creator values expected payment minus a multiple of payment variance \citep{holmstrom1979moral},
\begin{equation}
\label{eq:creator_utility}
U_j(\theta_j)=f_j+\rho_j\,\E[v_{j}]-\frac{\alpha_j}{2}\rho_j^2\,\Var(v_{j}).
\end{equation}
This preference primitive matches a music-industry fact --- liquidity-constrained creators value payment-stream stability, not just expected payment \citep{markowitz1952portfolio}.
Creator $j$ participates if $U_j(\theta_j)\geq\bar U_j$, where $\bar U_j$ is the utility the creator would obtain by not licensing to the platform (e.g., distributing independently).
As more catalogs are licensed and model quality improves, AI-generated music substitutes for the creator's standalone work.
The outside option therefore shrinks, ${\bar U_j=\bar U_j^0-\delta_j(x_{-j},q)}$, where $\bar U_j^0$ is the creator's standalone revenue and $\delta_j$ captures the displacement loss.

\subsection{Equilibrium contract and welfare cost}
\label{sec:optimal}

Now we separate the static profit-max contract for a single creator from an attribution-aligned benchmark for royalty design.
%\cd{why platform-optimal? do we want to mention we are assuming a regulatory climate that favors platforms? would the result differ if we examine creator-optimal?}\lz{done --- replaced 'platform-optimal' with 'equilibrium' to reflect regime-neutrality, matching the neutral-footing framing in the section opener; the royalty rate is invariant across platform-optimal, creator-optimal, and Nash-bargaining setups (only the fixed fee shifts; see \Cref{app:bargaining}).}
Both expressions reduce to closed forms in informativeness $\mathcal{I}_j$ and risk aversion $\alpha_j$, which lets us read off when the benchmark recommends royalty-dominant versus fixed-fee-dominant payment and how much alignment-relevant surplus is exposed to imperfect attribution.

We first state the platform's static licensing problem for a single creator.
The platform maximizes
\begin{equation}
\label{eq:platform_obj}
\max_{\theta_j=(f_j,\rho_j)}\;\E[\pi\,a^*_{j}]-f_j-\rho_j\,\E[v_{j}]\quad\text{s.t.}\quad U_j(\theta_j)\geq\bar U_j.
\end{equation}
The tension is the \emph{informational wedge} --- the platform pays based on the noisy attributed revenue $v_{j}=\hat a_{j}\pi$ but receives value from $a^*_{j}\pi$. Under~\eqref{eq:platform_obj} alone with binding IR, the royalty-induced risk premium is shifted back onto the platform through the fixed fee, leaving no first-order profit benefit to a positive $\rho_j$ (\Cref{lem:ir_only}). We therefore characterize an \emph{attribution-aligned royalty benchmark}: the royalty rate that balances payment--contribution co-movement against the risk premium induced by noisy royalty exposure. This benchmark is relevant when the platform, an auditor, or a policy environment values payment--contribution alignment in addition to expected net surplus.
At the optimum the participation constraint binds, so $f_j=\bar U_j-\rho_j\E[v_{j}]+(\alpha_j/2)\rho_j^2\Var(v_{j})$ and the problem reduces to a one-dimensional optimization in $\rho_j$.

\begin{lemma}[Static profit-max baseline]
\label{lem:ir_only}
Under~\eqref{eq:platform_obj} with binding IR, substituting the binding-IR fee $f_j=\bar U_j-\rho_j\E[v_j]+(\alpha_j/2)\rho_j^2\Var(v_j)$ yields
\[
\Pi_j(\rho_j)=\E[\pi\,a^*_j]-\bar U_j-\tfrac{\alpha_j}{2}\rho_j^2\Var(v_j),
\]
strictly decreasing in $\rho_j^2$ whenever $\Var(v_j)>0$. The unique maximizer is the pure fixed-fee contract $(\rho_j^*,f_j^*)=(0,\bar U_j)$. A non-degenerate royalty therefore requires an alignment criterion beyond expected net surplus; \Cref{thm:optimal_rho} characterizes the benchmark rate under that criterion.
\end{lemma}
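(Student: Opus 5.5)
The proof is a direct substitution argument. The platform's objective in \eqref{eq:platform_obj} is decreasing in $f_j$ and the creator's utility \eqref{eq:creator_utility} is increasing in $f_j$. So at any optimum the participation constraint must bind: if $U_j(\theta_j)>\bar U_j$, lowering $f_j$ slightly keeps the constraint satisfied and strictly raises profit. The only caveat is the nonnegativity constraint $f_j\ge 0$. I would handle it by noting that the candidate optimum $f_j=\bar U_j$ is feasible whenever $\bar U_j\ge 0$. If $\bar U_j<0$ were allowed, the constraint $f_j\ge0$ could bind instead, so I would state the standing assumption $\bar U_j\ge 0$ (standalone revenue net of displacement), or else treat $f_j$ as unconstrained, as the statement implicitly does.

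With binding IR, $f_j=\bar U_j-\rho_j\E[v_j]+\tfrac{\alpha_j}{2}\rho_j^2\Var(v_j)$. Substituting into $\E[\pi a^*_j]-f_j-\rho_j\E[v_j]$, the terms $\pm\rho_j\E[v_j]$ cancel, which leaves exactly
\[
\Pi_j(\rho_j)=\E[\pi a^*_j]-\bar U_j-\tfrac{\alpha_j}{2}\rho_j^2\Var(v_j).
\]
The first two terms do not depend on $\rho_j$: $\E[\pi a^*_j]$ is fixed by the joint law, and $\bar U_j$ depends on $x_{-j}$ and $q$, not on the contract. Since $\alpha_j>0$, the remaining term $-\tfrac{\alpha_j}{2}\rho_j^2\Var(v_j)$ is strictly decreasing in $\rho_j^2$ when $\Var(v_j)>0$. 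On $[0,1]$ it is therefore uniquely maximized at $\rho_j^*=0$, and then $f_j^*=\bar U_j$.

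It is also worth noting when $\Var(v_j)>0$ holds. Under $\pi\perp\hat a_j$ and $\E[\hat a_j]=0$, we get $\Var(v_j)=\E[\pi^2]\Var(\hat a_j)$, which is positive whenever $\pi\not\equiv 0$ and $\hat a_j$ is nondegenerate. Independence also gives $\E[v_j]=0$, although the cancellation does not need this.

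There is no real obstacle. The only delicate points are the justification that IR binds (which ties to $f_j\ge0$ feasibility) and making sure $\bar U_j$ is treated as independent of $\theta_j$. The closing remark, that a nondegenerate royalty requires an extra alignment criterion, is interpretive and follows immediately from the uniqueness result.
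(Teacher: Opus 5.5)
Your proposal is correct and follows the paper's argument. The paper gives no separate appendix proof: the lemma is justified inline by substituting the binding-IR fee, cancelling the $\pm\rho_j\E[v_j]$ terms, and noting that $-\tfrac{\alpha_j}{2}\rho_j^2\Var(v_j)$ is uniquely maximized at $\rho_j=0$ on $[0,1]$. Your treatment of why IR binds and of the feasibility of $f_j^*=\bar U_j$ under $f_j\ge 0$ adds care the paper omits, since it treats $f_j$ as effectively unconstrained; reading IR directly as the inequality $f_j+\rho_j\E[v_j]\ge\bar U_j+\tfrac{\alpha_j}{2}\rho_j^2\Var(v_j)$ bounds profit above by $\E[\pi a^*_j]-\bar U_j-\tfrac{\alpha_j}{2}\rho_j^2\Var(v_j)$ and avoids the $f_j=0$ corner case in your binding argument.
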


\begin{theorem}[Attribution-aligned royalty benchmark]
\label{thm:optimal_rho}
The \emph{risk-adjusted alignment criterion}
\begin{equation}
\label{eq:alignment_criterion}
\max_{\rho_j\in[0,1]}\;\rho_j\,\Cov(v_j,a^*_j\pi)\;-\;\tfrac{\alpha_j}{2}\rho_j^2\Var(v_j),
\end{equation}
which treats payment--contribution covariance as a design target and penalizes the risk premium induced by noisy royalty exposure, admits the unique maximizer
\begin{equation}
\label{eq:optimal_rho}
\rho_j^*=\operatorname{clip}\!\Big(\frac{\mathcal{I}_j}{\alpha_j},\;[0,1]\Big).
\end{equation}
The interior solution reduces to $\rho_j^*=\mathcal{I}_j/\alpha_j$; under additive Gaussian noise it further reduces to $\rho_j^*=\SNR_j/[\alpha_j(1+\SNR_j)]$. The lower boundary binds when $\mathcal{I}_j/\alpha_j<0$, and the upper boundary binds when $\mathcal{I}_j/\alpha_j>1$. Empirical sign flips (negative $\widehat K_j$, equivalently $\widehat{\mathcal{I}}_j<0$; observed for 6/22 creators under EKFAC and 3/22 under D-TRAK; see Table~\ref{tab:headline}) and $\widehat{\mathcal{I}}_j>1$ (over-amplified signal) are reported as calibration flags. The criterion~\eqref{eq:alignment_criterion} should be read as an explicit benchmark, not as the maximizer of~\eqref{eq:platform_obj}; it is motivated by Holmstrom's informativeness principle, while \Cref{app:quality} discusses a complementary endogenous-quality channel. Proof in \Cref{app:proofs}.
\end{theorem}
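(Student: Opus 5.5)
The plan is to use the independence assumption $\pi\perp(\hat a_j,a^*_j)$ to rewrite both terms of~\eqref{eq:alignment_criterion}. This reduces the criterion to a scalar concave quadratic in $\rho_j$, which we then maximize on $[0,1]$.

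\textbf{Step 1: factor out $\pi$.} Since $v_j=\hat a_j\pi$ and $\E[\hat a_j]=\E[a^*_j]=0$, independence gives $\E[v_j]=\E[a^*_j\pi]=0$. Hence
\[
\Cov(v_j,a^*_j\pi)=\E[\hat a_j a^*_j\pi^2]=\E[\pi^2]\Cov(\hat a_j,a^*_j),
\qquad
\Var(v_j)=\E[\hat a_j^2\pi^2]=\E[\pi^2]\Var(\hat a_j).
\]
Write $M:=\E[\pi^2]$ and assume $M\Var(\hat a_j)>0$. This is the non-degeneracy condition already used in \Cref{lem:ir_only}; without it $\mathcal{I}_j$ is undefined. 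The objective then becomes
\[
g(\rho_j)=M\big[\rho_j\Cov(\hat a_j,a^*_j)-\tfrac{\alpha_j}{2}\rho_j^2\Var(\hat a_j)\big].
\]
Dividing by the positive constant $M\Var(\hat a_j)$ does not change the maximizer. The criterion is therefore equivalent to maximizing $h(\rho)=\rho\,\mathcal{I}_j-\tfrac{\alpha_j}{2}\rho^2$, using definition~\eqref{eq:informativeness}.

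\textbf{Step 2: constrained maximization.} Because $\alpha_j>0$, $h$ is strictly concave. Its unconstrained maximizer is $\mathcal{I}_j/\alpha_j$. A strictly concave function on the compact interval $[0,1]$ has a unique maximizer, and it equals the projection of the unconstrained maximizer onto the interval. The reason is that $h$ is increasing to the left of its vertex and decreasing to the right. This yields $\rho^*_j=\operatorname{clip}(\mathcal{I}_j/\alpha_j,[0,1])$. The lower bound binds exactly when $\mathcal{I}_j/\alpha_j<0$, in which case $h'(0)=\mathcal{I}_j<0$. The upper bound binds exactly when $\mathcal{I}_j/\alpha_j>1$, in which case $h'(1)=\mathcal{I}_j-\alpha_j>0$.

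\textbf{Step 3: Gaussian specialization.} Under $\hat a_j=a^*_j+\xi_j$ with $\xi_j$ independent of $a^*_j$, we have $\Cov(\hat a_j,a^*_j)=\Var(a^*_j)$ and $\Var(\hat a_j)=\Var(a^*_j)+\sigma^2_{\xi,j}$. Hence $\mathcal{I}_j=\SNR_j/(1+\SNR_j)$, and the interior formula follows. Only second moments are used here, so Gaussianity is inessential; this is consistent with the remark after~\eqref{eq:informativeness}.

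\textbf{Main obstacle.} The only delicate point is Step 1: the factorization. It relies on both the centering convention and the full independence of $\pi$ from the pair $(\hat a_j,a^*_j)$. Without centering, the means do not vanish. For instance, $\Var(v_j)$ would pick up the extra term $\E[\hat a_j]^2\Var(\pi)$, and the ratio would no longer equal $\mathcal{I}_j$. So the proof should state clearly where each assumption enters. The calibration-flag cases $\widehat{\mathcal{I}}_j<0$ and $\widehat{\mathcal{I}}_j>1$ are simply the two boundary cases of Step 2 and need no separate argument.
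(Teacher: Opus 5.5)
Your proposal is correct and follows essentially the same route as the paper's proof. Both use $\pi\perp(\hat a_j,a^*_j)$ plus centering to get $\Cov(v_j,a^*_j\pi)=\E[\pi^2]\Cov(\hat a_j,a^*_j)$ and $\Var(v_j)=\E[\pi^2]\Var(\hat a_j)$, maximize the resulting strictly concave quadratic, project $\mathcal{I}_j/\alpha_j$ onto $[0,1]$, and obtain the SNR form by substitution.

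One small correction to your closing remark: the upper boundary binds iff $\mathcal{I}_j>\alpha_j$, which coincides with the calibration flag $\widehat{\mathcal{I}}_j>1$ only when $\alpha_j=1$. At $\alpha_j=2$, for instance, an over-amplified $\widehat{\mathcal{I}}_j\in(1,2)$ still yields an interior rate. So only the lower flag is literally a boundary case, and the flags are a reporting convention rather than something the proof must establish.
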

The benchmark rate balances alignment against risk transfer.
$\rho_j^*\to 1/\alpha_j$ as $\mathcal{I}_j\to 1$ (royalty-dominant when attribution is precise), and $\rho_j^*\to 0$ as $\mathcal{I}_j\to 0$ (fixed-fee-dominant when attribution is too noisy).
If the benchmark is implemented through the linear contract, the fixed fee is pinned by the binding participation constraint, $f_j^*=\bar U_j-\rho_j^*\E[v_{j}]+(\alpha_j/2)(\rho_j^*)^2\Var(v_{j})$.
Attribution precision therefore determines the benchmark split between fixed fee and royalty, while the outside option determines the transfer level needed for participation.
Because the fixed fee implements the transfer level, surplus-splitting variants that share the same alignment criterion leave the benchmark royalty $\rho_j^*$ unchanged; asymmetric-criterion extensions are deferred to a future revision.

\Cref{thm:optimal_rho} yields royalty-dominant benchmark contracts for high-informativeness creators, hybrid benchmark contracts at mid informativeness, and fixed-fee-dominant benchmark contracts at low informativeness; \Cref{sec:empirical} calibrates these regimes on measured creator pools.
The wedge between the noisy signal and the true value also creates an alignment loss: noisy attribution forces any royalty component to load risk onto a signal that imperfectly tracks the creator's true contribution.

\begin{proposition}[Loss under the attribution-aligned benchmark]
\label{prop:welfare_loss}
Under the attribution-aligned benchmark, the loss relative to the perfect-attribution benchmark ($\mathcal{I}_j=1$) is
\begin{equation}
\label{eq:welfare_loss}
\mathcal{L}_j=\frac{\Var(v_{j})}{2\alpha_j}\,(1-\mathcal{I}_j)^2,
\end{equation}
and the marginal welfare gain from improving informativeness is
\begin{equation}
\label{eq:welfare_loss_marginal_general}
\left|\frac{\partial\mathcal{L}_j}{\partial\mathcal{I}_j}\right|=\frac{\Var(v_{j})}{\alpha_j}\,(1-\mathcal{I}_j),
\end{equation}
which is largest for the noisiest creators (those with $\mathcal{I}_j$ farthest from $1$). Gaussian-shortcut forms with $\SNR_j$ appear in \Cref{app:assumptions}. Proof in \Cref{app:proofs}.
\end{proposition}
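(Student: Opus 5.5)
We evaluate the alignment criterion~\eqref{eq:alignment_criterion} at its optimizer and compare it with the same objective under perfect attribution. Write $V_j:=\Var(v_j)$ and $G(\rho)=\rho\,\Cov(v_j,a^*_j\pi)-\tfrac{\alpha_j}{2}\rho^2V_j$. The first step rewrites the covariance term through $\mathcal{I}_j$. Under the centering convention $\E[\hat a_j]=\E[a^*_j]=0$ and the assumption $\pi\perp(\hat a_j,a^*_j)$, factoring $\pi$ out gives $\Cov(v_j,a^*_j\pi)=\E[\pi^2]\Cov(\hat a_j,a^*_j)$ and $V_j=\E[\pi^2]\Var(\hat a_j)$. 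By~\eqref{eq:informativeness} this yields $\Cov(v_j,a^*_j\pi)=\mathcal{I}_jV_j$, so $G(\rho)=V_j\big(\rho\,\mathcal{I}_j-\tfrac{\alpha_j}{2}\rho^2\big)$. This identity is the same one behind \Cref{thm:optimal_rho}, and I would cite it from there.

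Next comes the comparison with the benchmark. Perfect attribution means $\mathcal{I}_j=1$, with optimizer $1/\alpha_j$. The natural reading of ``loss relative to the perfect-attribution benchmark'' is the shortfall, measured under the perfect-attribution objective $G_1(\rho)=V_j(\rho-\tfrac{\alpha_j}{2}\rho^2)$, from using the noisy benchmark rate $\rho_j^*=\mathcal{I}_j/\alpha_j$ instead of $1/\alpha_j$. Because $G_1$ is an exact quadratic with curvature $\alpha_jV_j$, we have $G_1(1/\alpha_j)-G_1(\rho)=\tfrac{\alpha_jV_j}{2}(\rho-1/\alpha_j)^2$. Substituting $\rho=\mathcal{I}_j/\alpha_j$ gives $\tfrac{V_j}{2\alpha_j}(1-\mathcal{I}_j)^2$, which is exactly~\eqref{eq:welfare_loss}.

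For the marginal gain~\eqref{eq:welfare_loss_marginal_general}, I would hold $V_j$ fixed and differentiate: $\partial\mathcal{L}_j/\partial\mathcal{I}_j=-\tfrac{V_j}{\alpha_j}(1-\mathcal{I}_j)$. Taking the absolute value gives the stated expression, which is decreasing in $\mathcal{I}_j$. It is therefore largest for the creators with $\mathcal{I}_j$ farthest from $1$.

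The main obstacle is justifying the interpretation of $\mathcal{L}_j$. A naive comparison of optimal values would give $G(\rho^*)=V_j\mathcal{I}_j^2/(2\alpha_j)$ against $V_j/(2\alpha_j)$, a difference of $\tfrac{V_j}{2\alpha_j}(1-\mathcal{I}_j^2)$, not the squared form. So I must state explicitly that the loss is measured as the misallocation of the royalty, evaluated under the true-contribution (perfect-alignment) objective, and that $V_j$ is held fixed as informativeness varies. Any Gaussian restatement should say so as well, since there $V_j$ moves with $\sigma^2_{\xi,j}$. Two restrictions also need stating. The formula assumes the interior case $0\le\mathcal{I}_j$ and $1/\alpha_j\le 1$, where neither clip in \Cref{thm:optimal_rho} binds. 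At the clipped boundaries the same quadratic identity applies with the clipped rates, and I would note that case separately rather than claim~\eqref{eq:welfare_loss} there.
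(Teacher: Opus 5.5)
Your proposal is correct and follows the paper's proof. Both use $\Cov(v_j,a^*_j\pi)=\mathcal{I}_j\Var(v_j)$ to reduce the criterion to a quadratic with curvature $\alpha_j\Var(v_j)$, and then define $\mathcal{L}_j$ as the quadratic gap between the rates $1/\alpha_j$ and $\mathcal{I}_j/\alpha_j$; the paper completes the square around $\rho_j^*$ while you expand the $\mathcal{I}_j=1$ objective around $1/\alpha_j$, but the curvatures coincide and give the same number. Your caveats make explicit what the paper's proof leaves implicit: the loss is a misallocation gap, not the difference of optimal values $\tfrac{\Var(v_j)}{2\alpha_j}(1-\mathcal{I}_j^2)$; $\Var(v_j)$ is held fixed when differentiating; and neither clip binds.
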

This benchmark loss is concentrated among low-informativeness creators (the convex $(1-\mathcal{I}_j)^2$ scaling), and the water-filling allocation in \Cref{app:investment} routes the bulk of a marginal attribution-improvement budget to them.
When the benchmark is implemented through the linear contract, the creator is compensated for royalty risk through the fixed fee, so the loss appears as surplus that the platform must spend on risk compensation rather than as value either side can recover from a noisy signal.
Because $\mathcal{L}_j$ is computable from measurable quantities, it provides an attribution-method comparison criterion grounded in cardinal calibration rather than rank correlation alone (\Cref{sec:empirical}).

\textbf{Relation to classical contract theory.}
The signal-informativeness logic behind \Cref{thm:optimal_rho} is classical \citep{holmstrom1979moral, mussa1978monopoly, maskin1984monopoly}; what is new is that the noisy signal is itself an ML artifact the platform can invest in and creators can post-process.
Linearity is justified under mean-variance utility with Gaussian noise \citep{holmstrom1987aggregation}; alternatives in \Cref{app:discussion}.

% =====================================================================
% NEW DRAFT 3.3 --- stakeholder-organized rewrite. Theorem/proposition
% labels are reused from the existing 3.3 below (will cause "multiply
% defined label" warnings until the old subsection is removed manually).
% Apply manual inspection + cleanup after comparing the two versions.
% =====================================================================
\subsection{Market and policy implications}
\label{sec:implications}

%\cd{this section is super interesting, but also a bit impenetrable. how can we make this easier to process for particular readers? to this end, both here and in the intro, I wonder if we can reorganize these findings in a stakeholder-centric manner. e.g., For platforms, our framework suggests \ldots (e.g., attribution moat stuff). For AI researchers, \ldots (e.g., evaluating correlation not just rank correlation). For creators, our empirical results suggest \ldots (e.g., breaking from tradition and advocating for per-creator contracts). For regulatory agencies or pro organizations, our framework suggests \ldots}
%\lz{done --- reorganized into four stakeholder blocks (Platforms, Creators, AI researchers, Regulation), with three supporting results (Theorem~2 on Bertrand equilibrium, Theorem~3 on effort threshold, Proposition~5 on belief distortion) and the multi-platform setup paragraph moved to the appendix. Three core results (\Cref{prop:moat}, \Cref{prop:selective_exit}, \Cref{prop:tier3}) stay inline. Em-dashes replaced throughout the prose; forward-refs to main text removed.}

In real markets, multiple platforms compete for the same catalogs, the same pricing scheme often applies to many creators, creators choose effort levels, and platforms and creators may not see the same attribution quality. This subsection translates the framework into stakeholder-specific insights, covering what platforms should invest in, what creators should expect, what AI researchers should target, and what regulators should build into institutional infrastructure.

\textbf{Platforms.}
\label{sec:moat}
Attribution quality is a feature competitors cannot match by lowering fees. The platform with the most precise attribution holds a sustained advantage; a trailing platform earns nothing on its attribution investment until it overtakes the leader, even though the market as a whole gains throughout. Investing in better attribution is therefore the main way to build a lasting lead; lowering fees only changes who captures the value that better attribution already created.

\begin{proposition}[Attribution moat]
\label{prop:moat}
Consider $K\geq 2$ platforms competing for catalogs under exclusive licensing, in the equilibrium where each platform's royalty matches its own informativeness (formalized as \Cref{thm:bertrand} in \Cref{app:competition}). Assume Gaussian noise and homogeneous demand ($\mu_j^{(k)}=\mu_j$). Let $\Delta\mathcal{I}_j=\mathcal{I}^{(1)}-\mathcal{I}^{(2)}>0$ denote the gap between leader and trailer, and $\sigma_{a^*}^2:=\Var(a^*_{j})$. Then (i) for $\epsilon\in[0,\Delta\mathcal{I}_j]$, Platform~2's private return to its own attribution improvement is exactly zero (\emph{dead zone}); (ii) for $\epsilon>\Delta\mathcal{I}_j$, Platform~2 captures the creator and earns $\Pi_j^{(2)}(\epsilon)=(\epsilon-\Delta\mathcal{I}_j)\sigma_{a^*}^2\E[\pi^2]/(2\alpha_j)$; (iii) total welfare is weakly increasing in $\epsilon$ throughout, so the social return is strictly positive while the private return is zero up to the gap, revealing \emph{systematic underinvestment} in attribution by trailing platforms. Proof and welfare bounds in \Cref{app:competition}.
\end{proposition}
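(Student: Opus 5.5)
The plan is to reduce the competition to a Bertrand-style auction for creator $j$. Each platform $k$ has a per-creator surplus that is strictly increasing in its own informativeness $\mathcal{I}^{(k)}$. The equilibrium of \Cref{thm:bertrand} then awards the creator to the highest-surplus platform at a price that leaves it exactly the gap between its surplus and the runner-up's. Concretely, fix creator $j$ and let each platform use the benchmark royalty $\rho^{(k)}=\mathcal{I}^{(k)}/\alpha_j$ from \Cref{thm:optimal_rho}. Under homogeneous demand, $\pi\perp(\hat a_j,a^*_j)$, and Gaussian noise, compute the total alignment surplus that platform $k$ can offer:
\[
S^{(k)}=\rho^{(k)}\Cov(v_j^{(k)},a^*_j\pi)-\tfrac{\alpha_j}{2}(\rho^{(k)})^2\Var(v_j^{(k)}).
\]
Use $\Cov(v,a^*\pi)=\E[\pi^2]\Cov(\hat a,a^*)=\E[\pi^2]\sigma_{a^*}^2$ under additive noise, and $\Var(v)=\E[\pi^2]\Var(\hat a)=\E[\pi^2]\sigma_{a^*}^2/\mathcal{I}^{(k)}$, using centering. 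Substituting $\rho^{(k)}$ should give
\[
S^{(k)}=\mathcal{I}^{(k)}\sigma_{a^*}^2\E[\pi^2]/(2\alpha_j),
\]
which is linear in $\mathcal{I}^{(k)}$. This linearity is exactly what produces the clean formula in (ii). I will also check that the unclipped interior solution is the relevant one, assuming $\mathcal{I}^{(k)}\le\alpha_j$, or else note that clipping preserves monotonicity.

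Next, under exclusive licensing with platforms competing on the fixed fee $f$, the creator chooses the offer with the highest $U_j$. The Bertrand logic is standard: the runner-up bids up to its full surplus $S^{(2)}$, and the leader matches it and wins. The leader's profit is therefore $S^{(1)}-S^{(2)}$ (relative to $\bar U_j$), and the loser earns zero. For part (i), let Platform~2 improve to $\mathcal{I}^{(2)}+\epsilon$ with $\epsilon\le\Delta\mathcal{I}_j$. Platform~2 still weakly trails; at equality, ties resolve so that its profit is $S^{(2)}-S^{(1)}=0$. So its private return is identically zero, and only the leader's markup shrinks. For part (ii), when $\epsilon>\Delta\mathcal{I}_j$ the roles swap, and Platform~2 earns
\[
S^{(2)}(\epsilon)-S^{(1)}=(\mathcal{I}^{(2)}+\epsilon-\mathcal{I}^{(1)})\sigma_{a^*}^2\E[\pi^2]/(2\alpha_j),
\]
which is the claimed expression. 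With $K>2$, the relevant benchmark is the best rival, and I would note that the same argument applies with $\mathcal{I}^{(1)}$ replaced by the maximum over the other platforms.

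For part (iii), total welfare is the surplus of the winning platform, $\max_k S^{(k)}$, plus $\bar U_j$. This is weakly increasing in $\epsilon$. It is flat for $\epsilon\le\Delta\mathcal{I}_j$ if welfare is measured only by the allocation, but it becomes strictly increasing once the winner's surplus rises. The paper's claim of a strictly positive social return inside the dead zone therefore needs an extra channel. The most natural candidate is that the creator's utility rises, because the leader must pay more as the runner-up's bid improves. Transfers net out, though, so I expect the appendix uses a welfare notion that counts the creator's rent or the outside-option displacement. I would present this as follows: the trailer's improvement raises creator surplus one-for-one and, once it overtakes, raises total surplus, while private return stays zero up to the gap.

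The main obstacle is precisely this step of pinning down which welfare measure makes (iii) ``strictly positive'' in the dead zone. Under pure transfer accounting, total surplus is constant there. My first attempt would use the welfare bounds in \Cref{app:competition}, treating the creator-side surplus $S^{(2)}(\epsilon)$ as the social gain, since it limits the leader's rent extraction. Failing that, I would state the strict gain only for $\epsilon>\Delta\mathcal{I}_j$ and weak monotonicity throughout.
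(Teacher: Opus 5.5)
Your plan for (i) and (ii) is the paper's proof: V-cancellation makes $\Sigma_j^{*(k)}-\mu_j=\mathcal{I}_j^{(k)}\sigma_{a^*}^2\E[\pi^2]/(2\alpha_j)$ linear in $\mathcal{I}_j^{(k)}$, the trailer's improvement inside the dead zone only raises the creator's effective outside option $\hat U_j$, and the post-flip rent is the surplus gap of \Cref{cor:vd_rent}; note that the exact formula in (ii) needs the interior case $\mathcal{I}^{(2)}+\epsilon\le\alpha_j$, since monotonicity under clipping is not enough. For (iii) the paper uses exactly your fallback ($\mathcal{W}=U_j^*+\Pi_j=\Sigma_j^{*(1)}$ is constant on the dead zone, and $\mathcal{W}=\Sigma_j^{(2)}(\epsilon)$ is strictly increasing after the flip), so drop your first attempt, because the creator's rising $\hat U_j$ is a transfer offset one-for-one by the leader's rent loss; indeed $\mathcal{W}(\epsilon)-\mathcal{W}(0)=\max(0,\epsilon-\Delta\mathcal{I}_j)\,\sigma_{a^*}^2\E[\pi^2]/(2\alpha_j)$ coincides with Platform~2's private return, so the paper's proof does not establish a strictly positive social return inside the dead zone either.
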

The reason why this effect is especially severe in music because catalogs are typically licensed wholesale. A trailing platform 
% CHRIS: softened the language here since this is more of an assumption than a fact
%cannot 
may be unable to 
bid for a fraction of a catalog to capture partial value; it earns zero until its attribution precision overtakes the leader's.

\textbf{Creators.}
Two compensation patterns systematically reduce creator welfare.
%\cd{wait, what are the two? having trouble parsing the second one from the following language...}\lz{done --- the two patterns are now explicitly marked First/Second; pattern 1 (uniform contracts) and pattern 2 (excessive royalty rates) are easy to identify rather than blended into the prose.}
First, the industry default of uniform compensation policies (Spotify's per-stream pro-rata, flat-fee blanket licenses, regulator-set statutory rates) disproportionately hurts creators with distinctive styles.
These are exactly the ones most costly for platforms to lose.
A distinctive creator's contribution cannot be reproduced by blending other creators' work, so their exit removes value the platform cannot recover by reweighting what remains.
Losses compound when several distinctive creators exit, costing more than the sum of their individual exits would suggest.

Second, even within personalized contracts, a sufficiently high royalty rate can reduce creator effort, because the volatility risk eventually outweighs the higher expected reward (formalized in \Cref{app:quality}).
The contracts of industry norms (``one rate for everyone'') and what looks generous on paper (``high per-stream pay'') can be the ones that reduce welfare the most.

\begin{proposition}[Selective exit under uniform contracts]
\label{prop:selective_exit}
Suppose the platform offers a uniform contract $(\bar f,\bar\rho)$ to every creator and let $\mathcal{S}^{\mathrm{exit}}=\{j:U_j(\bar f,\bar\rho)<\bar U_j\}$ be the exit set. If $\bar\rho$ is set to the population average of the creator-specific optima, then (i) creators with the highest outside options $\bar U_j$ exit first, and because distinctive creators tend to have both high $\bar U_j$ and high complementary value in $u$, $\mathcal{S}^{\mathrm{exit}}$ is disproportionately composed of the creators whose loss is most costly to the platform; (ii) when $u$ is non-additive, the welfare loss from the exits is superlinear in $|\mathcal{S}^{\mathrm{exit}}|$,
\begin{equation}
\label{eq:superlinear}
u(x,r)-u(x\setminus\mathcal{S}^{\mathrm{exit}},r)>\sum_{j\in\mathcal{S}^{\mathrm{exit}}}\big[u(x,r)-u(x\setminus\{j\},r)\big].
\end{equation}
Proof, the total-welfare-gap decomposition, and comparative statics on $\bar\rho$ in \Cref{app:uniform}.
\end{proposition}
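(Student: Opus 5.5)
Part (i) follows from comparing each creator's utility under the uniform contract with their participation threshold. Substituting $(\bar f,\bar\rho)$ into \eqref{eq:creator_utility} and using the centering convention $\E[\hat a_j]=0$ (so $\E[v_j]=0$) gives
\[
U_j(\bar f,\bar\rho)=\bar f-\tfrac{\alpha_j}{2}\bar\rho^{2}\Var(v_j).
\]
Hence $j\in\mathcal{S}^{\mathrm{exit}}$ if and only if
\[
\bar U_j+\tfrac{\alpha_j}{2}\bar\rho^{2}\Var(v_j)>\bar f .
\]
The left-hand side is strictly increasing in $\bar U_j$. Ordering creators by $\bar U_j$, with the risk term held fixed or with risk terms comparable across creators, the exit set is therefore an upper set: the highest-$\bar U_j$ creators leave first.

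The mean-shift through $\bar\rho=\frac1N\sum_k\rho_k^*$, with $\rho_k^*=\operatorname{clip}(\mathcal{I}_k/\alpha_k,[0,1])$ from \Cref{thm:optimal_rho}, only adds a risk premium. A creator whose own $\rho_j^*$ lies below the average bears more variance than their benchmark. This enlarges the exit set but does not reverse the ordering.

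The second half of (i) is not a deductive consequence of the contract algebra. It rests on the modeling hypothesis in the statement: distinctive creators have both high $\bar U_j$ and high marginal value $u(x,r)-u(x\setminus\{j\},r)$. I would make this explicit by assuming the two are positively associated, for example comonotone in a distinctiveness index. Then $\mathcal{S}^{\mathrm{exit}}$, being an upper set in $\bar U_j$, is an upper set in marginal value as well, so it over-represents high-value creators relative to a random subset of the same size.

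Part (ii) is where the real content, and the main obstacle, sits. Stated as written, inequality \eqref{eq:superlinear} holds only if ``non-additive'' is read as complementarity among the exiting catalogs. Substitutes would reverse the inequality, as the paper's own remark about similar styles suggests. I would therefore interpret the hypothesis as strict supermodularity of $u(\cdot,r)$ on the lattice $\{0,1\}^N$, restricted to the exit set. The proof is then a telescoping argument. Enumerate $\mathcal{S}^{\mathrm{exit}}=\{j_1,\dots,j_m\}$ and set $x^{(0)}=x$ and $x^{(i)}=x\setminus\{j_1,\dots,j_i\}$. Then
\[
u(x,r)-u(x\setminus\mathcal{S}^{\mathrm{exit}},r)=\sum_{i=1}^m\big[u(x^{(i-1)},r)-u(x^{(i)},r)\big].
\]
Under supermodularity, removing $j_i$ from the smaller set $x^{(i-1)}$ costs at least as much as removing it from the full set $x$. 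Each increment is therefore at least $u(x,r)-u(x\setminus\{j_i\},r)$, with strict inequality for $i\ge2$ under strict complementarity. This gives \eqref{eq:superlinear} whenever $m\ge2$.

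For ``superlinear in $|\mathcal{S}^{\mathrm{exit}}|$'' I would add a quantitative step. Assume a uniform lower bound $\kappa>0$ on pairwise interaction terms. The excess over the additive sum is then at least $\kappa\binom{m}{2}$, which grows quadratically in $m$. Finally, I would state in a remark that the reverse inequality holds for substitutes, so the complementarity hypothesis is necessary.
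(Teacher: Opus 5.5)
Your part (i) is the same argument as the paper's Claim~1: you rearrange the participation constraint, zeroing out $\E[v_j]$ by centering where the paper leaves it in. Your caveat that the ordering in $\bar U_j$ requires comparable risk terms $\tfrac{\alpha_j}{2}\bar\rho^2\Var(v_j)$ is a fair sharpening of what the paper leaves implicit.

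Part (ii), however, has a sign error at its key step. Supermodularity (complements) means marginal values are \emph{nondecreasing} in the base set: $u(S\cup\{j\},r)-u(S,r)\le u(T\cup\{j\},r)-u(T,r)$ for $S\subseteq T$. Since $x^{(i)}\subseteq x\setminus\{j_i\}$, this gives $u(x^{(i-1)},r)-u(x^{(i)},r)\le u(x,r)-u(x\setminus\{j_i\},r)$. Your telescoping sum therefore delivers the \emph{reverse} of \eqref{eq:superlinear}. The two-creator case makes this explicit: for $A=\{j,l\}$ and $B=x\setminus A$,
\[
\big[u(x,r)-u(B,r)\big]-\sum_{k\in A}\big[u(x,r)-u(x\setminus\{k\},r)\big]
=u(B\cup\{j\},r)+u(B\cup\{l\},r)-u(x,r)-u(B,r),
\]
which is positive exactly under strict submodularity (substitutes) and nonpositive under complements. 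With complements, each single exit already destroys the joint value, so the sum of individual losses double-counts it. For example, $u(B,r)=0$, $u(B\cup\{j\},r)=u(B\cup\{l\},r)=1$, $u(x,r)=5$ gives a joint loss of $5$ against summed individual losses of $8$. Your remark that substitutes reverse the inequality is therefore backwards. So is your quantitative step: for $u(S,r)=\sum_{j\in S}s_j+\sum_{\{j,l\}\subseteq S}c_{jl}$, the excess of the joint loss over the sum is $-\sum_{\{j,l\}\subseteq\mathcal{S}^{\mathrm{exit}}}c_{jl}$. The $\kappa\binom{m}{2}$ bound thus needs $c_{jl}\le-\kappa<0$, not $c_{jl}\ge\kappa$.

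The paper's Claim~2 asserts the same implication, complements $\Rightarrow$ \eqref{eq:superlinear}, ``by definition,'' so it carries the same error. Your telescoping makes the error visible, and comparing with the paper will not repair it. What your argument does prove, verbatim, is the substitutes version. If $u(\cdot,r)$ is strictly submodular on the exiting catalogs (marginal values strictly decreasing in the base set), each increment satisfies $u(x^{(i-1)},r)-u(x^{(i)},r)\ge u(x,r)-u(x\setminus\{j_i\},r)$, strictly for $i\ge 2$. Then \eqref{eq:superlinear} follows for $m\ge 2$. But that hypothesis says the exiting creators are mutual substitutes, which conflicts with the statement's premise that they are distinctive, complementary catalogs. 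So either the hypothesis of (ii) must be changed to within-exit-set substitutability, or the superlinearity conclusion must be dropped for complements. ``Non-additive'' alone, as stated, does not suffice under either reading.
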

Each additional exit costs more than the previous because distinctive musical styles cannot be replaced. In fields where contributors are interchangeable, losing one costs only what that single contributor was worth, but in music a high-informativeness creator's distinctive style cannot be rebuilt from others, so their exit eliminates every musical combination their style enabled.

\textbf{AI researchers.}
The framework gives attribution-method designers two targets. First, two methods that rank creators identically can prescribe payments differing by orders of magnitude, so numerical scores matter too, not just rankings. The welfare loss formula $\mathcal{L}_j$ (\Cref{prop:welfare_loss}) measures the gap. Methods used to set creator payments should therefore be evaluated on welfare, not just rank correlation.
Second, some attribution methods let creators cheat in a specific way. A creator can report scores that look more uniform across outputs than the true scores actually are; this benefits the creator because the platform pays a premium for steady, low-risk contributions. No filter or detection step added to the contract can stop this manipulation. The fix has to come from the attribution method itself, by choosing methods whose mathematical structure prevents this kind of smoothing.% in the first place.

\textbf{Regulation and institutional design.}
\label{sec:belief_asymmetry_implication}
\label{sec:tier3}
There are two market failures contracts cannot fix. When creators and platforms hold different beliefs about attribution quality, creator effort goes off-target and the resulting welfare loss cannot be undone by adjusting the contract. Only transparency about the platform's measured $\widehat{\mathcal{I}}_j$ closes the belief gap directly. Without transparency, creators who overestimate their informativeness over-supply effort while the platform profits from their oversupply; creators who underestimate exit instead, costing the platform the surplus their participation would have created. Manipulating attribution scores by shrinking the spread of reports (without changing the average) is impossible to deter under any linear contract; the only fix is institutional. The attribution pipeline must 
%be run 
have oversight 
by someone other than the creator (third-party auditor, collection society, or regulator-mandated infrastructure). Together these results mark where contract design hits its limit and institutional intervention becomes necessary, not optional.

\begin{proposition}[Score-smoothing manipulation cannot be deterred]
\label{prop:tier3}
Under $\hat a_j\perp\pi$ (or $\pi\equiv 1$), for any mean-preserving shrinkage (compressing the spread of reported scores toward their mean while keeping the average unchanged) $g_\epsilon(\hat a)=(1-\epsilon)\hat a+\epsilon\bar a$ with $\epsilon\in(0,1]$, the creator's utility change is
\begin{equation}
\label{eq:tier3_gain}
\Delta U_j^{\mathrm{shrink}}(\epsilon)=\frac{\alpha_j}{2}(\rho_j^*)^2\,\epsilon(2-\epsilon)\,\Var(\hat a_j)\,\E[\pi^2]\;>\;0.
\end{equation}
Profitable for every $\rho_j>0$ and therefore \emph{impossible to deter under any linear contract}; full IC analysis in \Cref{app:ic}.
\end{proposition}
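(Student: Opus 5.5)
The plan is a direct variance computation under the mean--variance utility~\eqref{eq:creator_utility}, holding the contract $(f_j,\rho_j)$ fixed at $(f_j^*,\rho_j^*)$ and replacing the reported score $\hat a_j$ by $g_\epsilon(\hat a_j)$. We then compare the two utilities.

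\textbf{Step 1: the mean term is unchanged.} Because $g_\epsilon$ is mean-preserving, $\E[g_\epsilon(\hat a_j)]=(1-\epsilon)\E[\hat a_j]+\epsilon\bar a=\bar a=\E[\hat a_j]$. Under $\hat a_j\perp\pi$ this gives $\E[g_\epsilon(\hat a_j)\pi]=\E[\hat a_j]\E[\pi]=\E[v_j]$. So the fixed fee and the expected-royalty term $\rho_j\E[v_j]$ in~\eqref{eq:creator_utility} are the same before and after manipulation, and the utility change is purely a change in the risk premium:
\[
\Delta U_j^{\mathrm{shrink}}(\epsilon)=\tfrac{\alpha_j}{2}\rho_j^2\big[\Var(\hat a_j\pi)-\Var(g_\epsilon(\hat a_j)\pi)\big].
\]

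\textbf{Step 2: compute the variance reduction.} Write $\hat a_j=\bar a+\tilde a$ with $\E[\tilde a]=0$, so that $g_\epsilon(\hat a_j)=\bar a+(1-\epsilon)\tilde a$. For any $c$, independence and the zero mean of $\tilde a$ make the cross term vanish:
\[
\Var\big((\bar a+c\tilde a)\pi\big)=\bar a^2\Var(\pi)+c^2\Var(\hat a_j)\E[\pi^2].
\]
Taking the difference between $c=1$ and $c=1-\epsilon$ gives $[1-(1-\epsilon)^2]\Var(\hat a_j)\E[\pi^2]=\epsilon(2-\epsilon)\Var(\hat a_j)\E[\pi^2]$. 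Substituting $\rho_j=\rho_j^*$ yields~\eqref{eq:tier3_gain}. In the case $\pi\equiv1$ the same algebra applies with $\E[\pi^2]=1$ and $\Var(\pi)=0$. Under the paper's centering convention $\bar a=0$, the $\Var(\pi)$ term is absent from the start.

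\textbf{Step 3: strict positivity and non-deterrence.} For $\epsilon\in(0,1]$ we have $\epsilon(2-\epsilon)>0$. Also $\alpha_j>0$, and if $\Var(\hat a_j)>0$ and $\E[\pi^2]>0$ the gain is strictly positive whenever $\rho_j>0$. The argument of Steps 1--2 never used optimality of $\rho_j^*$, so it holds for every linear contract $(f,\rho)$ with $\rho>0$. A linear contract can change the payment only through $f$ and $\rho$. The fee $f$ enters both utilities identically. The coefficient $\rho$ multiplies the gain by $\rho^2$ but cannot flip its sign. So the only way to remove the incentive is $\rho=0$, which abandons royalties altogether. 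This is the ``impossible to deter'' claim.

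\textbf{Main obstacle.} The delicate point is the modelling step that justifies evaluating the creator under the manipulated variance while the platform keeps the same contract. The pay signal becomes $v_j'=g_\epsilon(\hat a_j)\pi$, but the creator's risk is measured on the payment actually received. I would state explicitly that $(f_j,\rho_j)$ is committed ex ante and that the reported, rather than the true, score enters $w_j$. I would also take care with the cross term in $\Var(\hat a\pi)$ when $\bar a\neq0$, which is why the centered decomposition is used.
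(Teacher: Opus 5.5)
Your proof is correct and follows essentially the same route as the paper's: first mean preservation under $\hat a_j\perp\pi$, then the product-of-independents variance identity (\Cref{lem:prodvar}), under which the $\bar a^2\Var(\pi)$ term cancels and the variance reduction is $\epsilon(2-\epsilon)\Var(\hat a_j)\E[\pi^2]$. Your centered decomposition $\hat a_j=\bar a+\tilde a$ simply derives that lemma inline for $g_\epsilon(\hat a_j)$, and your observation that the computation never uses optimality of $\rho_j^*$ is what supports the claim for every $\rho_j>0$.
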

Royalty contracts work by tying payment to the attribution score. That same dependency becomes a vulnerability when creators can change the score before reporting it. The fix is for the platform to compute attribution itself, rather than trust the numbers creators report.

The framework converts the cost of inaccurate attribution into a concrete per-creator number $\mathcal{L}_j$, which can be measured, compared across attribution methods, and used to direct investment where it helps most (\Cref{cor:waterfilling} in \Cref{app:investment}). The largest welfare gains from improving attribution go to creators who look least valuable under today's noisy measurement.

%----------------------------------------------------------------------

%% file: section_arxiv/05_empirical.tex
% !TEX root = ../main.tex
\section{Empirical instrument and measurement}
\label{sec:empirical}
%----------------------------------------------------------------------

Data-attribution methods are usually evaluated by rank correlation, but the welfare formula in \S3 uses cardinal calibration, not just ranks. We apply the framework's instrument across two settings, audio latent diffusion (Stable Audio Open~\citep{evans2024stable}) with EKFAC~\citep{mlodozeniec2025influence} and D-TRAK~\citep{zheng2024dtrak}, and a symbolic LM (Anticipatory Music Transformer~\citep{thickstun2023anticipatory}) with LoGra~\citep{choe2024your} and TRAK~\citep{park2023trak}, and ask whether current attribution delivers what royalty regimes need. A market-simulation companion in \Cref{app:market_simulation} draws priors from the same measurement.

\subsection{Setting 1 --- Latent-diffusion audio attribution}
\label{sec:emp_attribution}

\textbf{Setup.}
We fine-tune Stable Audio Open \citep{evans2024stable} on $22$ artist catalogs from MTG-Jamendo \citep{bogdanov2019mtg}, retraining $22$ leave-catalog-out variants alongside one full-catalog model, and evaluate on $200$ generated outputs.
Ground truth is the loss-delta on each generation when its creator's catalog is held out of training (formal definition and sanity checks in \Cref{app:setup_details}).
We benchmark EKFAC \citep{mlodozeniec2025influence} and D-TRAK \citep{zheng2024dtrak} and report three quantities per (method, creator).
These are informativeness $\widehat{\mathcal{I}}_j$ from \S3, calibration scale $\widehat K_j$ (regression slope of measured attribution on truth), and calibrated SNR $\widehat{\SNR}^{\mathrm{cal}}_j=\widehat{\Var}(a^*_j)/\widehat{\Var}(\hat a_j/\widehat K_j-a^*_j)$.

\textbf{Cardinal measurement uncovers signal rank metrics miss.}
Both audio methods produce a small but real cardinal signal against the ground truth, several-fold stronger than rank correlations alone (\Cref{fig:results_main}(a); per-method numbers in \Cref{tab:headline}).
Rank-based evaluation \citep{park2023trak,ilyas2022datamodels} discards exactly the cardinal calibration that the welfare loss formula $\mathcal{L}_j$ in \Cref{prop:welfare_loss} uses.

\textbf{Rank-equivalent methods disagree on calibration.}
EKFAC and D-TRAK match on Spearman but produce calibration scales $\widehat K_j$ differing by roughly $10^3\times$ (\Cref{fig:results_main}(b)).
Several creators show attribution scores with opposite signs across methods, including the pool's most genre-coherent catalog (a $31$-track classical artist) under EKFAC.
Because welfare loss $\mathcal{L}_j$ in \Cref{prop:welfare_loss} depends on calibration, methods that look equivalent under rank correlation prescribe materially different contracts.

\textbf{Current attribution closes about $10\%$ of the welfare gap.}
Current methods close only the first $10\%$ of the welfare gap that royalty-grade attribution would close.
At most $1$ of $22$ creators clears the \emph{royalty threshold} --- the precision at which royalty payment delivers more welfare than a fixed fee --- under either audio method (\Cref{fig:results_main}(c)); the other $21$ sit where each unit of precision improvement pays off most (\Cref{fig:results_main}(d)).
Per-creator headline statistics and bootstrap CIs are in \Cref{app:per_creator}.

\begin{figure*}[t]
\centering
\includegraphics[width=\textwidth]{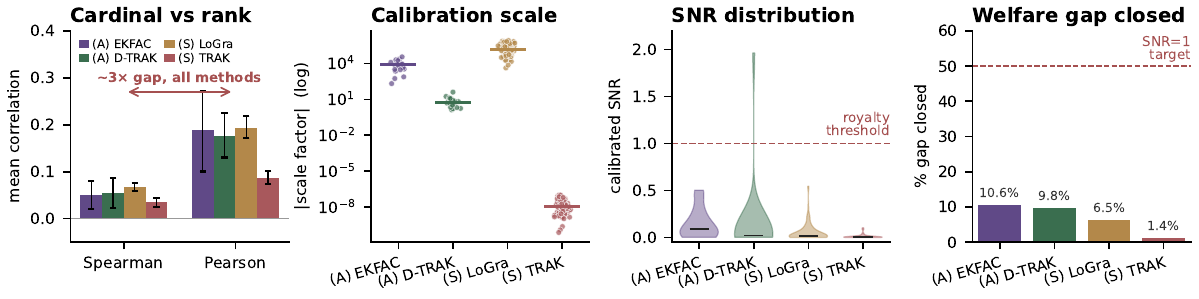}
\vspace{-15pt}
\caption{Attribution measurement on two settings, audio diffusion (Stable Audio Open with EKFAC and D-TRAK, $22$ catalogs) and symbolic language modeling (Anticipatory Music Transformer with LoGra and TRAK, $100$ artists).
Across both settings and all four methods, almost no creator's attribution is precise enough for royalty payment to beat a fixed fee under the attribution-aligned benchmark, which is fixed-fee-dominant for nearly the entire measured pool.
\textbf{(a)}~Per-query rank and cardinal correlation against the leave-catalog-out ground truth.
\textbf{(b)}~Per-creator calibration scale, log axis.
\textbf{(c)}~Calibrated-SNR distribution per method, dashed line at the royalty threshold.
\textbf{(d)}~Welfare gap closed at $\alpha{=}2$.}
\label{fig:results_main}
\vspace{-15pt}
\end{figure*}

\subsection{Setting 2 --- Symbolic music attribution}
\label{sec:emp_setting2}

We also conduct pilot measurements in the symbolic domain, consistent with parameters of a commercially deployed music co-creation product~\citep{donahue2025hookpad,kim2025amuse}. 
The pilot fine-tunes an Anticipatory Music Transformer \citep{thickstun2023anticipatory} on $100$ TheoryTab artists, generates $500$ outputs per model under the same leave-catalog-out protocol, and measures the same three quantities using LoGra \citep{choe2024your} and TRAK \citep{park2023trak}.

\textbf{Cross-domain consistency confirms a shortfall across methods.}
Two completely different settings (audio diffusion vs.\ symbolic transformer, four different attribution methods, two domains) return the same diagnosis.
Symbolic Spearman and Pearson sit in the same regime as Setting~1's, calibrated SNR is roughly an order of magnitude below the royalty threshold, and no creator clears the threshold under either symbolic method.
The attribution-precision shortfall is a property of current ML methods broadly, not an artifact of any single architecture or modality.

\textbf{The instrument discriminates between methods more sharply on symbolic.}
On symbolic data, LoGra clearly outperforms TRAK; on audio, EKFAC and D-TRAK look more comparable on rank but differ by orders of magnitude in calibration scale.
The cross-method calibration gap on symbolic ($\sim 10^{13}$) is even larger than on audio ($\sim 10^{3}$), so the rank-vs-cardinal split that the framework first flagged in Setting~1 is amplified, not erased, by switching model class.
Methods that look similar under rank correlation prescribe materially different contracts on the same pool.

\textbf{Implication for the framework's prescription.}
Both settings sit deep inside the noisy regime where each unit of attribution improvement pays off most.
The cross-setting consistency strengthens the central empirical conclusion.
Under current attribution capability, the attribution-aligned benchmark is fixed-fee-dominant across the model classes we measured, and royalty-heavy regimes require precision improvements not yet delivered by any tested method.

\subsection{Musical market simulation}
\label{sec:emp_market}

We instantiate the framework's attribution-aligned benchmark on the per-creator quantities measured above, then ask how much SNR improvement would be needed to make royalty regimes viable under that benchmark.

\textbf{Contract form across the population.}
For each creator $j$ and risk-aversion level $\alpha$ the benchmark royalty rate is $\rho^*_j = \widehat{\mathcal{I}}^{\mathrm{cal}}_j/\alpha$ after clipping to $[0,1]$ (regime classification in \Cref{fig:results_partb}(a)).
At $\alpha=2$, no creator out of $22$ clears the royalty threshold under either audio method (\Cref{fig:results_partb}(b)); the only royalty-eligible case under any setting is a $31$-track classical-music catalog at $\alpha \leq 1$ on D-TRAK.
The framework's prescription is near-universal fixed-fee at typical risk aversion, matching the dominant industry pattern of blanket licensing.
The Setting~2 (symbolic) population yields a similarly fixed-fee-dominant prescription.

\textbf{Welfare counterfactual under attribution improvements.}
We ask how much SNR improvement would be needed to close more of the gap (\Cref{fig:results_partb}(c), formula in \Cref{app:market_simulation}).
Closing $50\%$ requires $25$--$40\times$ improvement in calibrated SNR, raising the median creator from today's $\approx 0.05$ to $\approx 1$.
This is a concrete precision target for royalty-based AI music compensation to become viable.

\textbf{Multi-platform competition.}
Treating EKFAC and D-TRAK as two competing platforms in the Bertrand equilibrium of \Cref{thm:bertrand} yields a per-creator winner (EKFAC for $14$ of $22$ creators, D-TRAK for $8$; \Cref{fig:results_partb}(d)).
The per-creator surplus gap between methods realizes \Cref{prop:moat}'s underinvestment prediction on a measured pool rather than a hypothetical one.
The median gap is small enough that neither platform gains enough from improving its own attribution to overtake the other, even where investing would help society overall.
Full per-creator analysis is in \Cref{app:market_simulation}.

\begin{figure*}[t]
\centering
\includegraphics[width=\textwidth]{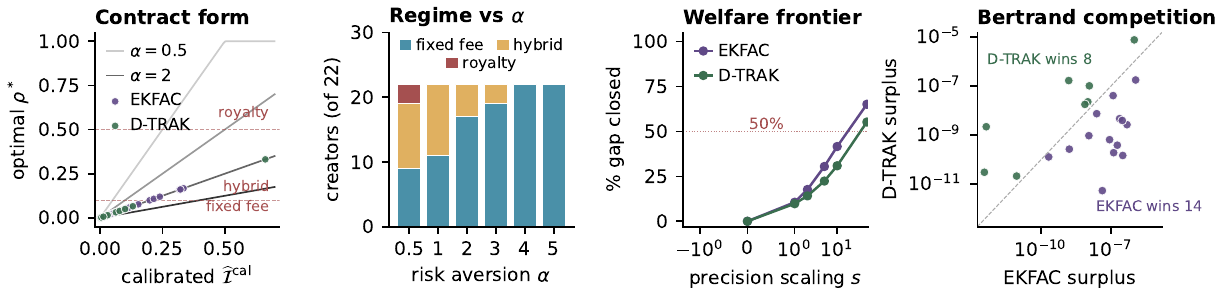}
\vspace{-15pt}
\caption{Framework prescription on the measured creator pool, $22$ catalogs under EKFAC and D-TRAK.
\textbf{(a)}~Optimal royalty rate versus calibrated informativeness at $\alpha{=}2$, with hybrid and royalty thresholds.
\textbf{(b)}~Contract-form counts across risk-aversion levels under EKFAC.
\textbf{(c)}~Welfare gap closed as calibrated SNR is scaled.
\textbf{(d)}~Per-creator surplus, EKFAC versus D-TRAK, log--log axes.}
\label{fig:results_partb}
\vspace{-15pt}
\end{figure*}

%----------------------------------------------------------------------

%% file: section_arxiv/06_conclusion.tex
% !TEX root = ../main.tex
\section{Conclusion}
\label{sec:conclusion}
%----------------------------------------------------------------------

We propose a framework for compensating music creators whose work trains generative AI, where each creator's contract form follows from how accurately the platform can trace generated outputs back to their catalog.
As generative AI takes a larger share of music production, the choice of attribution technology will increasingly shape who gets paid, and our framework gives platforms, creators, and policymakers a shared way to reason about that choice.
Future work could extend it to catalogs that grow over time, generative models spanning music and other media, and institutional setups for attribution oversight.

%% file: section_arxiv/A_gaussian.tex
% !TEX root = ../main.tex
\section{Gaussian special case and simplifying conditions}
\label{app:assumptions}

The main results (Theorem~\ref{thm:optimal_rho}, Propositions~\ref{prop:welfare_loss}--\ref{prop:selective_exit}) hold for any noise structure through the general informativeness measure $\mathcal{I}_j$. The closed-form SNR expressions used in examples and simulations require three additional conditions, stated here for reference.

\textbf{Additive Gaussian noise.} $\hat{a}_{j} = a^*_{j} + \xi_{j}$, with $\xi_{j} \sim \mathcal{N}(0, \sigma^2_{\xi,j})$ independent of $a^*_{j}$ and $\pi$. This yields $\mathcal{I}_j = \SNR_j / (1+\SNR_j)$ and the closed-form $\rho^* = \frac{1}{\alpha_j}\cdot\frac{\SNR_j}{1+\SNR_j}$.
Under the additional revenue-independence condition below, the welfare loss in eq.~\eqref{eq:welfare_loss} reduces to the Gaussian-shortcut form
\begin{equation}
\label{eq:welfare_loss_gaussian}
\mathcal{L}_j=\frac{\E[\pi^2]\,\Var(a^*_{j})}{2\alpha_j\,\SNR_j(1+\SNR_j)},
\end{equation}
and the marginal welfare gain from improving $\SNR_j$ is
\begin{equation}
\label{eq:welfare_loss_marginal}
\left|\frac{\partial\mathcal{L}_j}{\partial\SNR_j}\right|=\frac{\E[\pi^2]\,\Var(a^*_{j})\,(2\SNR_j+1)}{2\alpha_j\,\SNR_j^2(1+\SNR_j)^2},
\end{equation}
which scales as $1/\SNR_j^2$ at low precision.

\textbf{Multiplicative-bias regime and calibration.} A common generalization is $\hat{a}_{j} = K_j \cdot a^*_{j} + \xi_{j}$ with $\xi_{j} \perp a^*_{j}$ and $\xi_{j} \sim \mathcal{N}(0, \sigma^2_{\xi,j})$, which arises naturally for gradient-based attribution methods that estimate scores up to a method-specific scale (Section~\ref{sec:setup}). Under this model, the calibrated estimator $\tilde{a}_{j} = \hat{a}_{j}/K_j$ satisfies $\tilde{a}_{j} = a^*_{j} + \xi_{j}/K_j$, returning the additive Gaussian regime with $\SNR_j^{\mathrm{cal}} = K_j^2 \cdot \SNR_j$. The benchmark payment rule should be evaluated on the calibrated signal when the Gaussian shortcut is used: $\mathcal{I}^{\mathrm{cal}}_j=\SNR_j^{\mathrm{cal}}/(1+\SNR_j^{\mathrm{cal}})$, while the raw uncalibrated regression slope satisfies $\mathcal{I}^{\mathrm{raw}}_j=\SNR_j^{\mathrm{cal}}/[K_j(1+\SNR_j^{\mathrm{cal}})]$. Thus the welfare-loss formula in equation~\eqref{eq:welfare_loss} requires the calibrated $\widehat{\SNR}^{\mathrm{cal}}$ when expressed via the Gaussian shortcut. The proof is a one-line algebra exercise on $\Cov(\hat{a},a^*) = K_j \Var(a^*)$ and $\Var(\hat{a}) = K_j^2 \Var(a^*) + \Var(\xi)$.

\textbf{Revenue independence.} $a^*_{j} \perp \pi \mid x$: individual attribution does not predict total platform revenue, conditional on the active set. This simplifies $\Var(a^*_{j} \cdot \pi) = \E[\pi^2]\Var(a^*_{j})$ and is plausible when individual contributions are small relative to the full catalog.

\textbf{Non-additive utility.} The selective-exit result (Proposition~\ref{prop:selective_exit}) and the cross-creator substitution argument apply to settings where $u(x, r)$ is non-additive across creators --- some creators are complements (superadditive), others substitutes (submodular). This is a property of the platform's value function, not a distributional assumption on the attribution signal.

%% file: section_arxiv/B_proofs.tex
% !TEX root = ../main.tex
\section{Full proofs and the quality-choice extension}
\label{app:proofs}

This appendix contains the full proofs of the main-text results, followed by the formal extension to endogenous quality choice in \Cref{app:quality}, referenced from the main-text forward pointer.

\subsection{Proof of Theorem~\ref{thm:optimal_rho}}

\begin{proof}[Proof of Theorem~\ref{thm:optimal_rho}]
The theorem maximizes the risk-adjusted alignment criterion~\eqref{eq:alignment_criterion}, which is distinct from the platform-profit objective~\eqref{eq:platform_obj}. The latter is handled by \Cref{lem:ir_only} and gives the pure fixed-fee baseline.

Under the model assumptions $\pi\perp(\hat a_j,a^*_j)$ and the centering convention $\E[\hat a_j]=\E[a^*_j]=0$, independence and zero means give
\[
\Cov(v_j,a^*_j\pi)=\E[\pi^2]\Cov(\hat a_j,a^*_j),\qquad
\Var(v_j)=\E[\pi^2]\Var(\hat a_j).
\]
The first-order condition for~\eqref{eq:alignment_criterion} gives the interior maximizer
\[
\rho_j^{\mathrm{int}}
=\frac{\Cov(v_j,a^*_j\pi)}{\alpha_j\Var(v_j)}
=\frac{\Cov(\hat a_j,a^*_j)}{\alpha_j\Var(\hat a_j)}
=\frac{\mathcal{I}_j}{\alpha_j}.
\]
The second-order condition is $-\alpha_j\Var(v_j)<0$, so the criterion is strictly concave whenever $\Var(v_j)>0$. Imposing the contract-space constraint $\rho_j\in[0,1]$ projects the interior maximizer onto the feasible interval, yielding $\rho_j^*=\operatorname{clip}(\mathcal{I}_j/\alpha_j,[0,1])$.

For the Gaussian special case, $\hat a_j=a^*_j+\xi_j$ with $\xi_j$ independent of $a^*_j$ gives $\mathcal{I}_j=\SNR_j/(1+\SNR_j)$, so the interior benchmark rate is $\rho_j^*=\SNR_j/[\alpha_j(1+\SNR_j)]$.
\end{proof}

\subsection{Proof of Proposition~\ref{prop:welfare_loss}}

\begin{proof}[Proof of Proposition~\ref{prop:welfare_loss}]
For the attribution-aligned benchmark, the reduced criterion as a function of $\rho_j$ is
\[
J_j(\rho_j)=-\frac{\alpha_j}{2}\rho_j^2\Var(v_j)+\rho_j\Cov(v_j,a^*_j\pi),
\]
up to terms independent of $\rho_j$. Under the model assumptions ($\pi\perp(\hat a_j,a^*_j)$ and centering $\E[\hat a_j]=\E[a^*_j]=0$), the covariance identity $\Cov(v_j,a^*_j\pi)=\mathcal{I}_j\Var(v_j)$ holds. Completing the square around $\rho_j^*=\mathcal{I}_j/\alpha_j$ gives
\[
-\frac{\alpha_j}{2}\rho_j^2\Var(v_j)+\rho_j\mathcal{I}_j\Var(v_j)
=-\frac{\alpha_j\Var(v_j)}{2}\big(\rho_j-\rho_j^*\big)^2+\frac{\Var(v_j)}{2\alpha_j}\mathcal{I}_j^2.
\]
Thus
\begin{equation}
\label{eq:benchmark_completed_square}
J_j(\rho_j)=\frac{\Var(v_j)}{2\alpha_j}\mathcal{I}_j^2-\frac{\alpha_j\Var(v_j)}{2}\big(\rho_j-\rho_j^*\big)^2.
\end{equation}
The benchmark loss reported in the main text is the quadratic gap induced by moving from the perfect-attribution benchmark rate $\rho_j^{\mathrm{FB}}=1/\alpha_j$ to the noisy-attribution benchmark rate $\rho_j^*=\mathcal{I}_j/\alpha_j$, evaluated under the same variance normalization:
\[
\mathcal{L}_j
=\frac{\alpha_j\Var(v_j)}{2}\big(\rho_j^{\mathrm{FB}}-\rho_j^*\big)^2
=\frac{\Var(v_j)}{2\alpha_j}(1-\mathcal{I}_j)^2.
\]

In the Gaussian special case, substitute $\mathcal{I}_j=\SNR_j/(1+\SNR_j)$ and the variance convention used in the simulations to obtain the SNR form stated in \Cref{app:assumptions}.
\end{proof}

\subsection{Endogenous quality choice}
\label{app:quality}

The static framework of Section~\ref{sec:theory} takes the creator's contribution $a^*_{j}$ as exogenous. We now allow the creator to choose effort $q_j$ that scales their contribution multiplicatively, so platform value from creator $j$ is $q_j\cdot a^*_{j}\pi$.

\textbf{Setup.} Adopt the following extension of the main-text assumptions:

\textbf{(Q1) Multiplicative scaling.} Quality enters multiplicatively: platform value from creator $j$ is $q_j\cdot a^*_{j}\pi$. Under multiplicative scaling, $\mathcal{I}_j$ is invariant to $q_j$ (the $q_j^2$ cancels in $\Cov$ over $\Var$).

\textbf{(Q2) Participation.} Outside option $\bar U_j$ does not depend on $\rho_j$; the IR constraint binds at the optimum.

\textbf{(Q3) Sequential timing.} The platform commits to $\theta_j=(f_j,\rho_j)$, after which the creator chooses effort $q_j$ to maximize
\[
U_j(\theta_j,q)=f_j+\rho_j q\mu_j-\frac{A_j}{2}\rho_j^2 q^2-\frac{\gamma_j}{2}q^2,
\]
where $\mu_j:=\E[v_{j}]$, $A_j:=\alpha_j\sigma_{v,j}^2$, $D_j(\rho):=A_j\rho^2+\gamma_j$, and $\gamma_j>0$ is the marginal effort cost.

\textbf{(Q4) Unbiased attribution.} $\E[\hat a_{j}\pi]=\E[a^*_{j}\pi]$, so the platform's expected gross benefit from effort is $q_j\mu_j$.

\begin{proof}[Proof of the endogenous-quality-choice result]
The creator's objective is strictly concave in $q$ ($\partial^2 U_j/\partial q^2=-D_j<0$). The corner $q=0$ is excluded since $\partial U_j/\partial q\big|_{q=0}=\rho\mu>0$. The interior FOC gives $q^{\mathrm{int}}=\rho\mu/D$. Differentiating: $dq^*/d\rho=\mu(\gamma-A\rho^2)/D^2$, positive iff $\rho<\sqrt{\gamma/A}$. The peak is obtained by substituting $\rho^{\max}=\sqrt{\gamma/A}$ into $q^{\mathrm{int}}$. Comparative statics in $\gamma_j$ and $\alpha_j$ follow from $D$ being strictly increasing in both.
\end{proof}

\textbf{Welfare-maximizing royalty.} The hump-shape implies that the welfare-maximizing royalty---which maximizes the effort surplus $W_j(\rho)=\mu_j^2\rho(2-\rho)/[2D_j(\rho)]$---generically exceeds the alignment-only optimum $\rho_j^*=\mathcal{I}_j/\alpha_j$.

\begin{proposition}[Effort-only optimal royalty]
\label{prop:effort_rho}
The unique maximizer of $W_j$ on $(0,1)$ is
\[
\rho_j^{**}=\frac{-\gamma_j+\sqrt{\gamma_j^2+4A_j\gamma_j}}{2A_j},\qquad 0<\rho_j^{**}<1.
\]
\end{proposition}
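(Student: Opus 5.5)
The plan is to treat this as a one-dimensional calculus problem on the explicit effort surplus
\[
W_j(\rho)=\frac{\mu_j^2\,\rho(2-\rho)}{2D_j(\rho)},\qquad D_j(\rho)=A_j\rho^2+\gamma_j .
\]
I will assume $\mu_j\neq 0$, since otherwise $W_j\equiv 0$ and there is no unique maximizer. I will also assume $A_j=\alpha_j\sigma_{v,j}^2>0$, which is the nondegenerate case $\Var(v_j)>0$ already used in \Cref{lem:ir_only} and \Cref{thm:optimal_rho}. Under these assumptions the positive constant $\mu_j^2/2$ factors out, and it suffices to maximize $h(\rho)=\rho(2-\rho)/(A_j\rho^2+\gamma_j)$ on $(0,1)$.

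First I differentiate $h$ by the quotient rule. The numerator of $h'$ is
\[
(2-2\rho)(A_j\rho^2+\gamma_j)-(2\rho-\rho^2)(2A_j\rho).
\]
When this is expanded, the cubic terms $\mp 2A_j\rho^3$ cancel. What remains is
\[
h'(\rho)=\frac{-2\big(A_j\rho^2+\gamma_j\rho-\gamma_j\big)}{D_j(\rho)^2}.
\]
Since $D_j(\rho)^2>0$, the sign of $h'$ is the opposite of the sign of the quadratic $P(\rho)=A_j\rho^2+\gamma_j\rho-\gamma_j$. This cancellation is the one step I would double-check. Everything else depends on the first-order condition reducing to a quadratic rather than a cubic.

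Next I analyze $P$. It satisfies $P(0)=-\gamma_j<0$ and $P(1)=A_j>0$, and it is a convex parabola with $A_j>0$. It therefore has exactly one root in $(0,1)$, and that root is its positive root,
\[
\rho_j^{**}=\frac{-\gamma_j+\sqrt{\gamma_j^2+4A_j\gamma_j}}{2A_j}.
\]
The other root is negative, because the product of the roots is $-\gamma_j/A_j<0$. This gives the bounds $0<\rho_j^{**}<1$ directly, without estimating the square root.

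Finally, the sign pattern determines the maximizer. We have $P<0$ on $(0,\rho_j^{**})$ and $P>0$ on $(\rho_j^{**},1)$. So $W_j$ is strictly increasing before $\rho_j^{**}$ and strictly decreasing after it, which makes $\rho_j^{**}$ the unique maximizer of $W_j$ on $(0,1)$.
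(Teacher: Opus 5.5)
Your proof is correct and follows essentially the same route as the paper's. Both differentiate $W_j$ and observe that the sign of $W_j'$ is governed by the quadratic $-A_j\rho^2-\gamma_j\rho+\gamma_j$ (your $-P$). Both then take its positive root by the quadratic formula and place it in $(0,1)$ from the endpoint signs. Your explicit sign-change argument for uniqueness of the maximizer, and the nondegeneracy assumptions $\mu_j\neq 0$ and $A_j>0$, are details the paper leaves implicit.
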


\begin{proof}
$W_j'(\rho)=\mu_j^2 h(\rho)/D_j(\rho)^2$ where $h(\rho)=-A_j\rho^2-\gamma_j\rho+\gamma_j$. Setting $h=0$ and applying the quadratic formula yields the positive root. The intermediate-value theorem confirms the root lies in $(0,1)$ since $h(0)=\gamma_j>0$ and $h(1)=-A_j<0$.
\end{proof}

\begin{proposition}[Royalty hierarchy]
\label{prop:rho_hierarchy}
Under (Q1)--(Q4) with $\beta_j:=A_j/\gamma_j<1$ and $\rho_j^*<\rho_j^{**}$, we have $\rho_j^{**}>\rho_j^*$. A sufficient condition is $\alpha_j\geq 1$ and $\beta_j<1$. As $\gamma_j\to\infty$, $\rho_j^{**}\to 1$ and the effort channel becomes negligible: total welfare is governed by the alignment channel and the optimal payment rule converges to $\rho_j^*=\mathcal{I}_j/\alpha_j$.
\end{proposition}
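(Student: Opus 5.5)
The first claim is immediate once the hypothesis $\rho_j^*<\rho_j^{**}$ is granted, since it is literally the conclusion. The real content is the \emph{sufficient condition} and the \emph{limit} statement, so I would organize the proof around an explicit scale-free form of $\rho_j^{**}$ from \Cref{prop:effort_rho}. Dividing numerator and denominator by $\gamma_j$ and writing $\beta_j=A_j/\gamma_j$ gives
\[
\rho_j^{**}=\frac{\sqrt{1+4\beta_j}-1}{2\beta_j}=\frac{2}{1+\sqrt{1+4\beta_j}}.
\]
The second form comes from rationalizing the numerator. It shows at once that $\rho_j^{**}$ depends on $(A_j,\gamma_j)$ only through $\beta_j$, and that it is strictly decreasing in $\beta_j$, with $\rho_j^{**}\in(0,1)$.

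\textbf{Sufficient condition.} By monotonicity, $\beta_j<1$ gives the lower bound
\[
\rho_j^{**}>\frac{2}{1+\sqrt5}=\frac{\sqrt5-1}{2}\approx 0.618.
\]
On the other side, $\rho_j^*=\operatorname{clip}(\mathcal{I}_j/\alpha_j,[0,1])\le \mathcal{I}_j/\alpha_j$ whenever the clip is inactive from above. In the classical-noise regime, $\mathcal{I}_j\le 1$, so $\rho_j^*\le 1/\alpha_j$. The comparison therefore reduces to checking $\mathcal{I}_j/\alpha_j<2/(1+\sqrt{1+4\beta_j})$.

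I expect this step to be the main obstacle. The hypothesis $\alpha_j\ge1$ alone only yields $\rho_j^*\le 1$, which does not beat $0.618$ in the worst case $\mathcal{I}_j\to1$, $\alpha_j=1$. I would first try to tighten the argument by linking $\beta_j$ to $\alpha_j$ through $A_j=\alpha_j\sigma_{v,j}^2$. If that fails, I would state the sharp version explicitly. That version is $\mathcal{I}_j/\alpha_j<2/(1+\sqrt{1+4\beta_j})$, which holds in particular when $\alpha_j\ge(1+\sqrt5)/2$, or when $\mathcal{I}_j<(\sqrt5-1)/2$ with $\alpha_j\ge1$. I would treat $\alpha_j\ge1,\beta_j<1$ as sufficient only together with such a strictness margin, and flag the boundary case.

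\textbf{Limit $\gamma_j\to\infty$.} With $A_j$ fixed, $\beta_j\to0$, and the rationalized form gives $\rho_j^{**}\to 2/(1+1)=1$ directly, with no Taylor expansion needed. For the claim that the effort channel becomes negligible, I would use the interior effort $q^{\mathrm{int}}=\rho\mu_j/D_j(\rho)$ from the endogenous-quality proof. Since $D_j\ge\gamma_j$, we have $q^{\mathrm{int}}=O(1/\gamma_j)$ uniformly in $\rho\in[0,1]$. Likewise, $W_j(\rho)=\mu_j^2\rho(2-\rho)/[2D_j(\rho)]\le \mu_j^2/(2\gamma_j)\to0$ uniformly. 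The total objective (alignment criterion plus effort surplus) therefore converges uniformly on $[0,1]$ to the strictly concave criterion~\eqref{eq:alignment_criterion}. By a standard argmax-continuity argument for uniformly convergent objectives with a unique maximizer on a compact set, the optimal rate converges to $\rho_j^*=\operatorname{clip}(\mathcal{I}_j/\alpha_j,[0,1])$ from \Cref{thm:optimal_rho}.
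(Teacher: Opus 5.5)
Your proposal is correct, and on two of the three claims it is more complete than the paper's own proof. The paper does not treat the first claim as a tautology. It uses that $W_j$ is strictly increasing on $(0,\rho_j^{**})$ to get $W_j'(\rho_j^*)>0$, and concludes that raising the royalty above $\rho_j^*$ improves welfare at no first-order alignment cost. It says nothing about the sufficient condition. For the limit it only asserts $q_j^*\to0$, $W_j\to0$ uniformly and convergence of the optimum to $\rho_j^*$, and never checks $\rho_j^{**}\to1$.

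Your rationalized form $\rho_j^{**}=2/(1+\sqrt{1+4\beta_j})$ gives three things at once: monotonicity in $\beta_j$, the lower bound $(\sqrt{5}-1)/2$, and $\rho_j^{**}\to1$. Your bound $W_j\le\mu_j^2/(2\gamma_j)$ plus argmax continuity turns the paper's ``the alignment channel dominates'' into an actual argument.

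What you omit is the paper's substantive reading of the first claim. If you want it, the argument is short. For $\rho<\rho_j^*$, both the alignment criterion (by strict concavity) and $W_j$ (by monotonicity below $\rho_j^{**}$) are strictly smaller than at $\rho_j^*$. At an interior $\rho_j^*$, the slope of their sum is $W_j'(\rho_j^*)>0$, provided $\mu_j\neq0$. Hence the maximizer of alignment-plus-effort welfare lies strictly above $\rho_j^*$. This zero-slope fact is also the cleaner justification of the paper's ``no first-order cost'' step, which the paper phrases as $\mathcal{L}_j$ being $\rho$-independent.

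On the sufficient condition, you can stop hedging: $\alpha_j\ge1$ with $\beta_j<1$ is false as stated. Linking $\beta_j$ to $\alpha_j$ through $A_j=\alpha_j\sigma_{v,j}^2$ cannot rescue it. The reason is that $\gamma_j$ enters $\beta_j$ but not $\rho_j^*$, so $\beta_j$ can be placed anywhere in $(0,1)$ independently of $(\alpha_j,\mathcal{I}_j)$. For example, take $\alpha_j=1$, $\mathcal{I}_j=0.9$ (classical noise with $\SNR_j=9$) and $\beta_j=1/2$. Then $\rho_j^*=0.9>\sqrt{3}-1\approx0.732=\rho_j^{**}$.

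Your repaired conditions are right. Together with $\beta_j<1$, either $\alpha_j\ge(1+\sqrt{5})/2$ with $\mathcal{I}_j\le1$, or $\alpha_j\ge1$ with $\mathcal{I}_j<(\sqrt{5}-1)/2$, forces $\rho_j^*<\rho_j^{**}$. The first of these covers the paper's pilot value $\alpha_j=2$. One small point: your inequality $\rho_j^*\le\mathcal{I}_j/\alpha_j$ fails when the lower clip binds. In that case $\rho_j^*=0$, so the comparison is trivial.
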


\begin{proof}
Since $W_j$ is strictly increasing on $(0,\rho_j^{**})$ (from $W_j'>0$ when $h>0$) and $\rho_j^*<\rho_j^{**}$ by hypothesis, $W_j'(\rho_j^*)>0$, so raising the royalty above $\rho_j^*$ improves effort welfare. The alignment loss $\mathcal{L}_j$ is $\rho$-independent under (Q1), so there is no first-order cost. As $\gamma_j\to\infty$, $q_j^*\to 0$ and $W_j(\rho)\to 0$ uniformly; the static alignment channel dominates, and the optimum converges to $\rho_j^*$.
\end{proof}

The economic interpretation is that the alignment channel of Theorem~\ref{thm:optimal_rho} captures only the cost of measurement noise; under endogenous effort, the royalty also serves a moral-hazard role, and the welfare-optimal rate trades off the two. At pilot parameters ($\alpha=2$, $\beta\approx 0.02$), $\rho_j^{**}\approx 0.98$ for all three creators, an order of magnitude above the alignment-only $\rho_j^*\leq 0.378$.

\textbf{Note on remaining proofs.} The cross-tier substitution argument from Section~\ref{sec:setup} is in Appendix~\ref{app:uniform}, where the uniform-contract proof handles the welfare gap between personalized and uniform rates. The dynamic Bellman equation and two-regime characterization for participation appear in Appendix~\ref{app:dynamics}.

%% file: section_arxiv/C_ic.tex
% !TEX root = ../main.tex
\section{Incentive compatibility: full analysis}
\label{app:ic}

This appendix develops the full incentive-compatibility machinery underlying the score-smoothing analysis (Section~\ref{sec:tier3}). We give the proof of the three-tier hierarchy, characterize when monitoring frictions deter Tier~2 (mean inflation) bias, work out the screening problem when risk aversion $\alpha_j$ is private information, derive a complementary statistical detection test for Tier~3 (variance shrinkage), and show that the IC hierarchy is invariant to the competitive environment under exclusive licensing. Throughout we use the notation of Section~\ref{sec:setup}: $v_{j}=\hat a_{j}\pi$, $\sigma_{v,j}^2=\Var(v_{j})$, $\rho_j^*=\mathcal{I}_j/\alpha_j$, and the creator's mean--variance utility~\eqref{eq:creator_utility}. The single-platform variance $\sigma_{v,j}^2$ is the $K=1$ specialization of the per-platform variance $V_j^{(k)}=\Var(v_{j}^{(k)})$ used in Appendix~\ref{app:competition}.

\subsection{Proof of the three-tier hierarchy}
\label{app:ic_hierarchy}

We give the formal version of the three-tier hierarchy whose Tier~3 conclusion appears as Proposition~\ref{prop:tier3} in the main text.

\begin{proposition}[Three-tier creator manipulation hierarchy]
\label{prop:ic_full}
Fix $\rho_j^*=\mathcal{I}_j/\alpha_j$ from Theorem~\ref{thm:optimal_rho}. Consider three post-contractual distortions of the reported $\hat a_{j}$:

\textbf{(T1) Noise inflation.} The creator adds independent noise $\eta\perp(\hat a,a^*,\pi)$ with $\E[\eta]=0$, $\Var(\eta)=\Sigma_\eta>0$ at private cost $c\geq 0$. The creator's utility change is
\[
\Delta U_j^{\mathrm{noise}}=-\frac{\alpha_j}{2}(\rho_j^*)^2\Sigma_\eta\E[\pi^2]-c<0.
\]
Tier~1 is automatically deterred for every $\alpha_j>0$ and $\rho_j^*>0$.

\textbf{(T2) Mean inflation.} The creator adds bias $b\geq 0$ at quadratic cost $(\kappa/2)b^2$, $\kappa>0$. The interior-bias maximizer is
\[
b_j^*=\max\!\Big(0,\;\frac{\rho_j^*\bar\pi-\alpha_j(\rho_j^*)^2\Cov(v_{j},\pi)}{\kappa+\alpha_j(\rho_j^*)^2\sigma_\pi^2}\Big).
\]
Under $\pi\equiv 1$ this simplifies to $b_j^*=\mathcal{I}_j/(\alpha_j\kappa)>0$. Tier~2 is not self-policing: external monitoring ($\kappa$ large enough) is required to enforce truthfulness.

\textbf{(T3) Variance shrinkage.} Assume $\hat a_j\perp\pi$ or $\pi\equiv 1$. For any mean-preserving shrinkage $g_\epsilon(\hat a)=(1-\epsilon)\hat a+\epsilon\bar a$ with $\epsilon\in(0,1]$,
\[
\Delta U_j^{\mathrm{shrink}}(\epsilon)=\frac{\alpha_j}{2}(\rho_j^*)^2\,\epsilon(2-\epsilon)\,\Var(\hat a_j)\,\E[\pi^2]>0.
\]
Tier~3 is impossible to deter under any linear contract with $\rho_j>0$.
\end{proposition}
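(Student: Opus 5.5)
Each tier is a direct mean--variance computation. Throughout, the contract $(f_j,\rho_j)$ is held fixed, with $\rho_j=\rho_j^*$ from Theorem~\ref{thm:optimal_rho}. The fixed fee $f_j$ enters $U_j$ in~\eqref{eq:creator_utility} additively and does not depend on the reported score, so it cancels in every utility difference. Each $\Delta U_j$ therefore reduces to $\rho_j\,\Delta\E[v_j]-\tfrac{\alpha_j}{2}\rho_j^2\,\Delta\Var(v_j)$, minus any private manipulation cost. For each distortion I will compute the shift in the mean and the variance of the distorted royalty base $\tilde v_j=\tilde a_j\pi$, then substitute.

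\textbf{Tier 1 (noise inflation).} Take $\tilde a_j=\hat a_j+\eta$ with $\eta\perp(\hat a_j,a^*_j,\pi)$ and $\E[\eta]=0$. Then $\E[\eta\pi]=0$, so the mean is unchanged. The cross term $\Cov(\hat a_j\pi,\eta\pi)=\E[\hat a_j\pi^2]\E[\eta]-\E[\hat a_j\pi]\E[\eta\pi]$ vanishes by independence. The added variance is $\Var(\eta\pi)=\Sigma_\eta\E[\pi^2]$. Substituting gives $-\tfrac{\alpha_j}{2}(\rho_j^*)^2\Sigma_\eta\E[\pi^2]-c<0$, so Tier~1 is deterred automatically.

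\textbf{Tier 2 (mean inflation).} Take $\tilde a_j=\hat a_j+b$. The mean rises by $b\bar\pi$, and the variance becomes $\sigma_{v,j}^2+2b\Cov(v_j,\pi)+b^2\sigma_\pi^2$. The net gain is
\[
\rho_j^* b\bar\pi-\tfrac{\alpha_j}{2}(\rho_j^*)^2\big(2b\Cov(v_j,\pi)+b^2\sigma_\pi^2\big)-\tfrac{\kappa}{2}b^2 .
\]
This is strictly concave in $b$, since $\kappa>0$. Solving the first-order condition and projecting onto $b\geq 0$ gives the stated $b_j^*$. When $\pi\equiv1$ we have $\bar\pi=1$ and $\Cov(v_j,\pi)=\sigma_\pi^2=0$, so $b_j^*=\rho_j^*/\kappa=\mathcal{I}_j/(\alpha_j\kappa)$. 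This is positive whenever $\mathcal{I}_j>0$, which is exactly the case in which the royalty is active. Hence the bias is chosen positive unless $\kappa$ is made large by monitoring.

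\textbf{Tier 3 (variance shrinkage).} Write $\bar a=\E[\hat a_j]$ and $g_\epsilon(\hat a_j)-\bar a=(1-\epsilon)(\hat a_j-\bar a)$. Under $\hat a_j\perp\pi$, the mean $\E[g_\epsilon\pi]=\bar a\bar\pi$ is unchanged. For the variance I will use $\Var(XY)=\E[X^2]\E[Y^2]-(\E X\,\E Y)^2$ for independent $X,Y$, together with $\E[g_\epsilon^2]=(1-\epsilon)^2\Var(\hat a_j)+\bar a^2$. This gives
\[
\Var(v_j)-\Var(g_\epsilon\pi)=\big(1-(1-\epsilon)^2\big)\Var(\hat a_j)\E[\pi^2]=\epsilon(2-\epsilon)\Var(\hat a_j)\E[\pi^2].
\]
The identity does not require the centering convention, and the case $\pi\equiv1$ is the special case $\E[\pi^2]=1$. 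Multiplying by $\tfrac{\alpha_j}{2}(\rho_j^*)^2$ yields~\eqref{eq:tier3_gain}. This is strictly positive for $\epsilon\in(0,1]$, provided $\Var(\hat a_j)>0$.

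The step needing the most care is the ``impossible to deter under any linear contract'' clause, as opposed to the fixed-$\rho_j^*$ computation. I would repeat the Tier~3 calculation with an arbitrary $(f,\rho)$. Since $f$ cancels and $\rho$ enters only through $\rho^2>0$, the gain $\tfrac{\alpha_j}{2}\rho^2\epsilon(2-\epsilon)\Var(\hat a_j)\E[\pi^2]$ is strictly positive for every $\rho>0$ and every $\epsilon$. No choice of the two linear instruments can reverse its sign. The only way to remove it is $\rho=0$, which abandons the royalty altogether.
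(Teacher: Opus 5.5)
Your proposal is correct and follows the paper's proof essentially step for step. In each tier you compute the shift in the mean and variance of the distorted royalty base and substitute into the mean--variance utility; your direct factorization of the Tier~1 cross term and your raw second-moment form of the product-variance identity in Tier~3 are unpacked versions of the paper's tower-property step and its product-of-independents lemma. Your explicit rerun of the Tier~3 computation for an arbitrary linear contract $(f,\rho)$ with $\rho>0$, together with the $\Var(\hat a_j)>0$ and $\mathcal{I}_j>0$ caveats, is a small but welcome addition, since the paper's proof only evaluates at $\rho_j^*$.
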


\begin{proof}
\textbf{(T1).} Let $\tilde v=v+\eta\pi$. By independence and $\E[\eta]=0$, $\E[\eta\pi]=\E[\eta]\E[\pi]=0$, so the mean is preserved. By the tower property, $\Cov(v,\eta\pi)=\E[\hat a\pi^2\E[\eta\mid\hat a,a^*,\pi]]=0$. Independence gives $\Var(\eta\pi)=\E[\eta^2]\E[\pi^2]=\Sigma_\eta\E[\pi^2]$. Hence $\Var(\tilde v)=\Var(v)+\Sigma_\eta\E[\pi^2]$, and the creator's utility change is $-(\alpha_j/2)(\rho_j^*)^2\Sigma_\eta\E[\pi^2]-c$, which is strictly negative.

\textbf{(T2).} Adding $b$ replaces $v$ with $v+b\pi$. The creator solves $\max_{b\geq 0}\rho_j^*\bar\pi b-(\alpha_j/2)(\rho_j^*)^2(b^2\sigma_\pi^2+2b\Cov(v,\pi))-(\kappa/2)b^2$. The objective is strictly concave (SOC: $-\kappa-\alpha_j(\rho_j^*)^2\sigma_\pi^2<0$); the interior FOC and the $b\geq 0$ constraint give the stated $b_j^*$.

\textbf{(T3).} Under $\hat a_j\perp\pi$, mean preservation: $\E[g_\epsilon(\hat a)\pi]=\bar a\bar\pi=\E[\hat a\pi]$. Apply the product-of-independents variance identity (Lemma~\ref{lem:prodvar}): $\Var(g_\epsilon(\hat a)\pi)=(1-\epsilon)^2\Var(\hat a)\E[\pi^2]+\bar a^2\Var(\pi)$ and $\Var(\hat a\pi)=\Var(\hat a)\E[\pi^2]+\bar a^2\Var(\pi)$. The variance reduction is $\epsilon(2-\epsilon)\Var(\hat a)\E[\pi^2]$; the $\bar a^2\Var(\pi)$ term cancels exactly. Mean payment is unchanged, so the utility change is $(\alpha_j/2)(\rho_j^*)^2\,\epsilon(2-\epsilon)\Var(\hat a)\E[\pi^2]>0$.
\end{proof}

\begin{lemma}[Product of independents]
\label{lem:prodvar}
For $X\perp Y$ with finite second moments, $\Var(XY)=\Var(X)\E[Y^2]+\E[X]^2\Var(Y)$.
\end{lemma}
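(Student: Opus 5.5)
We prove Lemma~\ref{lem:prodvar} by expanding the variance of $XY$ into first and second moments and using independence to factor the expectations. Write
\[
\Var(XY)=\E[X^2Y^2]-\big(\E[XY]\big)^2 .
\]
First we check that both terms are finite. Since $X\perp Y$, the random variables $X^2$ and $Y^2$ are independent and nonnegative, so $\E[X^2Y^2]=\E[X^2]\E[Y^2]<\infty$. This product rule for nonnegative independent variables holds without any integrability assumption, so $XY$ has a finite second moment. Independence likewise gives $\E[XY]=\E[X]\E[Y]$, which is well defined because $X$ and $Y$ are in $L^2$ and hence in $L^1$.

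Substituting these two factorizations gives
\[
\Var(XY)=\E[X^2]\E[Y^2]-\E[X]^2\E[Y]^2 .
\]
To reach the stated form, we add and subtract $\E[X]^2\E[Y^2]$:
\[
\Var(XY)=\big(\E[X^2]-\E[X]^2\big)\E[Y^2]+\E[X]^2\big(\E[Y^2]-\E[Y]^2\big)=\Var(X)\E[Y^2]+\E[X]^2\Var(Y).
\]
As a sanity check on the grouping, the identity is asymmetric in $X$ and $Y$, and the alternative grouping gives $\Var(Y)\E[X^2]+\E[Y]^2\Var(X)$. The two forms agree, since each expands to $\E[X^2]\E[Y^2]-\E[X]^2\E[Y]^2$.

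There is no real obstacle here. The one step to state carefully is that the first factorization uses the independence of $X^2$ and $Y^2$, which follows because they are measurable functions of the independent variables $X$ and $Y$. Finite second moments are needed only so that every term in the expansion is finite.

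In the application to (T3) of Proposition~\ref{prop:ic_full}, we take $X=g_\epsilon(\hat a)$ or $X=\hat a$, and $Y=\pi$. Independence holds there because $g_\epsilon$ is a deterministic map of $\hat a$ and $\hat a\perp\pi$. The lemma then yields the two variance expressions used in the proof. These share the term $\bar a^2\Var(\pi)$, because $\E[g_\epsilon(\hat a)]=\E[\hat a]=\bar a$, so that term cancels in the variance reduction.
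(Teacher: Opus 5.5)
Your proof is correct and follows the paper's argument: expand $\Var(XY)=\E[X^2Y^2]-\E[XY]^2$, factor both moments using independence, then add and subtract $\E[X]^2\E[Y^2]$ to regroup. Your extra care about integrability, and your check that $X^2$ and $Y^2$ are independent, is a harmless refinement of the paper's one-line computation.
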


\begin{proof}
$\Var(XY)=\E[X^2Y^2]-\E[XY]^2=\E[X^2]\E[Y^2]-\E[X]^2\E[Y]^2=(\Var(X)+\E[X]^2)\E[Y^2]-\E[X]^2\E[Y]^2=\Var(X)\E[Y^2]+\E[X]^2\Var(Y)$.
\end{proof}

\textbf{Tightness.} Each hypothesis binds. Tier~1: at $\alpha_j=0$ the variance penalty vanishes and Tier~1 fails weakly when $c=0$. Tier~2: $\kappa=0\Rightarrow b^*\to\infty$. Tier~3: $\Var(\hat a_j)=0$ removes any variance to reduce, and $\rho_j^*=0$ disconnects payment from $\hat a$. The condition $\hat a_j\perp\pi$ (or $\pi\equiv 1$) is used in Tier~3; the general-$\pi$ analysis (Remark~\ref{rmk:general_pi}) shows the SNR$^2$ ratio bound is tight in the uniform-demand limit and strictly bounded above otherwise.

\subsection{SNR-squared manipulation--welfare ratio}
\label{app:ic_snr}

\begin{corollary}[SNR$^2$ manipulation--welfare ratio]
\label{cor:snr_ratio}
Under $\pi\equiv 1$, the ratio of the maximum Tier~3 manipulation gain (at $\epsilon=1$) to the welfare loss from Proposition~\ref{prop:welfare_loss} is
\[
\frac{\Delta U_j^{\max}}{\mathcal{L}_j}=\Big(\frac{\mathcal{I}_j}{1-\mathcal{I}_j}\Big)^2=\SNR_j^2.
\]
\end{corollary}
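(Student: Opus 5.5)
The plan is to evaluate both quantities in closed form under $\pi\equiv 1$, take their ratio, and then specialize to the Gaussian parametrization for the second equality. I work at the interior benchmark rate $\rho_j^*=\mathcal{I}_j/\alpha_j$ of \Cref{thm:optimal_rho}, which is the rate fixed in \Cref{prop:ic_full}. I also assume $0<\mathcal{I}_j<1$, so that $\rho_j^*>0$ and $\mathcal{L}_j>0$ and the ratio is well defined.

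\textbf{Step 1: the maximum manipulation gain.} Start from the Tier~3 formula in \Cref{prop:ic_full} (equivalently \Cref{prop:tier3}). The factor $\epsilon(2-\epsilon)$ is increasing on $(0,1]$, so its maximum over $\epsilon\in(0,1]$ is at $\epsilon=1$, where it equals $1$. With $\E[\pi^2]=1$ this gives
\[
\Delta U_j^{\max}=\frac{\alpha_j}{2}(\rho_j^*)^2\Var(\hat a_j)=\frac{\mathcal{I}_j^2}{2\alpha_j}\Var(\hat a_j).
\]

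\textbf{Step 2: the welfare loss.} By \Cref{prop:welfare_loss}, $\mathcal{L}_j=\Var(v_j)(1-\mathcal{I}_j)^2/(2\alpha_j)$. Under $\pi\equiv 1$ we have $v_j=\hat a_j$, so $\Var(v_j)=\Var(\hat a_j)$. The prefactor $\Var(\hat a_j)/(2\alpha_j)$ is therefore common to both expressions and cancels, leaving the ratio $\big(\mathcal{I}_j/(1-\mathcal{I}_j)\big)^2$.

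\textbf{Step 3: the SNR form.} Under additive Gaussian noise (\Cref{app:assumptions}), $\mathcal{I}_j=\SNR_j/(1+\SNR_j)$. Then $1-\mathcal{I}_j=1/(1+\SNR_j)$, so $\mathcal{I}_j/(1-\mathcal{I}_j)=\SNR_j$ and squaring gives $\SNR_j^2$. More generally, the identity $\mathcal{I}/(1-\mathcal{I})=\SNR$ holds whenever the $\SNR$ form of informativeness holds, for example on the calibrated signal in the multiplicative-bias regime. I will state that this equality is conditional on the Gaussian (or calibrated) parametrization, while the $\mathcal{I}$-form ratio holds for any noise distribution with finite second moments.

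\textbf{Main obstacle.} The delicate point is making sure both expressions use the same variance normalization and the same $\rho_j^*$. In particular:
\begin{itemize}
\item $\mathcal{L}_j$ is defined via $\Var(v_j)$, while the manipulation gain is defined via $\Var(\hat a_j)\E[\pi^2]$; these coincide only because $\pi\equiv 1$.
\item If $\mathcal{I}_j/\alpha_j>1$, the clipped rate replaces $\mathcal{I}_j/\alpha_j$ and the clean ratio fails.
\end{itemize}
I will therefore note explicitly the interior-rate hypothesis $\mathcal{I}_j\le\alpha_j$ and the excluded degenerate cases $\mathcal{I}_j\in\{0,1\}$.
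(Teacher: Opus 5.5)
Your proposal is correct and follows the paper's proof essentially step for step. You evaluate the Tier~3 gain at $\epsilon=1$ with $\E[\pi^2]=1$, use $\Var(v_j)=\Var(\hat a_j)$ under $\pi\equiv 1$ so that the common factor $\Var(\hat a_j)/(2\alpha_j)$ cancels, and convert via $\mathcal{I}_j=\SNR_j/(1+\SNR_j)$. Your explicit caveats are sensible additions that the paper leaves implicit: the unclipped rate $\rho_j^*=\mathcal{I}_j/\alpha_j$, the condition $0<\mathcal{I}_j<1$, and the fact that the final $\SNR_j^2$ equality needs the Gaussian or calibrated parametrization.
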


\begin{proof}
From Proposition~\ref{prop:ic_full}, $\Delta U_j^{\max}=(\alpha_j/2)(\rho_j^*)^2\Var(\hat a_j)=\mathcal{I}_j^2\Var(\hat a_j)/(2\alpha_j)$ at $\E[\pi^2]=1$. Under $\pi\equiv 1$, $\Var(v_j)=\Var(\hat a_j)$. Substituting into Proposition~\ref{prop:welfare_loss}: $\mathcal{L}_j=\Var(\hat a_j)(1-\mathcal{I}_j)^2/(2\alpha_j)$. The ratio is $\mathcal{I}_j^2/(1-\mathcal{I}_j)^2=\SNR_j^2$.
\end{proof}

\begin{remark}[General-$\pi$ upper bound]
\label{rmk:general_pi}
When $\pi$ is stochastic and $\hat a_j\perp\pi$, Lemma~\ref{lem:prodvar} gives $\Var(v_j)=\Var(\hat a_j)\E[\pi^2]+\bar a^2\Var(\pi)\geq\Var(\hat a_j)\E[\pi^2]$, so
\[
\frac{\Delta U_j^{\max}}{\mathcal{L}_j}=\SNR_j^2\cdot\frac{\Var(\hat a_j)\E[\pi^2]}{\Var(\hat a_j)\E[\pi^2]+\bar a^2\Var(\pi)}\leq\SNR_j^2,
\]
with equality iff $\bar a=0$ or $\Var(\pi)=0$. The bound is therefore conservative in the natural direction: $\SNR_j^2$ is an upper bound on the manipulation--welfare ratio, attained in the uniform-demand limit.
\end{remark}

The policy implication is that monitoring expenditure should scale super-linearly in attribution precision: the most precisely measured creators are simultaneously the lowest-loss and the most exposed to Tier~3 manipulation.

\subsection{Statistical detection of Tier-3 manipulation}
\label{app:ic_detect}

Although Tier~3 cannot be \emph{deterred} by any linear contract, it can often be \emph{detected} from a finite history of reported $\hat a$.

\begin{remark}[Variance-ratio test]
\label{rmk:tier3_detect}
Under a Gaussian model, a variance-ratio test on the empirical distribution of reported $\hat a$ over $T$ periods has statistic
\[
T_{\mathrm{stat}}=(T-1)\widehat{\Var}(\hat a^{\mathrm{rep}})/\Var(\hat a^{\mathrm{cal}}),
\]
distributed $\chi^2_{T-1}$ under truthful reporting, where $\Var(\hat a^{\mathrm{cal}})$ is the calibration-time variance. Power $\geq 0.95$ at $T\geq 50$ for shrinkage $\epsilon\geq 0.3$.
\end{remark}

Combined with platform-computed attribution, this yields a layered defense: institutional prevention plus statistical detection.

\subsection{Mussa--Rosen screening with private risk aversion}
\label{app:ic_screening}

Suppose risk aversion $\alpha_j$ is private information, drawn from a distribution $G$ on $[\underline\alpha,\bar\alpha]$ with density $g$ satisfying the monotone hazard rate (MHR) condition: $\varphi(\alpha):=\alpha+G(\alpha)/g(\alpha)$ is strictly increasing.

\begin{proposition}[Optimal contract menu under private types]
\label{prop:screening_full}
Under MHR, the platform's revenue-maximizing IC menu offers
\[
\rho_j^{\mathrm{SB}}(\alpha)=\min\!\Big(1,\;\frac{\mathcal{I}_j}{\varphi(\alpha)}\Big),
\]
implementable in dominant strategies without ironing. Per-type deadweight loss from screening (where $\rho^{\mathrm{SB}}<1$) is
\[
\mathrm{DWL}_j(\alpha)=\frac{\alpha\Var(v_j)}{2}\big(\rho_j^*-\rho_j^{\mathrm{SB}}\big)^2=\frac{\mathcal{I}_j^2\Var(v_j)\,(\varphi(\alpha)-\alpha)^2}{2\alpha\,\varphi(\alpha)^2}.
\]
\end{proposition}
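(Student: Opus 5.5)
We treat this as a standard one-dimensional screening problem in the style of \citet{mussa1978monopoly}, with $\alpha$ as the privately known type. The menu is $\{(f(\alpha),\rho(\alpha))\}$. A type-$\alpha$ creator who reports $\alpha'$ gets $U(\alpha';\alpha)=f(\alpha')+\rho(\alpha')\E[v_j]-\frac{\alpha}{2}\rho(\alpha')^2\Var(v_j)$ by \eqref{eq:creator_utility}. For the platform we take the natural screening analogue of the criterion behind \Cref{thm:optimal_rho}: the alignment term $\rho\Cov(v_j,a^*_j\pi)=\rho\,\mathcal{I}_j\Var(v_j)$, using the covariance identity from the proof of \Cref{prop:welfare_loss}, minus the expected transfer. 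Written this way, the expected transfer equals the creator's risk premium plus an information rent. Fixing this objective explicitly is the first step, because the statement leaves it implicit.

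\textbf{Step 1: IC via the envelope theorem.} Single crossing holds because $\partial^2 U/\partial\alpha\,\partial\rho=-\rho\Var(v_j)<0$. So IC is equivalent to two conditions: $\rho(\cdot)$ is nonincreasing, and the envelope condition $U'(\alpha)=-\tfrac12\rho(\alpha)^2\Var(v_j)$ holds. Utility therefore decreases in $\alpha$, and the most risk-averse type $\bar\alpha$ is the one whose IR binds. This gives
\[
U(\alpha)=\bar U_j+\int_\alpha^{\bar\alpha}\tfrac12\rho(s)^2\Var(v_j)\,ds .
\]

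\textbf{Step 2: rent extraction and pointwise maximization.} Integrate the rent against $g$ and swap the order of integration (Fubini). The expected rent becomes $\int\tfrac12\rho(\alpha)^2\Var(v_j)\,\frac{G(\alpha)}{g(\alpha)}\,g(\alpha)\,d\alpha$. The platform's virtual surplus for type $\alpha$ is then
\[
\rho\,\mathcal{I}_j\Var(v_j)-\tfrac12\varphi(\alpha)\rho^2\Var(v_j).
\]
This is strictly concave in $\rho$, with unconstrained maximizer $\mathcal{I}_j/\varphi(\alpha)$. Projecting onto the contract space $\rho\in[0,1]$ gives $\min(1,\mathcal{I}_j/\varphi(\alpha))$ when $\mathcal{I}_j\ge 0$, which is the relevant case.

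\textbf{Step 3: checking monotonicity.} MHR makes $\varphi$ strictly increasing, so the pointwise solution is nonincreasing in $\alpha$. The clip at $1$ preserves this. Hence the relaxed solution satisfies the IC monotonicity constraint, and no ironing is needed. "Dominant strategies" is immediate, since there is a single agent per contract.

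\textbf{Step 4: deadweight loss.} We reuse the completed square \eqref{eq:benchmark_completed_square} with $\alpha$ in place of $\alpha_j$. The type-$\alpha$ criterion loss from using $\rho^{\mathrm{SB}}$ instead of the full-information $\rho^*=\mathcal{I}_j/\alpha$ is $\frac{\alpha\Var(v_j)}{2}(\rho^*-\rho^{\mathrm{SB}})^2$. On the region where $\rho^{\mathrm{SB}}<1$, we have $\rho^*-\rho^{\mathrm{SB}}=\mathcal{I}_j(\varphi-\alpha)/(\alpha\varphi)$, and squaring gives the stated expression.

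\textbf{Main obstacle.} The real work is Step 0/2, i.e.\ the objective. Since $\rho^*$ itself may exceed $1$, the DWL formula is only valid where neither rate is clipped, and the statement's "where $\rho^{\mathrm{SB}}<1$" should be read with $\rho^*\le 1$ implicitly assumed too. We would also state the assumption $\mathcal{I}_j\ge0$, which guarantees that the lower clip never binds.
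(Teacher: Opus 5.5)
Your proposal is correct and follows the paper's proof essentially step for step. The shared steps are: single crossing; the envelope condition with IR binding at $\bar\alpha$; integration by parts down to the pointwise problem $\rho\,\mathcal{I}_j\Var(v_j)-\tfrac{\varphi(\alpha)}{2}\rho^2\Var(v_j)$; monotonicity from MHR, so no ironing is needed; and the DWL by direct substitution of $\rho^*=\mathcal{I}_j/\alpha$ and $\rho^{\mathrm{SB}}=\mathcal{I}_j/\varphi(\alpha)$. Your added caveats are refinements the paper leaves implicit rather than departures from its argument: the explicit alignment-minus-transfer objective (pure revenue maximization would give $\rho=0$ by \Cref{lem:ir_only}), the assumption $\mathcal{I}_j\ge 0$, and reading the DWL formula with the unclipped $\rho^*$.
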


\begin{proof}
Single-crossing: $\partial^2 U/\partial\rho\partial\alpha=-\rho\Var(v_j)<0$ for $\rho>0$. Envelope: $(U^*)'(\alpha)=-\rho(\alpha)^2\Var(v_j)/2\leq 0$. Substituting and integrating by parts gives the pointwise problem $\max_\rho[\rho\mathcal{I}_j\Var(v_j)-(\varphi(\alpha)/2)\rho^2\Var(v_j)]$, with FOC $\rho^{\mathrm{SB}}=\mathcal{I}_j/\varphi(\alpha)$. MHR gives $\varphi'>0$, so monotonicity holds without ironing. The cap $\min(1,\cdot)$ enforces $\rho\leq 1$, binding for $\varphi(\alpha)<\mathcal{I}_j$. The DWL formula follows by substituting $\rho^*=\mathcal{I}_j/\alpha$ and $\rho^{\mathrm{SB}}=\mathcal{I}_j/\varphi(\alpha)$.
\end{proof}

\textbf{Separation of signal IC and menu IC.} When the manipulation game is timed sequentially (report type $\to$ receive contract $\to$ observe $\hat a$ $\to$ choose manipulation), the creator's utility Hessian over $(b,\epsilon,\hat\alpha)$ is block-diagonal between the post-contractual signal block $(b,\epsilon)$ and the pre-contractual type block $\hat\alpha$. The platform's design problem therefore decouples: institutional/monitoring controls handle the signal channel (\Cref{app:ic_hierarchy}), and the Mussa--Rosen menu of Proposition~\ref{prop:screening_full} handles the type channel. Under stochastic $\pi$ with $\Cov(\hat a_j\pi,\pi)\neq 0$, off-diagonal coupling reappears and the separation is at most approximate.

\subsection{Dynamic self-policing and competition invariance}
\label{app:ic_dynamic}

Suppose detected Tier~2 bias triggers a transition to inactive status, with continuation-value gap $V^A-V^I>0$ (positive in both thin- and thick-margin regimes by the dynamic results in Appendix~\ref{app:dynamics}). Under constant detection $\Pr(\text{detect}\mid b>0)=\Pr_0$, equilibrium is bang-bang: $b\in\{0,b_{\mathrm{static}}^*\}$, with truthfulness requiring the discount factor to satisfy
\[
\delta\geq\underline\delta_j^{\mathrm{const}}:=\frac{(\rho_j^*\bar\pi)^2}{2(\kappa+\alpha_j(\rho_j^*)^2\sigma_\pi^2)\Pr_0(V^A-V^I)}.
\]
Under proportional detection $\Pr(\text{detect}\mid b)=\phi_d b$ the unique interior bias is $b_{j,\mathrm{dyn}}^*=\rho_j^*\bar\pi/[\kappa+\alpha_j(\rho_j^*)^2\sigma_\pi^2+\delta\phi_d(V^A-V^I)]>0$, strictly decreasing in $\delta$, $\phi_d$, and $V^A-V^I$. Exact truthfulness is unattainable under proportional detection, but the bias is suppressed below the static optimum.

\begin{proposition}[Manipulation incentives are unchanged by competition]
\label{prop:competition_ic}
Under exclusive licensing and Bertrand competition with $K\geq 2$ platforms, the three-tier IC hierarchy of Proposition~\ref{prop:ic_full} is invariant to the number of competitors $K$ and to $\bar U_j$.
\end{proposition}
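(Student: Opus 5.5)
The plan is to show that each tier's utility change in Proposition~\ref{prop:ic_full} depends only on objects that competition does not touch. Fix a creator $j$ matched, under exclusive licensing, to the platform $k$ that wins the Bertrand equilibrium of \Cref{thm:bertrand}.

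\textbf{Step 1: the winning contract.} In that equilibrium each platform's royalty matches its own informativeness. So the winning contract is $\theta_j^{(k)}=(f_j^{(k)},\rho_j^{(k)})$ with $\rho_j^{(k)}=\mathcal{I}^{(k)}/\alpha_j$. Competition moves only $f_j^{(k)}$: the fee is pinned by the runner-up's offer and by $\bar U_j$. This is the same fee-only shift already noted after \Cref{thm:optimal_rho} for surplus-splitting variants. Exclusivity means the creator's payment is $f_j^{(k)}+\rho_j^{(k)}v_j^{(k)}$ and nothing else. The creator therefore faces exactly the single-platform problem of \Cref{app:ic_hierarchy}, with $(\mathcal{I}_j,\sigma_{v,j}^2)$ replaced by $(\mathcal{I}^{(k)},V_j^{(k)})$.

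\textbf{Step 2: the fee drops out.} All three manipulations are post-contractual. Each changes only the distribution of the reported signal: noise $\eta$, bias $b$, or shrinkage $g_\epsilon$. None changes $f_j^{(k)}$. The creator's utility is additively separable in the fee, so $f_j^{(k)}$ cancels in every difference $\Delta U_j$. Since $\bar U_j$ and $K$ enter the equilibrium only through $f_j^{(k)}$, and through which platform wins, they disappear from the manipulation payoffs. I would then rerun the three computations of the proof of Proposition~\ref{prop:ic_full} with $\rho_j^{(k)}$ in place of $\rho_j^*$:
\begin{itemize}
\item Tier~1 remains strictly negative.
\item The Tier~2 maximizer $b_j^*$ keeps the same formula.
\item The Tier~3 gain $\frac{\alpha_j}{2}(\rho_j^{(k)})^2\epsilon(2-\epsilon)\Var(\hat a_j^{(k)})\E[\pi^2]$ remains strictly positive.
\end{itemize}
The signs and the ordering of the hierarchy are therefore unchanged.

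\textbf{Main obstacle: deviating to another platform.} One must rule out that manipulation alters which platform wins, or that the creator could manipulate and then switch. I would use the timing of the sequential game: contracts are signed before $\hat a$ is observed. Exclusivity then removes any outside offer after signing, so the competitive stage is sunk when manipulation is chosen. A second point needs care: the identity of the winner can depend on $K$, through the maximum of $\mathcal{I}^{(k)}$. The statement should therefore be read as \emph{the hierarchy (tier signs, deterrence conditions, functional forms) is invariant}, with numerical magnitudes indexed by the winner's informativeness. Under the homogeneous-demand assumption of Proposition~\ref{prop:moat}, and the Gaussian normalization $V_j^{(k)}=\E[\pi^2]\Var(\hat a_j^{(k)})$, I would state the invariance relative to the winning platform's primitives. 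That invariance is exact.
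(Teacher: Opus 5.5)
Your proposal is correct and follows the paper's argument. The Bertrand royalty $\rho_j^{*(k)}=\mathcal{I}_j^{(k)}/\alpha_j$ comes from a first-order condition in which $\hat U_j$ (and hence $\bar U_j$ and $K$) enters only additively through the fee, and the three manipulation payoffs depend only on the royalty and signal primitives, so the fee cancels. Your added caveat that $K$ can change \emph{which} platform wins, together with the post-signing timing argument, is a legitimate sharpening that the paper's one-line proof leaves implicit; exact invariance relative to the winner's primitives does not actually require homogeneous demand.
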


\begin{proof}
By Theorem~\ref{thm:bertrand} (multi-platform Bertrand equilibrium), $\rho_j^{*(k)}=\mathcal{I}_j^{(k)}/\alpha_j$ is independent of $K$ and $\bar U_j$ because the effective outside option $\hat U_j$ enters platform $k$'s objective additively through $f_j^{(k)}$ and cancels in the $\rho$-derivative. The Tier~1, Tier~2, and Tier~3 manipulation utilities depend only on $(\rho_j^{*(k)},\alpha_j,\Var(\hat a_j),\E[\pi^2],\kappa,\sigma_\pi^2,\Cov(v,\pi))$, none of which depend on $K$ or $\bar U_j$ under exclusive licensing.
\end{proof}

Under non-exclusive licensing the variance of combined per-period payment includes a cross-platform term $2\rho^A\rho^B\Cov(v_j^A,v_j^B)$, which depends on $\rho^B$ and through it on the rival's market structure; IC conditions then become $K$-dependent.

\subsection{Failure modes}
\label{app:ic_failure}

Two boundary cases warrant comment. \emph{Risk-neutral limit:} the cap $\rho_j\leq 1$ binds at $\alpha_j=\mathcal{I}_j$; for $\alpha_j<\mathcal{I}_j$, $\rho_j^*=1$. At $\alpha_j=0$ the Tier~1 variance penalty vanishes, and Tier~1 IC fails weakly when $c=0$. \emph{Pairwise additive pooling:} if creator $i$ pools their attribution report with creator $j$'s, reporting $\hat a_i+\hat a_j$, creator $i$'s payoff variance changes by $(\rho_i^*)^2[\Var(\hat a_j)+2\Cov(\hat a_i,\hat a_j)]$, so additive pooling is unprofitable iff $\Cov(\hat a_i,\hat a_j)>-\Var(\hat a_j)/2$. In music markets, where stylistic similarity drives positive cross-catalog correlation in measured attribution, this condition is typically satisfied and pooling is deterred.

%% file: section_arxiv/D_uniform_welfare.tex
% !TEX root = ../main.tex
\section{Welfare cost of uniform contracts}
\label{app:uniform}

\begin{proof}[Proof of Proposition~\ref{prop:selective_exit}]
We prove each claim in turn.

\textbf{Claim 1 (Selective exit of high-outside-option creators).}
Under a uniform contract $(\bar{f}, \bar{\rho})$, creator $j$ participates if and only if $U_j(\bar{f}, \bar{\rho}) \geq \bar{U}_j$. Expanding:
\begin{equation*}
\bar{f} + \bar{\rho}\,\E[v_{j}] - \frac{\alpha_j}{2}\bar{\rho}^2\,\Var(v_{j}) \geq \bar{U}_j.
\end{equation*}
Rearranging, creator $j$ exits if $\bar{U}_j > \bar{f} + \bar{\rho}\,\E[v_{j}] - \frac{\alpha_j}{2}\bar{\rho}^2\,\Var(v_{j})$. The right-hand side is the utility the uniform contract provides. Creators with high $\bar{U}_j$ (strong standalone careers) are the first for whom this inequality is violated. In the music setting, distinctive creators tend to have both high $\bar{U}_j$ (their original recordings command high streaming revenue and sync fees) and high complementary value in $u$ (their unique styles cannot be replicated by combining others). Hence $\mathcal{S}^{\mathrm{exit}}$ is disproportionately composed of the most valuable creators.

Moreover, when $\bar{\rho}$ is set to the population-average optimal royalty, creators with high $\SNR_j$ (who should receive high $\rho_j^*$ under Theorem~\ref{thm:optimal_rho}) are offered a $\bar{\rho}$ that is \emph{too low} for them---capturing less of their high attributed value---while creators with low $\SNR_j$ are offered a $\bar{\rho}$ that is \emph{too high}---imposing excessive attribution noise risk. The mismatch is worst at the extremes, and the creators most harmed by the mismatch are the distinctive ones whose optimal $\rho_j^*$ is far above $\bar{\rho}$.

\textbf{Claim 2 (Superlinear loss from joint exit).}
For any set $A \subseteq \mathcal{S}$, the loss from removing $A$ simultaneously is:
\begin{equation*}
u(x, r) - u(x \setminus A, r).
\end{equation*}
If $u$ is superadditive across the creators in $A$ (i.e., these creators are complements), then by definition:
\begin{equation*}
u(x, r) - u(x \setminus A, r) \geq \sum_{j \in A}\left[u(x, r) - u(x \setminus \{j\}, r)\right]
\end{equation*}
when $|A| \geq 2$ and the creators in $A$ interact superadditively. The inequality is strict when complementarities are strict. Since $\mathcal{S}^{\mathrm{exit}}$ is concentrated among distinctive creators (Claim 1), and distinctive creators tend to be complements (a reggae producer and a classical composer together enable cross-genre generation that neither supports alone), the loss is superlinear in $|\mathcal{S}^{\mathrm{exit}}|$.

\textbf{Claim 3 (Welfare gap decomposition).}
Under the optimal creator-specific contracts from Theorem~\ref{thm:optimal_rho}, all creators participate (by construction, $\rho_j^*$ and $f_j^*$ are set so that $U_j(\theta_j^*) = \bar{U}_j$), and the platform's welfare is $W^* = \sum_j \E[\pi\, a^*_{j}] - \sum_j \bar{U}_j - \sum_j R_j^*$, where $R_j^* = \frac{\alpha_j}{2}(\rho_j^*)^2 \Var(v_{j})$ is the minimized risk premium.

Under the uniform contract, the platform's welfare has two sources of loss:
\begin{enumerate}
 \item For creators who remain ($j \notin \mathcal{S}^{\mathrm{exit}}$), the risk premium is $R_j^{\mathrm{uniform}} = \frac{\alpha_j}{2}\bar{\rho}^2 \Var(v_{j})$, which is generically larger than $R_j^*$ since $\bar{\rho} \neq \rho_j^*$. The excess is $\mathcal{L}_j^{\mathrm{uniform}} = R_j^{\mathrm{uniform}} - R_j^*$.
 \item For creators who exit ($j \in \mathcal{S}^{\mathrm{exit}}$), the platform loses their entire contribution, which by Claim~2 is superlinear.
\end{enumerate}
Summing these two terms yields the uniform-contract welfare gap.
\end{proof}

%% file: section_arxiv/D2_competition.tex
% !TEX root = ../main.tex
\section{Multi-platform Bertrand competition}
\label{app:competition}

This appendix gives the formal results behind the attribution-moat analysis (Section~\ref{sec:moat}): the proof of the multi-platform Bertrand equilibrium, equilibrium uniqueness, the moat result with both homogeneous and heterogeneous demand, and the welfare comparisons. We use the notation of Section~\ref{sec:setup} extended across $K\geq 2$ platforms: platform $k$ has creator-specific informativeness $\mathcal{I}_j^{(k)}\in(0,1)$ and signal variance $V_j^{(k)}:=\Var(v_{j}^{(k)})$, the multi-platform analogue of $\Var(v_{j})$ from \S\ref{sec:setup} with $v_{j}^{(k)}=\hat a_{j}^{(k)}\pi$.

\subsection{Setting and surplus decomposition}

We assume \textbf{(C1)} exclusive licensing (each creator licenses to at most one platform per period); \textbf{(C2)} heterogeneous informativeness $\mathcal{I}_j^{(k)}$ across platforms; \textbf{(C3)} additive separability of platform $k$'s total surplus across the creators it acquires; and, where indicated, \textbf{(C4)} centered Gaussian noise with $\E[\hat a^{(k)}]=\E[a^*]=0$.

By Theorem~\ref{thm:optimal_rho} applied to platform $k$, the per-creator surplus under royalty $\rho$ is
\[
\Sigma_j^{(k)}(\rho)=\mu_j^{(k)}+\rho\mathcal{I}_j^{(k)} V_j^{(k)}-\frac{\alpha_j}{2}\rho^2 V_j^{(k)},
\]
where $\mu_j^{(k)}$ is the base value from including creator $j$'s catalog. The maximizer is $\rho_j^{*(k)}=\mathcal{I}_j^{(k)}/\alpha_j$ with maximized surplus
\[
\Sigma_j^{*(k)}=\mu_j^{(k)}+\frac{(\mathcal{I}_j^{(k)})^2 V_j^{(k)}}{2\alpha_j}.
\]
Under (C4) and centered attribution, the V-cancellation identity $V_j^{(k)}=\sigma_{a^*}^2\E[\pi^2]/\mathcal{I}_j^{(k)}$ gives the simpler form
\[
\Sigma_j^{*(k)}-\mu_j^{(k)}=\frac{\mathcal{I}_j^{(k)}\sigma_{a^*}^2\E[\pi^2]}{2\alpha_j},
\]
linear in $\mathcal{I}_j^{(k)}$.

\subsection{Bertrand equilibrium}

\begin{theorem}[Multi-platform Bertrand equilibrium]
\label{thm:bertrand}
Under (C1)--(C4), the simultaneous Bertrand--Nash equilibrium for creator $j$ across $K\geq 2$ platforms has the following structure.
\textbf{(i) Royalty.} Each platform offers $\rho_j^{*(k)}=\mathcal{I}_j^{(k)}/\alpha_j$, independent of $K$ and rivals' offers.
\textbf{(ii) Allocation.} Creator $j$ joins $k_j^*=\arg\max_k \Sigma_j^{*(k)}$ if $\Sigma_j^{*(k_j^*)}\geq\bar U_j$, and otherwise exits.
\textbf{(iii) Rents.} The winning platform earns $\Pi_j=\Sigma_j^{*(k_j^*)}-\hat U_j$, where $\hat U_j=\max(\bar U_j,\Sigma_j^{*(2)})$ and $\Sigma_j^{*(2)}$ is the second-highest maximized surplus; the creator earns $U_j^*=\hat U_j$.
\end{theorem}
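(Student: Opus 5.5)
The plan is to treat the game for creator $j$ as a first-price-style auction in which each platform $k$ bids a contract $\theta_j^{(k)}=(f_j^{(k)},\rho_j^{(k)})$, and to separate the royalty dimension from the transfer dimension. By (C3) and (C1), platform $k$'s payoff from creator $j$ is additively separable from its other acquisitions and nonzero only if it wins $j$. Using the surplus decomposition above, a winning platform's profit is $\Sigma_j^{(k)}(\rho)-U_j(\theta_j^{(k)})$. Here $U_j(\theta)=f+\rho\E[v_j^{(k)}]-\tfrac{\alpha_j}{2}\rho^2V_j^{(k)}$ is the utility delivered to the creator, so the fixed fee acts as a pure transfer. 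This is the same device used in \Cref{lem:ir_only} and \Cref{thm:optimal_rho}.

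\textbf{Part (i).} Fix any utility level $\bar u$ that platform $k$ must deliver to win (determined by rivals' offers and $\bar U_j$). Then $k$ chooses $\rho$ to maximize $\Sigma_j^{(k)}(\rho)-\bar u$, and $\bar u$ is absorbed entirely by $f_j^{(k)}$. Because $\bar u$ enters additively, it drops out of the $\rho$-derivative. The maximization therefore reduces to the strictly concave criterion of \Cref{thm:optimal_rho}, giving $\rho_j^{*(k)}=\mathcal{I}_j^{(k)}/\alpha_j$. The interior formula applies since $\mathcal{I}_j^{(k)}\in(0,1)$ by (C2); if $\alpha_j<\mathcal{I}_j^{(k)}$, the rate is clipped. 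This rate is independent of $K$ and of rivals' offers. The argument holds for every $\bar u$, so any best response, and hence any equilibrium, uses this royalty.

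\textbf{Parts (ii)--(iii).} With royalties fixed, each platform is effectively bidding creator utility, and its maximum willingness to deliver is $\Sigma_j^{*(k)}$, as computed in the surplus decomposition. This is Bertrand competition with asymmetric valuations. I will verify the equilibrium in the following steps.

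1. Construct the candidate: the top platform $k_j^*$ offers utility $\hat U_j=\max(\bar U_j,\Sigma_j^{*(2)})$, and each rival offers its full $\Sigma_j^{*(k)}$ (or any offer at or below it).

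2. Check deviations. Offering less than $\hat U_j$ loses the creator either to the runner-up or to the outside option. Offering more only raises the transfer. A losing rival could win only by delivering more than its own $\Sigma_j^{*(k)}$, which gives it negative profit.

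3. Derive the consequences. Allocation follows from the creator's participation constraint and utility-maximizing choice. When $\Sigma_j^{*(k_j^*)}<\bar U_j$, no platform can profitably meet the outside option, so the creator exits. Rents are $\Pi_j=\Sigma_j^{*(k_j^*)}-\hat U_j$ and $U_j^*=\hat U_j$.

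4. Address uniqueness of outcomes. Standard undercutting arguments show that in any equilibrium the winner pays exactly $\hat U_j$. I will rule out weakly dominated overbidding by rivals via the usual refinement.

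The main obstacle is tie-breaking and the open-set issue typical of Bertrand games. If the winner must strictly exceed $\Sigma_j^{*(2)}$, the best response is not attained. I will handle this with the standard convention that ties are broken in favor of the higher-surplus platform, or equivalently treat the result as the limit of $\varepsilon$-equilibria. I will also state explicitly that losing platforms bid their valuations, which is a weakly dominated but payoff-irrelevant choice that supports the price $\Sigma_j^{*(2)}$.
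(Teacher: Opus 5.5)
Your proposal is correct and follows the paper's own proof: the fixed fee absorbs whatever utility level must be delivered, so the royalty first-order condition on $\Sigma_j^{(k)}(\rho)$ is independent of $\hat U_j$, $K$ and rivals (Step~1), and what remains is asymmetric Bertrand competition in which the top-surplus platform delivers $\max(\bar U_j,\Sigma_j^{*(2)})$ (Steps~2--3). Your handling of clipping, tie-breaking and the refinement against weakly dominated overbidding is more careful than the paper's ``Bertrand limit'' phrasing. One correction to your candidate profile: when $\Sigma_j^{*(2)}>\bar U_j$, at least one rival (the runner-up) must offer exactly $\Sigma_j^{*(2)}$, and only lower-ranked rivals may offer less, since otherwise the winner's best response is to deliver strictly less than $\hat U_j$.
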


\begin{proof}[Proof of Theorem~\ref{thm:bertrand}]
\textit{Step 1 (royalty FOC).} Under the binding IR constraint, $f_j^{(k)}=\hat U_j-\rho_j\E[v_j^{(k)}]+(\alpha_j/2)\rho_j^2 V_j^{(k)}$, where $\hat U_j=\max(\bar U_j,\max_{l\neq k}\Sigma_j^{*(l)})$ is the effective outside option. Platform $k$'s profit from creator $j$ is $\Pi_j^{(k)}(\rho)=\Sigma_j^{(k)}(\rho)-\hat U_j$. Since $\hat U_j$ enters as a subtracted constant, the FOC $\partial \Sigma_j^{(k)}/\partial\rho=0$ is independent of $\hat U_j$, $K$, and rival platforms; hence $\rho_j^{*(k)}=\mathcal{I}_j^{(k)}/\alpha_j$ in equilibrium. The SOC is $-\alpha_j V_j^{(k)}<0$, confirming a maximum. Competition affects only the level of rents (through $f_j$), not the slope of incentives (through $\rho_j$).

\textit{Step 2 (Bertrand rent).} The creator joins $k_j^*=\arg\max_k \Sigma_j^{*(k)}$ if at all. To attract $j$, platform $k_j^*$ must offer $U_j\geq\hat U_j=\max(\bar U_j,\Sigma_j^{*(2)})$, where $\Sigma_j^{*(2)}$ is the second-highest surplus. Any $U_j>\hat U_j$ is suboptimal; any $U_j<\hat U_j$ loses the creator. In the Bertrand limit, the winning platform sets $U_j=\hat U_j$ and earns $\Pi_j=\Sigma_j^{*(k^*)}-\hat U_j$.

\textit{Step 3 (allocation).} Creator $j$ joins $k_j^*$ if $\Sigma_j^{*(k^*)}\geq\bar U_j$; otherwise no platform can profitably cover the outside option, and $j$ exits. Equilibrium creator payoff: $U_j^*=\hat U_j=\max(\bar U_j,\Sigma_j^{*(2)})$.
\end{proof}

\begin{proposition}[Equilibrium uniqueness]
\label{prop:bertrand_unique}
The Bertrand equilibrium of Theorem~\ref{thm:bertrand} is unique in pure strategies.
\end{proposition}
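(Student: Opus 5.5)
Uniqueness is argued component by component: first the royalty slopes, then the allocation, then the rent level. Any pure-strategy equilibrium is a profile of contracts $\theta_j^{(k)}=(f_j^{(k)},\rho_j^{(k)})$ together with the creator's choice.

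\textbf{Step 1 (royalties are pinned).} Suppose a candidate equilibrium gives some platform $k$ a royalty $\rho\neq\mathcal{I}_j^{(k)}/\alpha_j$, and that $k$ attracts $j$. Hold the creator's utility $U_j$ fixed and move $\rho$ to $\rho_j^{*(k)}$, adjusting $f_j^{(k)}$ so that $U_j$ is unchanged. This is still a best response for the creator, and it strictly raises $k$'s profit $\Sigma_j^{(k)}(\rho)-U_j$. The reason is that $\Sigma_j^{(k)}$ is strictly concave, since the second-order condition $-\alpha_jV_j^{(k)}<0$ holds by Step~1 of the proof of \Cref{thm:bertrand}. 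This is a profitable deviation, so the winner's royalty is unique.

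Losing platforms' royalties are payoff-irrelevant. Uniqueness is therefore understood up to the offers losers make off the equilibrium path. Alternatively, I would impose the standard refinement that each platform offers its surplus-maximizing contract, which pins losers at $\rho_j^{*(k)}$ as well.

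\textbf{Step 2 (allocation and rents).} Each platform can deliver creator utility of at most $\Sigma_j^{*(k)}$ without making a loss, so the competition reduces to an asymmetric Bertrand (second-price-like) auction over these values. First, if a platform with $\Sigma_j^{*(k)}<\max_l\Sigma_j^{*(l)}$ wins, then it pays $U_j\ge$ the second-highest surplus. The top platform could offer $U_j+\varepsilon$ and profit, so this is not an equilibrium. Second, if the winner offers $U_j>\hat U_j$, it can profitably lower its offer. Third, if $U_j<\hat U_j$, either the creator prefers exit or a rival undercuts profitably. Hence $U_j=\hat U_j$, the allocation is $k_j^*$, and $f_j$ follows from binding IR, matching \Cref{thm:bertrand}(ii)–(iii).

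\textbf{Main obstacle.} The hard part is ties, $\Sigma_j^{*(1)}=\Sigma_j^{*(2)}$, together with the usual Bertrand issue that losers may bid above their own surplus. With an exact tie the payoff outcome is unique (zero rents, $U_j=\Sigma^{*}$), but the identity of the winner is not. I would therefore state uniqueness generically, assuming no ties, or else modulo tie-breaking. I would also restrict to undominated strategies, which rules out offers with $U_j>\Sigma_j^{*(k)}$ and makes the argument clean.
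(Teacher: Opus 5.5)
Your argument is correct under the refinements you state, but it proves the opposite direction from the paper's proof. The paper runs a deviation check on the candidate profile of \Cref{thm:bertrand}. The winner cannot gain by changing $\rho$ (concavity of $\Sigma_j^{(k)}$) or by offering less than $\hat U_j$. A loser cannot gain by offering more than its own surplus or less than $\hat U_j$. That shows the candidate \emph{is} an equilibrium with a pinned winning royalty, but it never excludes other equilibria. Your Step~1 is the paper's concavity point. The real difference is Step~2, where you start from an arbitrary equilibrium and derive its allocation and rent. That is the direction uniqueness actually requires, and it is what surfaces the caveats the paper omits.

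Those caveats are not cosmetic. Take any $u\in(\hat U_j,\Sigma_j^{*(1)}]$. Let the leader and the runner-up both offer creator utility $u$ at their royalties $\rho_j^{*(k)}$, with the creator breaking ties toward the leader. The leader loses the creator if it offers less. The runner-up could only win by offering more than $u>\Sigma_j^{*(2)}$, i.e.\ at a loss. So this is a pure-strategy equilibrium in which the creator earns $u\neq\hat U_j$. Without a restriction like your undominated-strategies condition, the proposition is therefore false as literally stated: the runner-up's offer exceeds its own surplus, which is exactly what that condition rules out.

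Three repairs to your write-up follow from this.
\begin{itemize}
\item Your ``Second'' claim in Step~2 fails in exactly this example. Impose the undominated-strategies restriction as a hypothesis before Step~2, rather than raising it afterwards.
\item In your ``First'' claim, the operative inequality is $U_j\le\Sigma_j^{*(k)}<\max_l\Sigma_j^{*(l)}$, which follows from the non-leading winner's nonnegative profit. The assertion that it pays at least the second-highest surplus is neither justified nor needed.
\item When $\Sigma_j^{*(2)}>\bar U_j$, the equilibrium exists only if the creator breaks offer ties toward the higher-surplus platform and the runner-up offers exactly $\Sigma_j^{*(2)}$. So the runner-up's offer is pinned, while lower-ranked platforms' offers remain free.
\end{itemize}
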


\begin{proof}
Exhaustive deviation analysis. (a) The winner cannot increase profit by changing $\rho$ (concavity of $\Sigma_j^{(k)}$ in $\rho$). (b) The winner cannot increase profit by lowering $U_j$ below $\hat U_j$ (loses the creator). (c) A loser cannot profitably attract creator $j$ by offering $U_j>\Sigma_j^{*(l)}$ (negative profit). (d) A loser cannot profitably attract creator $j$ by offering $U_j<\hat U_j$ (creator rejects).
\end{proof}

\subsection{Vertical differentiation rent and the moat}

\begin{corollary}[Rent from being the attribution-precision leader]
\label{cor:vd_rent}
Under (C4) and homogeneous demand ($\mu_j^{(k)}=\mu_j$ for all $k$),
\[
\Pi_j=\frac{(\mathcal{I}_j^{(k^*)}-\mathcal{I}_j^{(2)})\sigma_{a^*}^2\E[\pi^2]}{2\alpha_j}.
\]
\end{corollary}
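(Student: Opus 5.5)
The rent formula follows by combining part~(iii) of \Cref{thm:bertrand} with the linear-in-$\mathcal{I}$ form of the maximized surplus derived under (C4). I will first fix the winner and runner-up, then compute the winner's rent from \Cref{thm:bertrand}(iii), and finally read the surplus difference off the simplified expression.

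\textbf{Step 1: maximized surplus.} Start from $\Sigma_j^{*(k)}=\mu_j^{(k)}+(\mathcal{I}_j^{(k)})^2V_j^{(k)}/(2\alpha_j)$. Apply the V-cancellation identity $V_j^{(k)}=\sigma_{a^*}^2\E[\pi^2]/\mathcal{I}_j^{(k)}$. This identity holds under (C4) because $\pi\perp(\hat a,a^*)$ with centering gives $V_j^{(k)}=\E[\pi^2]\Var(\hat a^{(k)})$. Classical noise also gives $\Cov(\hat a^{(k)},a^*)=\sigma_{a^*}^2$, so $\Var(\hat a^{(k)})=\sigma_{a^*}^2/\mathcal{I}_j^{(k)}$. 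The result is
\[
\Sigma_j^{*(k)}=\mu_j+\mathcal{I}_j^{(k)}\sigma_{a^*}^2\E[\pi^2]/(2\alpha_j).
\]

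\textbf{Step 2: ordering.} Homogeneous demand removes the $\mu_j^{(k)}$ heterogeneity. The surplus is then strictly increasing in $\mathcal{I}_j^{(k)}$, so the argmax $k^*$ is the platform with the highest informativeness. The second-highest surplus $\Sigma_j^{*(2)}$ belongs to the platform with the second-highest informativeness, which I denote $\mathcal{I}_j^{(2)}$.

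\textbf{Step 3: rent.} By \Cref{thm:bertrand}(iii), $\Pi_j=\Sigma_j^{*(k^*)}-\hat U_j$ with $\hat U_j=\max(\bar U_j,\Sigma_j^{*(2)})$. In the regime where competition binds, i.e.\ $\Sigma_j^{*(2)}\ge\bar U_j$, the common $\mu_j$ cancels in the difference and the stated formula follows directly. If instead $\bar U_j>\Sigma_j^{*(2)}$, the rent would be $\Sigma_j^{*(k^*)}-\bar U_j$, which is smaller than the stated expression.

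The main obstacle is this outside-option case, since the corollary as stated implicitly assumes competitive pressure binds. I would make that assumption explicit, either as $\bar U_j\le\Sigma_j^{*(2)}$ or by normalizing $\bar U_j$ to be dominated by rival surplus. Otherwise the formula should be read as an upper bound on the rent.

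I would also note that $\mathcal{I}_j^{(k)}\in(0,1)$ is needed for the V-cancellation identity to be well defined. Beyond that, the remaining steps are substitutions.
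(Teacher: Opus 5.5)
Your proposal is correct and follows the paper's own proof. The paper likewise applies the V-cancellation identity to obtain $\Sigma_j^{*(k)}=\mu_j+\mathcal{I}_j^{(k)}\sigma_{a^*}^2\E[\pi^2]/(2\alpha_j)$ and then subtracts the runner-up's surplus so that $\mu_j$ cancels. Your caveat is also right: the paper simply writes $\Pi_j=\Sigma_j^{*(k^*)}-\Sigma_j^{*(2)}$, which silently assumes $\Sigma_j^{*(2)}\ge\bar U_j$ (so that $\hat U_j=\Sigma_j^{*(2)}$), and without that condition the stated expression is only an upper bound on the rent.
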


\begin{proof}
The Bertrand profit is $\Pi_j=\Sigma_j^{*(k^*)}-\Sigma_j^{*(2)}$. Evaluating both maximized surpluses under V-cancellation gives $\Sigma_j^{*(k)}=\mu_j+\mathcal{I}_j^{(k)}\sigma_{a^*}^2\E[\pi^2]/(2\alpha_j)$, and subtraction yields the stated formula.
\end{proof}

We now give the formal version of Proposition~\ref{prop:moat}.

\begin{proof}[Proof of Proposition~\ref{prop:moat}]
\textbf{(i) Dead zone.} Under homogeneous demand and (C4), $\Sigma_j^{*(k)}-\mu_j$ is linear in $\mathcal{I}_j^{(k)}$. Platform~2 becomes the winner only when $\mathcal{I}^{(2)}+\epsilon>\mathcal{I}^{(1)}$, i.e., $\epsilon>\Delta\mathcal{I}_j$. For $\epsilon\leq\Delta\mathcal{I}_j$, Platform~2 remains the runner-up; its surplus improvement raises the creator's effective outside option but generates zero rent for itself.

\textbf{(ii) Post-flip rent.} Direct application of Corollary~\ref{cor:vd_rent} with the updated informativeness $\mathcal{I}^{(2)}+\epsilon$.

\textbf{(iii) Social return.} In the dead zone, $\mathcal{W}=U_j^*+\Pi_j=\Sigma_j^{(1)}$, constant in $\epsilon$: the creator's outside-option gain is exactly offset by the leader's rent reduction. After the flip, $\mathcal{W}=\Sigma_j^{(2)}(\epsilon)=\mu+(\mathcal{I}^{(2)}+\epsilon)\sigma_{a^*}^2\E[\pi^2]/(2\alpha_j)$, strictly increasing.
\end{proof}

\begin{remark}[Heterogeneous demand]
\label{rmk:het_demand}
With heterogeneous demand $\mu^{(k)}$, the flipping threshold becomes
\[
\epsilon^{\mathrm{flip}}=\Delta\mathcal{I}_j+\frac{2\alpha_j(\mu^{(1)}-\mu^{(2)})}{\sigma_{a^*}^2\E[\pi^2]},
\]
extending the dead zone when the leader has a demand advantage and shrinking it when the trailer does.
\end{remark}

\subsection{Welfare comparisons}

\begin{proposition}[Welfare loss under competition]
\label{prop:comp_welfare}
Under Theorem~\ref{thm:bertrand},
\[
\mathcal{L}^{\mathrm{comp}}=\sum_j\mathcal{L}_j^{(k^*_j)},\qquad \mathcal{L}_j^{(k)}=\frac{V_j^{(k)}}{2\alpha_j}(1-\mathcal{I}_j^{(k)})^2.
\]
Under (C4) and homogeneous demand, the competitive allocation minimizes total welfare loss among allocations satisfying (C1)--(C3).
\end{proposition}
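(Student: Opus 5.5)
The first part of the statement, $\mathcal{L}^{\mathrm{comp}}=\sum_j\mathcal{L}_j^{(k^*_j)}$, needs little work. By (C3) platform surplus is additive across acquired creators, and by (C1) each creator sits with exactly one platform. Total loss is therefore a sum over creators. By \Cref{thm:bertrand}(i) the winning platform $k^*_j$ offers $\rho_j^{*(k^*_j)}=\mathcal{I}_j^{(k^*_j)}/\alpha_j$. This is exactly the attribution-aligned rate of \Cref{thm:optimal_rho}, with that platform's informativeness and signal variance $V_j^{(k^*_j)}$. The per-creator loss then follows from \Cref{prop:welfare_loss}, replacing $\Var(v_j)$ by $V_j^{(k)}$ and $\mathcal{I}_j$ by $\mathcal{I}_j^{(k)}$. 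The derivation there, which completes the square around $\rho^*$ and compares with $\rho^{\mathrm{FB}}=1/\alpha_j$, goes through verbatim for any fixed platform $k$.

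For the minimization claim, I would not compare losses directly but work through surplus, using the identity $\Sigma_j^{*(k)}=\Sigma_j^{\mathrm{FB}(k)}-\mathcal{L}_j^{(k)}$ for a suitable first-best term. Under (C4), the V-cancellation identity $V_j^{(k)}=\sigma_{a^*}^2\E[\pi^2]/\mathcal{I}_j^{(k)}$ gives
\[
\mathcal{L}_j^{(k)}=\frac{\sigma_{a^*}^2\E[\pi^2]}{2\alpha_j}\cdot\frac{(1-\mathcal{I}_j^{(k)})^2}{\mathcal{I}_j^{(k)}},
\]
and the function $t\mapsto(1-t)^2/t$ is strictly decreasing on $(0,1)$. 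Hence $\mathcal{L}_j^{(k)}$ is minimized over $k$ exactly at the platform with the largest $\mathcal{I}_j^{(k)}$. Under homogeneous demand, the corollary computation shows $\Sigma_j^{*(k)}=\mu_j+\mathcal{I}_j^{(k)}\sigma_{a^*}^2\E[\pi^2]/(2\alpha_j)$ is increasing in $\mathcal{I}_j^{(k)}$. So the Bertrand winner $k^*_j=\arg\max_k\Sigma_j^{*(k)}$ from \Cref{thm:bertrand}(ii) is also the informativeness leader, and therefore the per-creator loss minimizer. Any allocation satisfying (C1)--(C3) assigns each creator to one platform, so summing the per-creator minimizations shows the competitive allocation minimizes $\sum_j\mathcal{L}_j^{(k_j)}$.

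The main obstacle is participation. If a creator exits under competition (\Cref{thm:bertrand}(ii), when $\Sigma_j^{*(k^*)}<\bar U_j$), the comparison class must be clarified. I would restrict to allocations over the participating set, or define the loss of an exiting creator identically across allocations, so that the exiting creators contribute equally and drop out of the comparison. A second subtlety concerns ties in $\mathcal{I}_j^{(k)}$. With equal informativeness the losses coincide, so any tie-breaking rule still attains the minimum. Finally, the strict monotonicity of $(1-t)^2/t$ uses $\mathcal{I}_j^{(k)}\in(0,1)$, which the appendix assumes.
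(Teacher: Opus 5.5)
Your proposal is correct and follows the paper's proof. In both, V-cancellation turns $\mathcal{L}_j^{(k)}$ into a multiple of $(1-\mathcal{I}_j^{(k)})^2/\mathcal{I}_j^{(k)}$, which is strictly decreasing, while $\Sigma_j^{*(k)}$ is increasing in $\mathcal{I}_j^{(k)}$ under homogeneous demand, so the Bertrand winner is the max-informativeness platform and the per-creator minima sum by (C3). Drop the announced detour through $\Sigma_j^{*(k)}=\Sigma_j^{\mathrm{FB}(k)}-\mathcal{L}_j^{(k)}$: you never use it, and it cannot hold with a $k$-independent first-best term, since $\Sigma_j^{*(k)}+\mathcal{L}_j^{(k)}-\mu_j=\sigma_{a^*}^2\E[\pi^2]\,(2\mathcal{I}_j^{(k)}-2+1/\mathcal{I}_j^{(k)})/(2\alpha_j)$ varies with $\mathcal{I}_j^{(k)}$.
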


\begin{proof}
Under Bertrand, each creator joins $k_j^*=\arg\max_k \Sigma_j^{*(k)}$. Under homogeneous demand, $\Sigma_j^{*(k)}-\mu$ is increasing in $\mathcal{I}_j^{(k)}$, while $\mathcal{L}_j^{(k)}=\sigma_{a^*}^2\E[\pi^2](1-\mathcal{I}_j^{(k)})^2/(2\alpha_j\mathcal{I}_j^{(k)})$ is strictly decreasing in $\mathcal{I}_j^{(k)}$. Therefore competitive allocation assigns each creator to their max-$\mathcal{I}$ platform, which simultaneously minimizes loss; total loss is the sum, by (C3).
\end{proof}

\begin{corollary}[Diminishing returns to entry]
\label{cor:dim_returns}
Under exchangeable informativeness draws, $\E[\Delta W_K]$ is strictly decreasing in $K$. For $\mathcal{I}\sim\mathrm{Uniform}[0,1]$, the marginal gain from adding the $(K{+}1)$-th platform is proportional to $1/((K{+}1)(K{+}2))$.
\end{corollary}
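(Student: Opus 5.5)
The plan is to reduce welfare under competition to the expected maximum of the informativeness draws, then use standard facts about expected order statistics. Under (C4) and homogeneous demand, Theorem~\ref{thm:bertrand} together with the surplus identity $\Sigma_j^{*(k)}=\mu_j+\mathcal{I}_j^{(k)}\sigma_{a^*}^2\E[\pi^2]/(2\alpha_j)$ gives the equilibrium total welfare for creator $j$ (creator payoff plus winner's rent, as in the proof of Proposition~\ref{prop:moat}(iii)) as
\[
\mathcal{W}_K=\mu_j+c_j\,M_K,\qquad M_K:=\max_{k\le K}\mathcal{I}_j^{(k)},\qquad c_j:=\frac{\sigma_{a^*}^2\E[\pi^2]}{2\alpha_j}.
\]
This holds whenever participation obtains. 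I will assume $\Sigma_j^{*(k^*)}\ge\bar U_j$, or note that exit only truncates welfare at $\bar U_j$, which I would handle by the same argument applied to $\max(\bar U_j,\cdot)$. I read $\Delta W_K$ as the gain from adding the $(K{+}1)$-th platform, so
\[
\E[\Delta W_K]=c_j\big(\E[M_{K+1}]-\E[M_K]\big)=c_j\,\E\big[(\mathcal{I}^{(K+1)}-M_K)^+\big].
\]
Summing over creators uses (C3) additivity.

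For the uniform case, the computation is routine. Since $\Pr(M_K\le t)=t^K$ on $[0,1]$, we get $\E[M_K]=K/(K+1)$. The increment is then $\tfrac{K+1}{K+2}-\tfrac{K}{K+1}=\tfrac{1}{(K+1)(K+2)}$, which is the stated proportionality with constant $c_j$. This expression is manifestly strictly decreasing in $K$.

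For general i.i.d.\ draws with CDF $F$ supported in $[0,1]$, I use $\E[M_K]=\int_0^1(1-F(t)^K)\,dt$. This gives
\[
\E[M_{K+1}]-\E[M_K]=\int_0^1F(t)^K\big(1-F(t)\big)\,dt.
\]
The integrand is pointwise nonincreasing in $K$. It is strictly smaller on the set where $0<F<1$, which has positive Lebesgue measure whenever $F$ is nondegenerate, so the increments are strictly decreasing. For genuinely exchangeable, non-i.i.d.\ draws, I would invoke de Finetti (for an infinite exchangeable sequence of potential entrants) to write the joint law as a mixture of i.i.d.\ laws. Conditionally on the mixing variable the increments are strictly decreasing, and taking expectations preserves this, provided the conditional law is nondegenerate with positive probability.

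I expect the main obstacle to be this last step. Exchangeability of only finitely many draws does not by itself admit a de Finetti representation, and degenerate mixture components give zero rather than strictly positive increments. My first attempt would be to state the corollary for an infinite exchangeable sequence that is not almost surely constant. If that proves too strong, I would fall back to the i.i.d.\ statement, which already covers the uniform example.
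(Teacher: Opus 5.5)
Your proposal is correct. For the uniform case it matches the paper's proof, which consists only of $\E[\mathcal{I}_{(K)}]=K/(K+1)$ and the increment $1/((K+1)(K+2))$. The paper takes $\E[\Delta W_K]\propto\E[M_{K+1}-M_K]$ for granted; you derive it from \Cref{thm:bertrand}, the linear surplus form under (C4), and the welfare accounting in the proof of \Cref{prop:moat}(iii).

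The real difference is the first sentence of the corollary. The paper gives no argument for general exchangeable draws, while your identity $\E[M_{K+1}]-\E[M_K]=\int_0^1F(t)^K(1-F(t))\,dt$ proves it for nondegenerate i.i.d.\ draws. The de Finetti obstacle you anticipate can be avoided. With $X_k:=\mathcal{I}_j^{(k)}$, write $\E[M_{K+2}-M_{K+1}]=\E[(X_{K+2}-\max(M_K,X_{K+1}))^+]$. Swap the labels $K{+}1\leftrightarrow K{+}2$, which is valid once the first $K{+}2$ draws are exchangeable, and check cases to get
\[
\big(\E[M_{K+1}]-\E[M_K]\big)-\big(\E[M_{K+2}]-\E[M_{K+1}]\big)=\E\big[(\min(X_{K+1},X_{K+2})-M_K)^+\big].
\]
So increments are nonincreasing under finite exchangeability. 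They are strictly decreasing iff $\Pr(\min(X_{K+1},X_{K+2})>M_K)>0$.

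Two corrections.

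\emph{Nondegeneracy condition.} ``Not almost surely constant'' is the wrong condition. The sequence $\mathcal{I}^{(k)}\equiv\Theta$ with $\Theta$ nondegenerate is exchangeable, yet every increment is zero. For an infinite exchangeable sequence, the right condition is your earlier phrasing, equivalently $\Pr(\mathcal{I}^{(1)}\neq\mathcal{I}^{(2)})>0$. With finitely many draws even that is not enough. A uniformly random arrangement of $n$ values $(a,\dots,a,b)$ with a single $b>a$ gives $\E[M_K]=a+(b-a)K/n$, which is linear in $K$. So the paper's unqualified ``exchangeable'' claim needs such a condition, and your caution is warranted.

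\emph{Exit truncation.} Strictness does not follow ``by the same argument''. Apply the swap to $h(M_K)$ with $h(m)=\max(\bar U_j,\mu_j+c_jm)$. The same case check gives second difference $\E[(h(\min(X_{K+1},X_{K+2}))-h(M_K))^+]$, which is only weakly positive. It vanishes identically when $\bar U_j\ge\mu_j+c_j$, since then the creator always exits. So strict decrease genuinely needs your participation assumption, at least on an event of positive probability where the new draws beat the current maximum.
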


\begin{proof}
Standard order-statistics calculation. $\E[\mathcal{I}_{(K)}]=K/(K+1)$, and $\E[\Delta W_K]\propto\E[\mathcal{I}_{(K+1)}-\mathcal{I}_{(K)}]=1/((K+1)(K+2))$, strictly decreasing in $K$.
\end{proof}

\subsection{Royalty invariance under moral hazard}
\label{app:comp_invariance}

The royalty invariance of Theorem~\ref{thm:bertrand} extends to environments with endogenous quality choice (Appendix~\ref{app:quality}). Under Q1--Q4 and (C1)--(C3), both the static optimal royalty $\rho_j^{*(k)}=\mathcal{I}_j^{(k)}/\alpha_j$ and the equilibrium effort $q_j^*(\rho_j^{*(k)})=\rho_j^{*(k)}\mu_j/D_j(\rho_j^{*(k)})$ depend only on platform $k$'s own primitives $(\mathcal{I}_j^{(k)},\alpha_j,\mu_j,\gamma_j)$, not on $K$ or $\bar U_j$. The effort-channel welfare $W_j(\rho)=\mu_j^2\rho(2-\rho)/(2D_j(\rho))$ inherits the invariance, so the welfare-maximizing royalty $\rho_j^{**}$ is also $K$-invariant. Contract structure (royalty, effort, per-creator welfare) is competition-invariant if and only if the surplus function decomposes as $\Pi_j^{(k)}(\rho)=\Sigma_j^{(k)}(\rho)-\hat U_j$ with $\hat U_j$ not depending on $\rho$.

\subsection{Tension with non-additive utility}

The additive-separability assumption (C3) is the tractable baseline; it conflicts with the non-additive $u_t$ that drives the superlinear loss in the selective-exit analysis. For $u_t=\sum_j s_j+\epsilon\sum_{j<l}c_{jl}$, the Bertrand equilibrium of Theorem~\ref{thm:bertrand} is the $\epsilon=0$ limit, with first-order corrections that may move complementary creators onto the same platform. Quantifying these corrections in general requires solving for equilibrium platform assignments simultaneously and is left to future work; the perturbation argument ensures the headline moat result is robust to weak complementarities.

%% file: section_arxiv/E_modeling.tex
% !TEX root = ../main.tex
\section{Discussion of modeling choices}
\label{app:discussion}

\textbf{Mean-variance vs.\ expected utility.} We adopt mean-variance preferences for tractability and because they are standard in the contract theory literature when the principal designs linear payment schemes. The qualitative results --- that royalty optimality increases in SNR --- hold under general risk-averse expected utility (CARA, CRRA) with appropriate regularity conditions.

\textbf{Independence of attribution noise.} The assumption that $\xi_{j}$ is independent across creators simplifies the variance decomposition but can be relaxed. Correlated attribution errors (e.g., systematic overattribution to popular creators) introduce additional distortions analyzed in ongoing work.

\textbf{Static vs.\ dynamic model.} Our model is static: creators make a one-shot licensing decision. In practice, contracts are renegotiated, and creators observe realized payments before deciding whether to continue. Extending to a dynamic model with learning about SNR and renegotiation is a natural next step.

\textbf{Endogenous attribution cost.} We treat the attribution technology (and hence $\sigma^2_{\xi,j}$) as a platform choice but do not fully solve for the joint optimization over contracts and attribution investment. The first-order condition for optimal investment equates the marginal cost of reducing $\sigma^2_{\xi,j}$ to the marginal welfare gain $|\partial \mathcal{L}_j / \partial \sigma^2_{\xi,j}|$, summed across all creators. This joint optimization is straightforward given Proposition~\ref{prop:welfare_loss} but requires specifying the cost function $C^{\mathrm{attr}}$, which we leave to empirical calibration.

%% file: section_arxiv/F_additional_empirical.tex
% !TEX root = ../main.tex
\section{Additional empirical experiments and the attribution-investment portfolio}
\label{app:experiments}

This appendix collects the additional empirical experiments and the formal water-filling rule that allocates an attribution-improvement budget across creators. The marginal-gain formula from \Cref{prop:welfare_loss} stays in the main text; the optimization that uses it appears here.

\subsection{Detailed empirical setup}
\label{app:setup_details}

We expand on the main-text setup paragraph in \Cref{sec:emp_attribution}.

\textbf{Ground-truth attribution.}
For each creator $j$, the loss-delta ground truth is
\begin{equation}
a^*_{j} \;=\; \ell\!\left(y^{\mathrm{full}}\,\big|\,r;\phi^{-j}\right) \;-\; \ell\!\left(y^{\mathrm{full}}\,\big|\,r;\phi^{\mathrm{full}}\right),
\end{equation}
where $\phi^{\mathrm{full}}$ are the model weights of the full-catalog fine-tune, $\phi^{-j}$ are the weights of the leave-creator-$j$-out fine-tune, $r$ is the prompt, and $y^{\mathrm{full}}$ is the generation produced by $\phi^{\mathrm{full}}$ at $r$. This follows the counterfactual-loss formulation in the data-attribution literature \citep{koh2017understanding,park2023trak,ilyas2022datamodels}, applied at the catalog level rather than the example level. The loss-delta target is model-internal rather than perceptual, which lets every measured $\widehat{\mathcal{I}}_j$ be validated against a quantity the model itself optimizes; perceptual ground truth (e.g., listener-judged stylistic resemblance) is a complementary target left to future work.

\textbf{Catalog selection.}
The $N=22$ artist catalogs are drawn from MTG-Jamendo's autotagging split, with at least ten tracks per artist and at least $40\%$ within-artist genre coherence.
Catalogs span four prolificness bands by track count --- $8$ low ($10$--$15$ tracks), $8$ mid ($16$--$25$), $5$ high ($26$--$50$), and $1$ extra-high ($>50$, capped at $50$) --- and ten primary genres (electronic, classical, hip-hop, pop, rap, latin, alternative, heavy metal, pop-rock, experimental).
The total training set is $475$ tracks; each leave-catalog-out fine-tune drops the corresponding catalog and retrains.

\textbf{Model and fine-tuning.}
We fine-tune Stable Audio Open \citep{evans2024stable} on the curated training set with the VAE frozen, training only the diffusion transformer.
Each of the $N+1=23$ models (one full + $22$ leave-catalog-out) is fine-tuned from the public checkpoint with matched hyperparameters; we use the same prompted-continuation set of $T=200$ evaluation prompts across all models.

\textbf{Attribution-method implementation.}
For both EKFAC influence functions \citep{grosse2023studyinglarge} and D-TRAK \citep{zheng2024dtrak}, we compute per-sample attribution scores against each of the $T$ generated outputs and aggregate to the catalog level by summation.
EKFAC uses block-diagonal Fisher approximation; D-TRAK uses random projections of training and query gradients followed by a kernel-based score.

\textbf{Quantity definitions.}
For each method $m$ and creator $j$ we report four quantities:
$\widehat{\mathcal{I}}_j^{(m)}$ is the regression slope of $a^*$ on $\hat a^{(m)}$ across the $T$ outputs;
$\widehat K_j^{(m)} = \widehat{\Cov}(\hat a^{(m)}, a^*)/\widehat{\Var}(a^*)$ is the multiplicative-bias scale;
$\widehat{\SNR}_j^{(m,\mathrm{cal})}$ is the Gaussian-shortcut SNR computed from the calibrated estimator $\tilde a_{j,t} = \hat a_{j,t}/\widehat K_j^{(m)}$;
and $\widehat{\rho}_j^{(m)}$ is the per-creator Pearson correlation between $\hat a^{(m)}$ and $a^*$ across the $T$ outputs.
Bootstrap $95\%$ confidence intervals are computed by resampling the $T$ outputs with replacement, $1{,}000$ replicates per quantity.

\textbf{Sanity checks.}
The ground truth $a^*$ is positive in $97.6\%$ of $4400$ creator--output pairs (mean $0.0042$, std $0.0059$), confirming that leave-catalog-out retraining reliably worsens loss in the expected direction.
The four-quantity tuple satisfies the identity $\widehat{\mathcal{I}}_j = \widehat{\SNR}^{\mathrm{cal}}_j / \big[\widehat K_j(1+\widehat{\SNR}^{\mathrm{cal}}_j)\big]$ up to numerical precision on the exported data, validating the calibration relation in \Cref{app:assumptions}.

\textbf{Headline statistics.}
\Cref{tab:headline} consolidates the headline measurement numbers referenced throughout \Cref{sec:emp_attribution}.

\begin{table}[t]
\caption{Headline measurement statistics on Stable Audio Open with $N=22$ artist catalogs and $T=200$ generated outputs, under EKFAC and D-TRAK. Each correlation row reports the per-clip cross-creator correlation averaged over clips, with $95\%$ bootstrap CIs in brackets. The $|\widehat K_j|$ row reports the median across the $22$ creators.}
\label{tab:headline}
\centering
\small
\begin{tabular}{lcc}
\toprule
Quantity & EKFAC influence & D-TRAK \\
\midrule
Mean cross-artist Spearman                               & $0.050$ \,$[0.021,\,0.080]$  & $0.054$ \,$[0.022,\,0.087]$ \\
Mean cross-artist Pearson                                & $0.188$ \,$(p<10^{-4})$       & $0.177$ \,$(p<10^{-12})$ \\
Pearson / Spearman ratio                                 & $3.76\times$                  & $3.24\times$ \\
Median $|\widehat K_j|$                                  & $7\,966$                      & $5.25$ \\
Creators with $\widehat K_j<0$ (sign-flipped)            & $6/22$                        & $3/22$ \\
Creators with $\widehat{\SNR}^{\mathrm{cal}}_j>0.1$      & $11/22$                       & $3/22$ \\
Creators with $\widehat{\SNR}^{\mathrm{cal}}_j>1$        & $0/22$                        & $1/22$ \\
Aggregate welfare loss $\mathcal{L}^{(m)}$ at $\alpha=2$ & $1.67\times 10^{-4}$          & $1.69\times 10^{-4}$ \\
Welfare gain over no-attribution                         & $10.6\%$                      & $9.8\%$ \\
\bottomrule
\end{tabular}
\end{table}

\subsection{Per-creator measurement diagnostics}
\label{app:per_creator}

Figure~\ref{fig:results_app1} reports per-creator measurements that aggregate to the headline statistics in the main text Table~\ref{tab:headline}.
At $T=200$ the per-creator bootstrap intervals are wide and span zero for most creators on both methods.
A small subset of creators show statistically significant positive Pearson under both methods, and the per-creator Pearson values are correlated across methods at Spearman $\rho = 0.51$ ($p=0.015$), so methods agree on roughly which creators are detectable even though their cardinal scales differ by orders of magnitude.

\begin{figure*}[t]
\centering
\includegraphics[width=\textwidth]{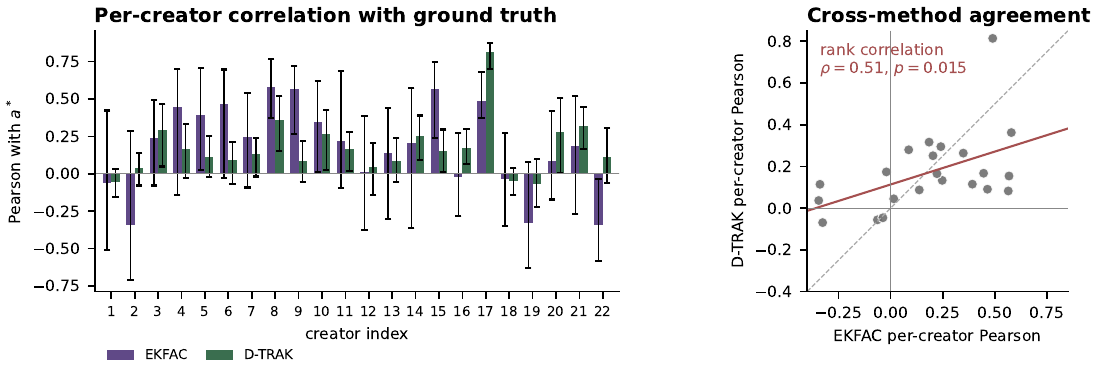}
\caption{Per-creator detail behind the main-text headline numbers. \textbf{(a)}~Per-creator Pearson with $95\%$ bootstrap confidence intervals. Most intervals span zero at $T=200$; a small subset of creators show significant positive correlation under both methods. \textbf{(b)}~Cross-method agreement on per-creator Pearson, with the diagonal in dashed grey and an OLS fit in red. The methods agree at rank correlation $\rho=0.51$ ($p=0.015$) on which creators are detectable, even though their absolute scales differ by three orders of magnitude.}
\label{fig:results_app1}
\end{figure*}

Figure~\ref{fig:results_app2} reports two scaling diagnostics that motivate the framework's marginal-welfare investment rule.
The variance of the loss-delta ground truth grows roughly linearly with catalog size on a log--log scale, with Pearson $r=0.69$ ($p<0.001$).
This scaling is a necessary condition for attribution to discriminate small from large catalogs, since per-creator informativeness is a regression slope on the ground-truth distribution; without dispersion in $a^*$, the slope is undefined.
The right panel overlays the closed-form marginal welfare gain $|\partial\mathcal{L}/\partial\SNR|$ from Proposition~\ref{prop:welfare_loss} (computed at the median $\Var(a^*)$) with per-creator measured points.
Per-creator points scatter above and below the median curve in proportion to each creator's own $\Var(a^*)$.
The shaded region marks the precision range that holds most of the current pool, where the marginal welfare gain from improved attribution is largest --- the same region the water-filling rule below routes the bulk of an attribution-improvement budget into.

\begin{figure*}[t]
\centering
\includegraphics[width=\textwidth]{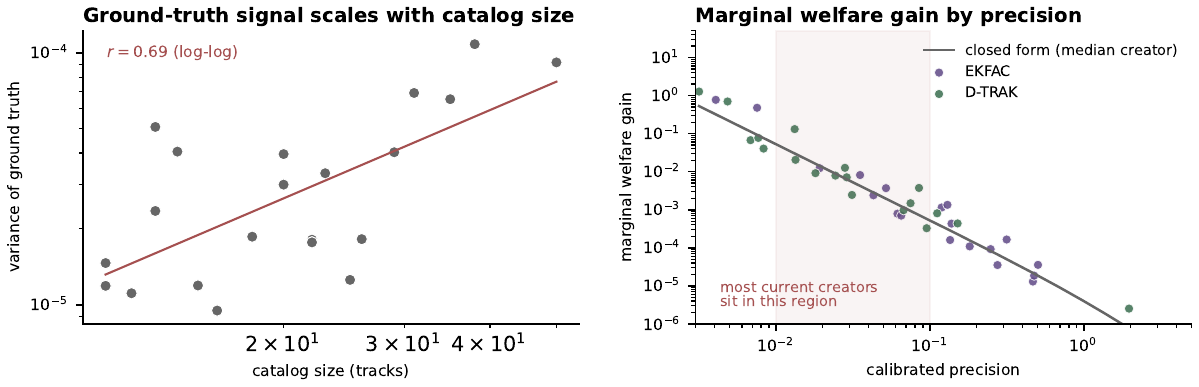}
\caption{Scaling and marginal-welfare diagnostics. \textbf{(a)}~Ground-truth signal variance versus catalog size, log--log axes (Pearson $r=0.69$, $p<0.001$); larger catalogs perturb the model more when removed and are therefore more informative. \textbf{(b)}~Closed-form marginal welfare gain from Proposition~\ref{prop:welfare_loss} (grey curve, evaluated at the median $\Var(a^*)$) overlaid with per-creator measured points. Most of the current creator pool sits in the shaded steep-slope region, where attribution-investment yields the highest welfare return.}
\label{fig:results_app2}
\end{figure*}

\subsection{Optimal attribution-investment portfolio}
\label{app:investment}

Suppose the platform has a budget $B>0$ to invest in improving attribution precision and faces a per-creator improvement technology $g_j:[0,\bar\beta_j]\to\R_+$ that is strictly increasing, strictly concave, $C^1$, with $g_j(0)=0$ and $g_j'(0)<\infty$. Investing $\beta_j$ units of budget in creator $j$ raises their SNR from $S_j$ to $S_j+g_j(\beta_j)$. Adopt additive-Gaussian noise (centered: $\E[\hat a_j]=0$, or $\pi$ deterministic). The platform allocates the budget to maximize total welfare-loss reduction.

\begin{corollary}[Optimal attribution-investment portfolio]
\label{cor:waterfilling}
The welfare-maximizing allocation $\{\beta_j^*\}$ solving
\[
\max_{\beta\geq 0,\sum_j\beta_j=B}\sum_{j=1}^N\big[\mathcal{L}_j(S_j)-\mathcal{L}_j(S_j+g_j(\beta_j))\big]
\]
satisfies the following:
\begin{enumerate}
\item \emph{Existence and uniqueness.} $\beta^*$ is unique by strict concavity of the composed objective.
\item \emph{Priority index.} Define $\mathcal{M}_j:=\phi_j\cdot g_j'(0)$ where
\[
\phi_j=\frac{\Lambda_j(2S_j+1)}{S_j^2(1+S_j)^2},\qquad \Lambda_j=\frac{\E[\pi^2]\Var(a^*_j)}{2\alpha_j}
\]
is the marginal welfare gain from an infinitesimal SNR improvement. Creator $j$ receives investment iff $\mathcal{M}_j$ exceeds the shadow price $\lambda^*$ of the budget constraint.
\item \emph{Closed form under square-root technology.} If $g_j(\beta)=\gamma_j\sqrt{\beta}$, the first-order approximation of the optimal allocation is
\[
\beta_j^{\mathrm{approx}}=\frac{(\phi_j\gamma_j)^2}{\sum_k(\phi_k\gamma_k)^2}\cdot B.
\]
Budget shares follow a square law in marginal welfare gains.
\item \emph{Comparative statics.} $\partial\beta_j/\partial S_j<0$, $\partial\beta_j/\partial\alpha_j<0$, $\partial\beta_j/\partial\Var(a^*_j)>0$.
\end{enumerate}
\end{corollary}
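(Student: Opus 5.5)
The plan is to rewrite the allocation problem as a separable concave resource-allocation problem and read each item off its KKT conditions. First, under additive Gaussian noise with centering, the loss as a function of SNR is given by the Gaussian-shortcut form \eqref{eq:welfare_loss_gaussian}: $\mathcal{L}_j(S)=\Lambda_j/(S(1+S))$ with $\Lambda_j=\E[\pi^2]\Var(a^*_j)/(2\alpha_j)$. One checks directly that $\mathcal{L}_j$ is strictly decreasing and strictly convex on $(0,\infty)$, since $S(1+S)$ is increasing and $1/(S(1+S))=1/S-1/(1+S)$ is a difference of convex functions whose second derivative $2/S^3-2/(1+S)^3$ is positive. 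The $j$-th summand $h_j(\beta_j):=\mathcal{L}_j(S_j)-\mathcal{L}_j(S_j+g_j(\beta_j))$ is $-\mathcal{L}_j\circ(S_j+g_j)$. Composing a decreasing convex function with a strictly concave increasing $g_j$ gives a strictly concave function, so $h_j$ is strictly concave and increasing. The objective is therefore a strictly concave separable function on the compact simplex $\{\beta\ge0,\sum\beta_j=B,\ \beta_j\le\bar\beta_j\}$, which yields existence by continuity and uniqueness by strict concavity (item 1).

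For item 2, I will write the Lagrangian with multiplier $\lambda$ on the budget and apply KKT, which is sufficient by concavity and holds under linear constraints. This gives $h_j'(\beta_j^*)=\lambda^*$ when $\beta_j^*>0$ and $h_j'(0)\le\lambda^*$ when $\beta_j^*=0$. Next compute $h_j'(0)=-\mathcal{L}_j'(S_j)\,g_j'(0)=\phi_j g_j'(0)=\mathcal{M}_j$, using \eqref{eq:welfare_loss_marginal} to identify $-\mathcal{L}_j'(S_j)=\phi_j$. Because $h_j'$ is strictly decreasing, $\beta_j^*>0$ iff $\mathcal{M}_j>\lambda^*$; this is the water-filling threshold. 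The interior-cap case $\beta_j=\bar\beta_j$ is handled by an extra multiplier and does not affect the iff, apart from a remark.

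For item 3, set $g_j=\gamma_j\sqrt\beta$. The "first-order approximation" linearizes $\mathcal{L}_j$ at $S_j$, so the objective becomes $\sum_j\phi_j\gamma_j\sqrt{\beta_j}$. Its FOC $\phi_j\gamma_j/(2\sqrt{\beta_j})=\lambda$ gives $\beta_j\propto(\phi_j\gamma_j)^2$, and normalizing by $B$ gives the formula with all shares positive. Note that $g_j'(0)=\infty$ here, so this item formally relaxes the standing assumption; I will flag that. For item 4, I will differentiate this approximate allocation (or the KKT system via the implicit function theorem). $\phi_j$ is decreasing in $S_j$, because $(2S+1)/(S^2(1+S)^2)$ is the derivative of a convex decreasing function's negative. $\phi_j$ is also decreasing in $\alpha_j$ and increasing in $\Var(a^*_j)$ through $\Lambda_j$. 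Each of these moves $\beta_j$ in the same direction, holding the others' shares fixed up to the normalization effect, which is of smaller order when $N\ge2$.

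The main obstacle will be item 4 stated for the exact optimum rather than the approximation. Increasing $\phi_j$ also changes $\lambda^*$, so the sign requires showing the own-effect dominates. I would first try the implicit function theorem on the system $h_j'(\beta_j)=\lambda$, $\sum\beta=B$. This gives $d\beta_j/d\theta=\frac{\partial_\theta h_j'}{-h_j''}\,(1-w_j)$ with $w_j=(1/|h_j''|)/\sum_k(1/|h_k''|)\in(0,1)$, so the sign matches $\partial_\theta h_j'$. The remaining check is that $\partial_{S_j}h_j'(\beta)<0$ at all $\beta$, not just at $0$, which follows from convexity of $-\mathcal{L}_j'$.
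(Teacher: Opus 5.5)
Your proposal is correct and follows the paper's proof essentially step for step: strict convexity of $\mathcal{L}_j(S)=\Lambda_j/(S(1+S))$ and composition with the strictly concave $g_j$ (item 1), KKT with $h_j'(0)=\phi_j g_j'(0)=\mathcal{M}_j$ (item 2), linearizing the gain to $\phi_j\gamma_j\sqrt{\beta_j}$ and solving the FOC (item 3), and differentiating $\beta_j\propto\phi_j^2$ (item 4). Your flag that $g_j'(0)=\infty$ under square-root technology is a fair point the paper omits, and your implicit-function extension to the exact optimum goes beyond the paper; there the fact you need is that $-\mathcal{L}_j'$ is decreasing (i.e.\ $\mathcal{L}_j''>0$, already shown in item 1), not that it is convex, and the argument requires $g_j\in C^2$ and gives strict signs only for creators with $0<\beta_j^*<\bar\beta_j$ and at least one other active creator.
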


\begin{proof}
\textit{(i)} $\mathcal{L}(S)$ is strictly convex for $S>0$: $\mathcal{L}''=\Lambda(6S^2+6S+2)/[S^3(1+S)^3]>0$. The gain $G_j(s):=\mathcal{L}(S_j)-\mathcal{L}(S_j+s)$ is therefore strictly concave in $s$. Composition of strictly increasing concave $G_j$ with strictly concave $g_j$ is strictly concave, so $h_j(\beta_j):=G_j(g_j(\beta_j))$ is strictly concave on $[0,\bar\beta_j]$. The sum $\sum_j h_j(\beta_j)$ is strictly concave on the compact convex constraint set; existence and uniqueness follow from Weierstrass plus strict concavity.

\textit{(ii)} KKT: $h_j'(\beta_j^*)=\lambda^*$ if $\beta_j^*>0$, else $h_j'(0)\leq\lambda^*$. With $h_j'(0)=G_j'(0)g_j'(0)=\phi_j\cdot g_j'(0)=\mathcal{M}_j$, strict concavity gives the if-and-only-if characterization.

\textit{(iii)} Linearize $G_j(s)\approx\phi_j s$ for small $s$. Then $h_j(\beta_j)\approx\phi_j\gamma_j\sqrt{\beta_j}$. The FOC $\phi_j\gamma_j/(2\sqrt{\beta_j})=\lambda$ yields $\sqrt{\beta_j}=\phi_j\gamma_j/(2\lambda)$; squaring and enforcing the budget constraint gives the stated formula.

\textit{(iv)} Direct differentiation of $\beta_j\propto\phi_j^2$.
\end{proof}

\textbf{Pilot calibration.} At the pilot parameters used throughout the paper ($\alpha_j=2$, $\E[\pi^2]=1$, three creators with $\mathcal{I}\in\{0.617,0.755,0.209\}$), the budget shares are 19.3\% (pop), 5.5\% (reggae), and 75.1\% (classical). The classical creator receives three quarters of the budget despite having the smallest $\Var(a^*)$, because their low SNR places this creator on the steepest part of the welfare-loss curve. This generalizes the main-text observation that the moderate-precision regime has the highest marginal return: when $\Var(\hat a)$ and $\alpha_j$ heterogeneity is taken into account, low-informativeness creators with moderate noise dominate the priority ranking.

\textbf{Approximation guarantee.} The closed-form allocation captures at least fraction $1/(1+\bar\epsilon)$ of the optimal welfare gain under the Prop-1-convention objective, where $\bar\epsilon=\max_j g_j(B)\,/[(S_j+g_j(B))S_j]$. At pilot with $B=0.00751$ (1\% of mean SNR), $\bar\epsilon=0.069$ and the guarantee is $\geq 93.5\%$.

\textbf{Augmented water-filling.} The priority index extends to incorporate quality dividends (from \Cref{app:quality}), retention dividends (from the dynamic results in Appendix~\ref{app:dynamics}), and IC costs (from Proposition~\ref{prop:screening_full}): $\tilde\phi_j=\phi_j+\Delta_j^q+\Delta_j^r-\Delta_j^{\mathrm{IC}}$, with $\tilde\beta_j^{\mathrm{approx}}\propto(\tilde\phi_j\gamma_j)^2$. The augmented allocation concentrates investment more heavily on low-informativeness creators in the pilot calibration.

\subsection{Multi-platform Bertrand competition between attribution methods}
\label{app:market_simulation}

We treat the two state-of-the-art attribution methods evaluated in Part~I --- EKFAC influence functions and D-TRAK --- as two competing platforms in the Bertrand equilibrium of \Cref{thm:bertrand}.
Each platform offers each creator the platform-specific attribution-aligned benchmark contract from \Cref{thm:optimal_rho} computed using its own $\widehat{\mathcal{I}}^{\mathrm{cal},(m)}_j$.
Under exclusive licensing, creator $j$ joins the platform with the higher maximized surplus $\Sigma^{*(m)}_j$, and the trailing platform's private return to attribution improvement is zero up to the leader's lead.

\textbf{Surplus proxy.} We use the leading-order surplus from \Cref{thm:optimal_rho} after the binding-IR substitution,
\[
\Sigma^{*(m)}_j \;\propto\; \big(\widehat{\mathcal{I}}^{\mathrm{cal},(m)}_j\big)^2 \cdot \frac{\Var(a^*_j)}{2\alpha},
\]
which captures the per-creator surplus magnitude up to a normalization absorbed into the dead-zone width.

\textbf{Per-creator equilibrium.}
Of the $22$ creators, EKFAC wins $14$ and D-TRAK wins $8$ (\Cref{fig:results_bertrand}~(a)).
The two methods produce non-trivial divergence on roughly one-third of the population, where the trailing platform's private return is exactly zero --- the dead zone of \Cref{prop:moat}.

\textbf{Dead-zone distribution.}
Dead-zone widths span four orders of magnitude across creators, from $\sim 10^{-11}$ (near-tie) to $\sim 10^{-5}$ (large gap), with median around $10^{-7}$ (\Cref{fig:results_bertrand}~(b)).
Most creators have nearly indifferent surplus across methods --- a small minority drives the population-level dead-zone width that under-investing platforms face.

\begin{figure*}[t]
\centering
\includegraphics[width=\textwidth]{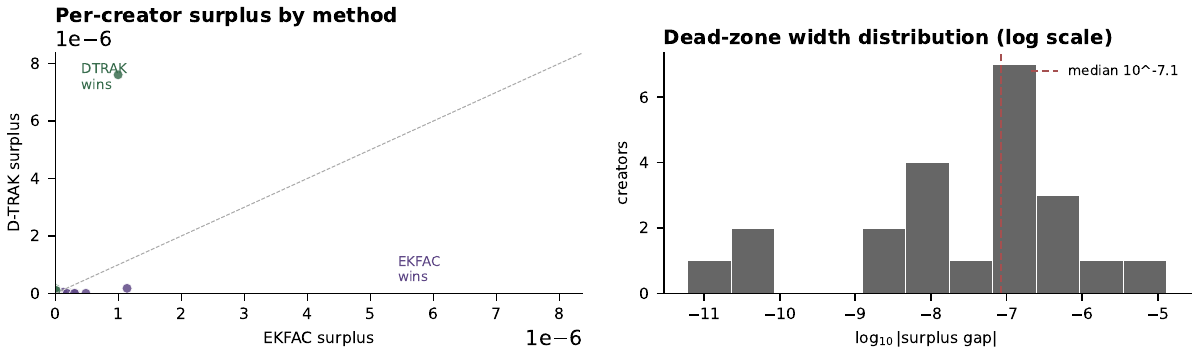}
\caption{Bertrand competition between EKFAC and D-TRAK as two platforms. \textbf{(a)}~Per-creator surplus on each platform; points above the diagonal favour D-TRAK, below favour EKFAC. \textbf{(b)}~Distribution of dead-zone widths (absolute surplus gap) on a logarithmic axis, spanning four orders of magnitude with median around $10^{-7}$.}
\label{fig:results_bertrand}
\end{figure*}

\textbf{Industry-pattern reference.}
\Cref{tab:industry} relates current music-industry payment models to the framework's contract space $\theta_j = (f_j, \rho_j)$.
Existing models occupy the boundary (pure fixed fee or pure royalty); AI training licenses are a setting where the attribution-aligned benchmark can place creators in the interior of the contract space, and the framework's prescription on the measured creator pool (\Cref{sec:emp_market}) gives a concrete instantiation of where those creators would sit under the benchmark.

\begin{table}[t]
\caption{Existing music industry payment models as special cases of the framework's contract space $\theta_j = (f_j, \rho_j)$. Most current models occupy the boundary (pure fixed fee or pure royalty); AI training licenses are a setting where the attribution-aligned benchmark can produce interior contracts.}
\label{tab:industry}
\centering
\small
\begin{tabular}{lllll}
\toprule
Industry model & Payment form & $\theta = (f, \rho)$ & Attribution signal & SNR regime \\
\midrule
Blanket license (ASCAP/BMI)    & Fixed fee & $(f, 0)$    & None              & N/A \\
Pro-rata streaming (Spotify)   & Royalty   & $(0, \rho)$ & Stream counts     & Low (coarse) \\
Sync license (film/TV)         & Fixed fee & $(f, 0)$    & Negotiated        & N/A \\
User-centric (SoundCloud)      & Royalty   & $(0, \rho)$ & Per-user streams  & Medium \\
AI training license (proposed) & Hybrid    & $(f, \rho)$ & Attribution method & Varies \\
\bottomrule
\end{tabular}
\end{table}

%% file: section_arxiv/G_bargaining.tex
% !TEX root = ../main.tex
\section{Bargaining with large rights holders}
\label{app:bargaining}

When a creator is sufficiently irreplaceable (e.g., a major label controlling a unique catalog), the platform cannot credibly offer take-it-or-leave-it terms. Instead, the two parties bargain bilaterally over the contract $\theta_j = (f_j, \rho_j)$. Let $\bar{\Pi}_0$ denote the platform's payoff if negotiation fails (it generates without creator $j$) and $\bar{U}_j$ the creator's outside option. The incremental joint surplus from agreement is
\[
S_j = \E[\pi \cdot a^*_{j}] - \bar{\Pi}_0 - \bar{U}_j.
\]
If $\beta_j \in (0,1)$ is the creator's bargaining power, the Nash bargaining solution splits the surplus so that creator $j$ receives $\bar{U}_j + \beta_j S_j$. The royalty rate $\rho_j$ is still chosen to minimize the risk premium (as in Theorem~\ref{thm:optimal_rho}), but the fixed fee $f_j$ adjusts to implement the bargaining split rather than extracting all surplus above the outside option. The key interaction with attribution precision: when attribution is noisy, both parties prefer to shift toward fixed fees (reducing the noisy component), but the creator's bargaining power determines how the risk-premium savings are divided. A high-$\beta_j$ creator captures most of the efficiency gain from better attribution.

%% file: section_arxiv/H_dynamics.tex
% !TEX root = ../main.tex
\section{Dynamic participation}
\label{app:dynamics}

The static framework of Section~\ref{sec:theory} assumes a one-shot licensing decision. In practice, creators observe realized payments and decide whether to continue licensing. This appendix gives the formal dynamic extension: a Bellman equation for the platform's value from each active creator, the steady-state participation rate, and a two-regime characterization of when the platform optimally pays a retention rent.

\subsection{Setup}

\textbf{(D1) Myopic exit rule.} At the end of period $t$, creator $j$ exits iff $w_{j,t}<\bar U_j$. This captures liquidity-constrained creators or behavioral anchoring on $\bar U_j$.

\textbf{(D2) Re-entry.} An exited creator returns the next period with probability $\lambda_j\in(0,1)$.

\textbf{(D3) Stationary contract.} The platform commits to $\theta_j=(f_j^*,\rho_j^*)$ from Theorem~\ref{thm:optimal_rho} each period, plus an optional retention rent $s_j\geq 0$ added to $f_j$.

\textbf{(D4) Discount factor.} The platform discounts at $\delta\in(0,1)$.

\textbf{(D5) Gaussian payments.} $w_{j,t}\mid x_{j,t}=1\sim\mathcal{N}(\E[w_j],(\rho_j^*)^2\sigma_{v,j}^2)$.

Define net surplus $N_j:=\E[a_{j,t}^*\pi_t]-\bar U_j>0$ and net margin $\mu_j^{\mathrm{net}}:=N_j-\mathcal{L}_j>0$ (creator profitable net of alignment loss). For brevity we write $\sigma_{v,j}^2:=\Var(v_{j,t})$ throughout this appendix.

\subsection{Static exit probability and the $\alpha$-cancellation}

\begin{theorem}[Dynamic participation]
\label{thm:dynamic}
Under (D1)--(D5) and Theorem~\ref{thm:optimal_rho}'s payment rule:

\textbf{(a) Static exit probability.} With zero retention rent,
\[
p_j^{\mathrm{static}}=\Phi(-d_j),\qquad d_j=\frac{\mathcal{I}_j\sigma_{v,j}}{2}.
\]
The exit buffer $d_j$ is independent of risk aversion $\alpha_j$: $\rho_j^*\propto 1/\alpha_j$ exactly offsets the risk-aversion scaling in the variance penalty.

\textbf{(b) Bellman equation.} The platform's value from an active creator with retention rent $s\geq 0$ is
\[
V_j^A(s)=\frac{\Pi_j(s)}{1-\delta\tilde\sigma_j(s)},
\]
where $\Pi_j(s)=\mu_j^{\mathrm{net}}-s$, $\sigma_j(s)=1-\Phi(-d_j-s/(\rho_j^*\sigma_{v,j}))$ is the retention probability, $\tilde\sigma_j(s)=\omega_j+(1-\omega_j)\sigma_j(s)$ is the effective continuation probability, and $\omega_j=\delta\lambda_j/(1-\delta(1-\lambda_j))$ is the re-entry discount factor.

\textbf{(c) Steady-state participation.} In the ergodic steady state,
\[
n_j^*=\frac{\lambda_j}{\lambda_j+p_j(\delta)}.
\]
\end{theorem}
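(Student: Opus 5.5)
The three parts are independent computations, and I would prove them in the order stated. Throughout I would work in the interior regime $0<\mathcal{I}_j<\alpha_j$, so that $\rho_j^*=\mathcal{I}_j/\alpha_j$ exactly as in \Cref{thm:optimal_rho}. I would also take the fixed fee $f_j^*$ to be pinned by the binding participation constraint, as in the discussion following that theorem.

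\textbf{Part (a).} The plan is to compute the mean payment, standardize, and then substitute the benchmark royalty.
\begin{itemize}
\item Substituting the binding-IR fee $f_j^*=\bar U_j-\rho_j^*\E[v_j]+(\alpha_j/2)(\rho_j^*)^2\sigma_{v,j}^2$ into $w_j=f_j^*+\rho_j^* v_j$ gives
\[
\E[w_j]=\bar U_j+\tfrac{\alpha_j}{2}(\rho_j^*)^2\sigma_{v,j}^2 .
\]
So the mean payment exceeds the outside option by exactly the risk premium.
\item By (D5) the payment has standard deviation $\rho_j^*\sigma_{v,j}$. Under the exit rule (D1), the exit event $w_{j,t}<\bar U_j$ therefore has probability
\[
\Phi\big(-\tfrac{\alpha_j}{2}\rho_j^*\sigma_{v,j}\big).
\]
\item Plugging in $\rho_j^*=\mathcal{I}_j/\alpha_j$ gives $d_j=\mathcal{I}_j\sigma_{v,j}/2$. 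The factor $\alpha_j$ cancels, which is the claimed $\alpha$-independence.
\end{itemize}
I would remark that this cancellation is special to the interior case. When $\rho_j^*$ is clipped at $1$, $d_j=\alpha_j\sigma_{v,j}/2$ instead.

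\textbf{Part (b).} Adding a rent $s$ to $f_j$ shifts the mean payment up by $s$ and leaves its variance unchanged. The retention probability is therefore
\[
\sigma_j(s)=1-\Phi\big(-d_j-s/(\rho_j^*\sigma_{v,j})\big).
\]
The per-period platform flow from an active creator is $\Pi_j(s)=\mu_j^{\mathrm{net}}-s$. This is expected contribution minus the outside-option transfer and the benchmark alignment loss $\mathcal{L}_j$ of \Cref{prop:welfare_loss}, minus the rent. An inactive creator yields zero flow. I would then write the two Bellman equations:
\[
V_j^A=\Pi_j(s)+\delta\big[\sigma_j V_j^A+(1-\sigma_j)V_j^I\big],
\]
\[
V_j^I=\delta\big[\lambda_j V_j^A+(1-\lambda_j)V_j^I\big].
\]
The second equation solves to $V_j^I=\omega_j V_j^A$ with $\omega_j=\delta\lambda_j/(1-\delta(1-\lambda_j))$. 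Substituting into the first gives
\[
V_j^A\big[1-\delta(\omega_j+(1-\omega_j)\sigma_j)\big]=\Pi_j(s),
\]
which is the stated formula. Well-posedness of the division follows because $\delta\tilde\sigma_j<1$.

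\textbf{Part (c).} Participation is a two-state Markov chain: active to inactive with probability $p_j$, and inactive to active with probability $\lambda_j$. Here $p_j(\delta)=1-\sigma_j(s_j^*(\delta))$ is the exit probability at the platform's chosen rent. The chain is irreducible and aperiodic because $\lambda_j,p_j\in(0,1)$. Its stationary distribution solves the balance condition $n\,p_j=(1-n)\lambda_j$, giving $n_j^*=\lambda_j/(\lambda_j+p_j(\delta))$.

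\textbf{Main obstacle.} The computations are routine. The step I expect to need the most care is justifying the flow payoff $\Pi_j(s)=\mu_j^{\mathrm{net}}-s$: identifying the benchmark surplus with $N_j-\mathcal{L}_j$ requires reading $\mathcal{L}_j$ as the per-period risk-compensation cost that the platform bears through the fee. I would state this explicitly as the accounting convention inherited from \Cref{prop:welfare_loss}, rather than derive it.
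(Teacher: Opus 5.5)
Your proposal is correct and follows the paper's proof essentially step for step: you standardize the binding-IR mean $\bar U_j+R_j^*$ by $\rho_j^*\sigma_{v,j}$ to get $d_j=(\alpha_j/2)\rho_j^*\sigma_{v,j}=\mathcal{I}_j\sigma_{v,j}/2$, solve the same two Bellman equations through $V_j^I=\omega_j V_j^A$, and use the same flow-balance condition $p_j n_j=\lambda_j(1-n_j)$. Your interior-regime caveat is a fair refinement that the paper leaves implicit. You also need not justify $\Pi_j(s)=\mu_j^{\mathrm{net}}-s$, since the statement gives it as a definition; and the justification you sketched would not hold literally, because the risk premium carried by the fee is $R_j^*=\mathcal{I}_j^2\sigma_{v,j}^2/(2\alpha_j)$, not $\mathcal{L}_j$.
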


\begin{proof}
\textit{(a)} Under binding IR, $\E[w_j]=\bar U_j+R_j^*$ with $R_j^*=(\alpha_j/2)(\rho_j^*)^2\sigma_{v,j}^2=\mathcal{I}_j^2\sigma_{v,j}^2/(2\alpha_j)$. Payment standard deviation is $\rho_j^*\sigma_{v,j}=\mathcal{I}_j\sigma_{v,j}/\alpha_j$. Exit iff $w_{j,t}<\bar U_j$, so $p_j=\Phi(-R_j^*/(\rho_j^*\sigma_{v,j}))$. The ratio simplifies: $d_j=R_j^*/(\rho_j^*\sigma_{v,j})=\mathcal{I}_j\sigma_{v,j}/2$, manifestly independent of $\alpha_j$.

\textit{(b)} Bellman: $V^A(s)=\Pi(s)+\delta[\sigma(s)V^A(s)+(1-\sigma(s))V^I]$, $V^I=\delta[\lambda V^A(s)+(1-\lambda)V^I]$. Solving the second for $V^I=\omega V^A$ and substituting into the first yields $V^A(s)[1-\delta(\sigma+(1-\sigma)\omega)]=\Pi(s)$, i.e., $V^A(s)=\Pi(s)/[1-\delta\tilde\sigma(s)]$.

\textit{(c)} Flow balance of the two-state Markov chain: $p_j n_j=\lambda_j(1-n_j)$, giving $n_j=\lambda_j/(\lambda_j+p_j)$.
\end{proof}

\subsection{Two-regime characterization}

\begin{proposition}[Two-regime characterization]
\label{prop:two_regime}
The platform optimally sets $s_j^*>0$ (thick-margin regime) if and only if
\[
\mu_j^{\mathrm{net}}>\mu_j^{\mathrm{crit}}:=\frac{p_j^{\mathrm{static}}+\omega_j}{\sigma_j'(0)}.
\]
Otherwise $s_j^*=0$ (thin-margin regime) and exit occurs at the static rate $p_j^{\mathrm{static}}$.
\end{proposition}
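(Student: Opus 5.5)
The plan is to treat the retention rent $s$ as a scalar choice variable. I maximize $V_j^A(s)=\Pi_j(s)/[1-\delta\tilde\sigma_j(s)]$ from Theorem~\ref{thm:dynamic}(b) over $s\geq 0$ and read off when the derivative at $s=0$ is positive.

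Write $D(s):=1-\delta\tilde\sigma_j(s)=1-\delta\omega_j-\delta(1-\omega_j)\sigma_j(s)$, which is positive. By the quotient rule, $\operatorname{sign}V'(s)=\operatorname{sign}\big[-D(s)+\delta(1-\omega_j)\sigma_j'(s)\Pi_j(s)\big]$, using $\Pi_j'(s)=-1$. At $s=0$ we have $\Pi_j(0)=\mu_j^{\mathrm{net}}$ and $\sigma_j(0)=1-p_j^{\mathrm{static}}$, which follows from part~(a) since $\sigma_j(0)=1-\Phi(-d_j)$. So $V'(0)>0$ if and only if
\[
\mu_j^{\mathrm{net}}\,\delta(1-\omega_j)\,\sigma_j'(0)>1-\delta\omega_j-\delta(1-\omega_j)(1-p_j^{\mathrm{static}}).
\]
The next step is to simplify the right-hand side and normalize. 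The right-hand side equals $1-\delta+\delta(1-\omega_j)p_j^{\mathrm{static}}$. I expect this algebra to be the main obstacle: the stated threshold $(p_j^{\mathrm{static}}+\omega_j)/\sigma_j'(0)$ must come out of this expression after dividing by $\delta(1-\omega_j)$, and that works only under the paper's normalization of $\omega_j$. First I would substitute $\omega_j=\delta\lambda_j/(1-\delta(1-\lambda_j))$, which gives $1-\omega_j=(1-\delta)/(1-\delta+\delta\lambda_j)$. With this, $(1-\delta)/[\delta(1-\omega_j)]=(1-\delta+\delta\lambda_j)/\delta$, and I would check whether this matches $\omega_j$ in the intended form, possibly by absorbing a factor of $\delta$ into the definition of $\omega_j$. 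If the constants do not match exactly, I would state the threshold in the derived form $\big[p_j^{\mathrm{static}}+(1-\delta)/(\delta(1-\omega_j))\big]/\sigma_j'(0)$ and note that it reduces to the displayed one under that normalization.

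For sufficiency and necessity, the local condition must be upgraded to a global one. Note that $\sigma_j'(s)=\phi\big(d_j+s/(\rho_j^*\sigma_{v,j})\big)/(\rho_j^*\sigma_{v,j})$ is decreasing in $s$ for $s\geq 0$, because $d_j>0$ and the Gaussian density decreases on the positive half-line. Also $\Pi_j(s)$ decreases linearly and $D(s)$ decreases. I would show that the bracket $B(s):=-D(s)+\delta(1-\omega_j)\sigma_j'(s)\Pi_j(s)$ crosses zero at most once from above on $\{s:\Pi_j(s)>0\}$. Differentiating gives $B'(s)=\delta(1-\omega_j)\big[\sigma_j''\Pi_j-\sigma_j'+\sigma_j'\big]=\delta(1-\omega_j)\sigma_j''(s)\Pi_j(s)<0$, since the $D'(s)$ term and the $\Pi_j'(s)$ term cancel. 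So $V$ is single-peaked. If $B(0)>0$, the optimum $s_j^*$ is interior and positive; otherwise $s_j^*=0$, and exit occurs at $p_j^{\mathrm{static}}$ by part~(a). Finally, $\mu_j^{\mathrm{net}}>0$ ensures $V(0)>0$, so staying active dominates and the comparison is well posed.
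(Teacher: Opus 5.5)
Your argument follows the paper's proof. The paper also takes the numerator of $dV_j^A/ds$, namely $\Gamma(s)=\delta(1-\omega_j)\Pi_j(s)\sigma_j'(s)-\bigl(1-\delta\tilde\sigma_j(s)\bigr)$. It gets $\Gamma'(s)=\delta(1-\omega_j)\Pi_j(s)\sigma_j''(s)<0$ by the same cancellation, and reads the regime off the sign of $\Gamma(0)$. Two of your details are fine. Your justification of $\sigma_j''<0$ (the argument $d_j+s/(\rho_j^*\sigma_{v,j})$ stays positive, where $\phi$ is decreasing) is more precise than the paper's appeal to log-concavity. Single-crossing also holds on all of $[0,\infty)$, because $B(s)\le -D(s)<0$ wherever $\Pi_j(s)\le 0$.

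The step that fails is your closing hedge that the derived threshold ``reduces to the displayed one under that normalization.'' Your algebra is right, and no such normalization exists. Using $1-\omega_j=(1-\delta)/(1-\delta+\delta\lambda_j)$,
\[
\frac{1-\delta}{\delta(1-\omega_j)}=\frac{1-\delta+\delta\lambda_j}{\delta}=\frac{\lambda_j}{\omega_j}=\lambda_j+\frac{1-\delta}{\delta}.
\]
This is not $\omega_j$ in general. As $\delta\to 1$ it tends to $\lambda_j$ while $\omega_j\to 1$, and at $\delta=0.9$, $\lambda_j=0.5$ it equals $11/18$ while $\omega_j=9/11$. Nor can $\omega_j$ be rescaled, because it is fixed by the recursion $V^I=\delta[\lambda_jV^A+(1-\lambda_j)V^I]$ behind Theorem~\ref{thm:dynamic}(b).

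So the correct criterion is $s_j^*>0$ iff $\mu_j^{\mathrm{net}}>\bigl(p_j^{\mathrm{static}}+\lambda_j/\omega_j\bigr)/\sigma_j'(0)$, and the displayed threshold is wrong. For a concrete case, take $d_j=1/2$ and $\rho_j^*\sigma_{v,j}=1$ (e.g.\ $\alpha_j=1$, $\mathcal{I}_j=1/2$, $\sigma_{v,j}=2$), with $\delta=0.9$, $\lambda_j=0.5$ and $\mu_j^{\mathrm{net}}=3$. Then $\Gamma(0)\approx 0.022>0$, so $s_j^*>0$, even though $\mu_j^{\mathrm{net}}$ lies below the displayed $\mu_j^{\mathrm{crit}}\approx 3.20$. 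The paper's proof hides this inside the words ``rearranges to.'' You should conclude with the corrected threshold rather than assert agreement with the displayed one.
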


\begin{proof}
Define $\Gamma(s)=F(s)-G(s)$ with $F=\delta(1-\omega)\Pi\sigma'$ and $G=1-\delta\tilde\sigma$. The first-order condition for the Bellman is $\Gamma(s^*)=0$, and $(F-G)'=\delta(1-\omega)\Pi\sigma''<0$ on the domain by log-concavity of the normal CDF. Strictly decreasing $\Gamma$ implies at most one zero. Evaluate $\Gamma(0)=\delta(1-\omega)\mu^{\mathrm{net}}\sigma'(0)-(1-\delta\tilde\sigma(0))$; the FOC has an interior root iff $\Gamma(0)>0$, which rearranges to $\mu^{\mathrm{net}}>\mu^{\mathrm{crit}}$. Otherwise the corner solution $s^*=0$ is optimal.
\end{proof}

\textbf{Patient-limit behavior.} In the thin-margin regime, $p_j(\delta)=p_j^{\mathrm{static}}$ for every $\delta$: patience is irrelevant when margins are thin, and the exit rate is fully determined by the static buffer. In the thick-margin regime, $s_j^*(\delta)$ is strictly increasing in $\delta$ with a finite patient limit $s_j^\infty=\lim_{\delta\to 1}s_j^*(\delta)$ satisfying $\Pi_j(s^\infty)\sigma_j'(s^\infty)=\lambda_j(1-\sigma_j(s^\infty))$, and the patient-limit exit probability $p_j^\infty=1-\sigma_j(s^\infty)<p_j^{\mathrm{static}}$. The retention rent paid by a patient platform is bounded above by the net margin.

\textbf{Comparative statics.} $\partial p_j^{\mathrm{static}}/\partial\mathcal{I}_j=-(\sigma_{v,j}/2)\phi(d_j)<0$ (higher informativeness reduces exit) and $\partial p_j^{\mathrm{static}}/\partial\sigma_{v,j}=-(\mathcal{I}_j/2)\phi(d_j)<0$. The latter is initially counterintuitive---higher signal variance \emph{reduces} exit---and reflects the fact that the risk premium grows quadratically in $\sigma_v$ while the payment standard deviation grows only linearly, so the buffer-to-std-dev ratio $d_j$ scales with $\sigma_v$.

\textbf{Retention dividend.} Improving $\mathcal{I}_j$ raises the steady-state participation rate and yields a marginal welfare gain
\[
\Delta_j^{\mathrm{ret}}=\frac{\lambda_j(N_j-\mathcal{L}_j)\sigma_{v,j}\phi(d_j)}{2(\lambda_j+p_j)^2}.
\]
This dividend feeds the augmented water-filling rule of Appendix~\ref{app:investment}.

\textbf{Music-specific stakes.} In the music context, the dynamic margin matters because creator exit destroys not just supply volume but \emph{stylistic diversity}: the loss of a distinctive reggae producer or classical composer cannot be compensated by adding more generic pop producers, and the steady-state participation rate of the most distinctive long-tail catalogs is the operative variable for the platform's long-run generation quality. Combined with the static results in the selective-exit analysis, the dynamic framework predicts that uniform compensation drives a persistent participation gap concentrated on the most valuable cohort.

%% file: section_arxiv/I_synthetic.tex
% !TEX root = ../main.tex
\section{Synthetic data value chains}
\label{app:synthetic}

The platform may use its generative model not only to serve end users but also to produce \emph{synthetic music data} for downstream purposes: fine-tuning specialized models, licensing to third parties, or reducing dependence on future human-created music licensing. This creates a multi-stage value chain: the platform licenses original music $\to$ trains a model $\to$ generates synthetic data $\to$ extracts downstream value.

In stage~1, the platform generates output $y$ with attributed value $v_{j}$ for creator $j$. In stage~2, the platform uses its model to produce synthetic data $\tilde{D}$ and extracts downstream value $V^{\mathrm{syn}}(\tilde{D})$. The total value attributable to creator $j$'s original data is not just $v_{j}$ but $v_{j} + \Delta v_j^{\mathrm{syn}}$, where $\Delta v_j^{\mathrm{syn}}$ captures $j$'s marginal contribution to the synthetic data's downstream value.

A payment rule based only on $v_{j}$ systematically underpays creators whose data is valuable for synthetic generation. This has three implications for the framework:
\begin{itemize}
 \item \emph{Amplified displacement.} Synthetic data generated in a creator's style continues competing in the market without further licensing, compounding the displacement effect $\delta_j$ over time.
 \item \emph{Compounding attribution noise.} Measuring $\Delta v_j^{\mathrm{syn}}$ requires attributing value through a chain (original data $\to$ model $\to$ synthetic data $\to$ downstream value), and the noise at each stage compounds. The effective SNR for the full-chain attribution is lower than the single-stage SNR, strengthening the case for fixed-fee components that do not depend on per-output measurement.
 \item \emph{Payment rule implications.} If the platform extracts significant synthetic-data value without compensating original creators, pure per-output royalties understate total value extraction. This strengthens the case for hybrid payments that include an upfront component reflecting the data's option value for future synthetic generation.
\end{itemize}

%% file: refs.bib
@article{choe2024your,
  title={What is your data worth to gpt? llm-scale data valuation with influence functions},
  author={Choe, Sang Keun and Ahn, Hwijeen and Bae, Juhan and Zhao, Kewen and Kang, Minsoo and Chung, Youngseog and Pratapa, Adithya and Neiswanger, Willie and Strubell, Emma and Mitamura, Teruko and others},
  journal={arXiv preprint arXiv:2405.13954},
  year={2024}
}

@article{thickstun2023anticipatory,
  title={Anticipatory music transformer},
  author={Thickstun, John and Hall, David and Donahue, Chris and Liang, Percy},
  journal={TMLR},
  year={2024}
}

@inproceedings{donahue2022melody,
  title={Melody transcription via generative pre-training},
  author={Donahue, Chris and Thickstun, John and Liang, Percy},
  booktitle={ISMIR},
  year={2022}
}

@article{acemoglu2022toomuch,
  author    = {D. Acemoglu and A. Makhdoumi and A. Malekian and A. Ozdaglar},
  title     = {Too much data: Prices and inefficiencies in data markets},
  journal   = {American Economic Journal: Microeconomics},
  volume    = {14},
  number    = {4},
  pages     = {218--256},
  year      = {2022}
}

@inproceedings{agarwal2019marketplace,
  author    = {A. Agarwal and M. Dahleh and T. Sarkar},
  title     = {A marketplace for data: An algorithmic solution},
  booktitle = {Proceedings of the 2019 ACM Conference on Economics and Computation},
  pages     = {701--726},
  year      = {2019}
}

@misc{barnett2024embeddings,
  author    = {J. Barnett and H. Flores Garcia and B. Pardo},
  title     = {Exploring musical roots: Applying audio embeddings to empower influence attribution for a generative music model},
  year      = {2024},
  note      = {arXiv preprint arXiv:2401.14542}
}

@inproceedings{choi2025unlearning,
  author    = {W. Choi and J. Koo and K. W. Cheuk and J. Serr{\`a} and M. A. Martinez-Ramirez and Y. Ikemiya and N. Murata and Y. Takida and W.-H. Liao and Y. Mitsufuji},
  title     = {Large-scale training data attribution for music generative models via unlearning},
  booktitle = {Advances in Neural Information Processing Systems (NeurIPS), Creative AI Track},
  year      = {2025}
}

@misc{agostinelli2023musiclm,
  author    = {A. Agostinelli and T. I. Denk and Z. Borsos and J. Engel and M. Verzetti and A. Caber and N. Zeghidour and C. Frank},
  title     = {{MusicLM}: Generating music from text},
  year      = {2023},
  note      = {arXiv preprint arXiv:2301.11325}
}

@misc{deng2023copyright,
  author    = {J. Deng and X. Jiang and S. Zhang and S. Zhang and H. Lakkaraju and R. Gao and C. Donahue and J. Ma},
  title     = {Computational copyright: Towards a royalty model for music generative {AI}},
  year      = {2023},
  note      = {arXiv preprint arXiv:2312.06646}
}

@inproceedings{ghorbani2019data,
  author    = {A. Ghorbani and J. Zou},
  title     = {Data {S}hapley: Equitable valuation of data for machine learning},
  booktitle = {Proceedings of the 36th International Conference on Machine Learning},
  pages     = {2242--2251},
  year      = {2019}
}

@misc{georgiev2023journeytrak,
  author    = {K. Georgiev and J. Vendrow and H. Salman and S. M. Park and A. Madry},
  title     = {The journey, not the destination: How data guides diffusion models},
  year      = {2023},
  note      = {arXiv preprint arXiv:2312.06205}
}

@inproceedings{giordano2019swiss,
  author    = {R. Giordano and W. Stephenson and R. Liu and M. Jordan and T. Broderick},
  title     = {A {S}wiss army infinitesimal jackknife},
  booktitle = {Proceedings of the 22nd International Conference on Artificial Intelligence and Statistics},
  pages     = {1139--1147},
  year      = {2019}
}

@article{holmstrom1979moral,
  author    = {B. Holmstr{\"o}m},
  title     = {Moral hazard and observability},
  journal   = {The Bell Journal of Economics},
  volume    = {10},
  number    = {1},
  pages     = {74--91},
  year      = {1979}
}

@article{holmstrom1987aggregation,
  author    = {B. Holmstr{\"o}m and P. Milgrom},
  title     = {Aggregation and linearity in the provision of intertemporal incentives},
  journal   = {Econometrica},
  volume    = {55},
  number    = {2},
  pages     = {303--328},
  year      = {1987}
}

@inproceedings{jia2019towards,
  author    = {R. Jia and D. Dao and B. Wang and F. A. Hubis and N. Hynes and N. M. G{\"u}rel and B. Li and C. Zhang and D. Song and C. Spanos},
  title     = {Towards efficient data valuation based on the {S}hapley value},
  booktitle = {Proceedings of the 22nd International Conference on Artificial Intelligence and Statistics},
  pages     = {1167--1176},
  year      = {2019}
}

@inproceedings{ilyas2022datamodels,
  author    = {A. Ilyas and S. M. Park and L. Engstrom and G. Leclerc and A. Madry},
  title     = {Datamodels: Predicting predictions from training data},
  booktitle = {Proceedings of the 39th International Conference on Machine Learning},
  pages     = {9525--9587},
  year      = {2022}
}

@article{jones2020nonrivalry,
  author    = {C. I. Jones and C. Tonetti},
  title     = {Nonrivalry and the economics of data},
  journal   = {American Economic Review},
  volume    = {110},
  number    = {9},
  pages     = {2819--2858},
  year      = {2020}
}

@inproceedings{koh2017understanding,
  author    = {P. W. Koh and P. Liang},
  title     = {Understanding black-box predictions via influence functions},
  booktitle = {Proceedings of the 34th International Conference on Machine Learning},
  pages     = {1885--1894},
  year      = {2017}
}

@article{markowitz1952portfolio,
  author    = {H. Markowitz},
  title     = {Portfolio selection},
  journal   = {The Journal of Finance},
  volume    = {7},
  number    = {1},
  pages     = {77--91},
  year      = {1952}
}

@article{maskin1984monopoly,
  author    = {E. Maskin and J. Riley},
  title     = {Monopoly with incomplete information},
  journal   = {The RAND Journal of Economics},
  volume    = {15},
  number    = {2},
  pages     = {171--196},
  year      = {1984}
}

@article{mussa1978monopoly,
  author    = {M. Mussa and S. Rosen},
  title     = {Monopoly and product quality},
  journal   = {Journal of Economic Theory},
  volume    = {18},
  number    = {2},
  pages     = {301--317},
  year      = {1978}
}

@inproceedings{park2023trak,
  author    = {S. M. Park and K. Georgiev and A. Ilyas and G. Leclerc and A. Madry},
  title     = {{TRAK}: Attributing model behavior at scale},
  booktitle = {Proceedings of the 40th International Conference on Machine Learning},
  pages     = {27074--27113},
  year      = {2023}
}

@article{pratt1964risk,
  author    = {J. W. Pratt},
  title     = {Risk aversion in the small and in the large},
  journal   = {Econometrica},
  volume    = {32},
  number    = {1/2},
  pages     = {122--136},
  year      = {1964}
}

@techreport{riaa2024lawsuit,
  author      = {{Recording Industry Association of America (RIAA)}},
  title       = {Major record labels file copyright infringement suits against {AI} music generators {S}uno and {U}dio},
  year        = {2024},
  institution = {Recording Industry Association of America (RIAA)},
  type        = {Press release},
  note        = {Press release, June 2024}
}

@techreport{euaiact2024,
  author      = {{European Parliament and Council of the European Union}},
  title       = {Regulation ({EU}) 2024/1689 laying down harmonised rules on artificial intelligence ({AI} Act)},
  year        = {2024},
  institution = {European Parliament and Council of the European Union},
  note        = {Official Journal of the European Union}
}

@techreport{ifpi2025,
  author    = {{International Federation of the Phonographic Industry (IFPI)}},
  title     = {Global Music Report 2025},
  institution = {IFPI},
  year      = {2025}
}

@book{scharf1991statistical,
  author    = {L. L. Scharf},
  title     = {Statistical Signal Processing: Detection, Estimation, and Time Series Analysis},
  publisher = {Addison-Wesley},
  year      = {1991}
}

@article{shannon1948mathematical,
  author    = {C. E. Shannon},
  title     = {A mathematical theory of communication},
  journal   = {The Bell System Technical Journal},
  volume    = {27},
  number    = {3},
  pages     = {379--423},
  year      = {1948}
}

@inproceedings{serra2010audio,
  author    = {J. Serr{\`a} and E. G{\'o}mez and P. Herrera},
  title     = {Audio cover song identification and similarity: Background, approaches, evaluation, and beyond},
  booktitle = {Advances in Music Information Retrieval},
  pages     = {307--332},
  publisher = {Springer},
  year      = {2010}
}

@inproceedings{spijkervet2021clmr,
  author    = {J. Spijkervet and J. A. Burgoyne},
  title     = {Contrastive learning of musical representations},
  booktitle = {Proceedings of the 22nd International Society for Music Information Retrieval Conference (ISMIR)},
  year      = {2021}
}

@inproceedings{wang2003shazam,
  author    = {A. L.-C. Wang},
  title     = {An industrial-strength audio search algorithm},
  booktitle = {Proceedings of the 4th International Conference on Music Information Retrieval (ISMIR)},
  pages     = {7--13},
  year      = {2003}
}

@inproceedings{zhang2025fairshare,
  author    = {L. Zhang and C. Jiao and B. Li and C. Xiong},
  title     = {Fairshare data pricing via data valuation for large language models},
  booktitle = {Advances in Neural Information Processing Systems (NeurIPS)},
  year      = {2025}
}

@article{bergemann2018design,
  author    = {D. Bergemann and A. Bonatti and A. Smolin},
  title     = {The design and price of information},
  journal   = {American Economic Review},
  volume    = {108},
  number    = {1},
  pages     = {1--48},
  year      = {2018}
}

@inproceedings{chen2022selling,
  author    = {Y. Chen and H. Liu and A. Karbasi},
  title     = {Selling data to a machine learner},
  booktitle = {International Conference on Machine Learning (ICML)},
  year      = {2022}
}

@inproceedings{guruganesh2021contracts,
  author    = {G. Guruganesh and J. Schneider and J. R. Wang},
  title     = {Contracts under Moral Hazard and Adverse Selection},
  booktitle = {Proceedings of the 22nd ACM Conference on Economics and Computation (EC)},
  year      = {2021}
}

@techreport{cong2022data,
  author    = {L. W. Cong and W. Wei and D. Xie and L. Zhang},
  title     = {Data, AI, and economic growth},
  institution = {NBER},
  type      = {NBER Working Paper},
  year      = {2022}
}

@book{cover2006elements,
  author    = {T. M. Cover and J. A. Thomas},
  title     = {Elements of Information Theory},
  publisher = {Wiley-Interscience},
  edition   = {2nd},
  year      = {2006}
}

@inproceedings{pruthi2020estimating,
  author    = {G. Pruthi and F. Liu and S. Kale and M. Sundararajan},
  title     = {Estimating training data influence by tracing gradient descent},
  booktitle = {Advances in Neural Information Processing Systems (NeurIPS)},
  year      = {2020}
}

@inproceedings{kwon2024datainf,
  author    = {Y. Kwon and E. Wu and K. Wu and J. Zou},
  title     = {{DataInf}: Efficiently estimating data influence in {LoRA}-tuned {LLMs} and diffusion models},
  booktitle = {International Conference on Learning Representations (ICLR)},
  year      = {2024}
}

@misc{grosse2023studying,
  author    = {R. Grosse and J. Bae and C. Anil and others},
  title     = {Studying large language model generalization with influence functions},
  year      = {2023},
  note      = {arXiv preprint arXiv:2308.03296}
}

@inproceedings{engstrom2024dsdm,
  author    = {L. Engstrom and A. Feldmann and A. M\k{a}dry},
  title     = {{DsDm}: Model-aware dataset selection with datamodels},
  booktitle = {International Conference on Machine Learning (ICML)},
  year      = {2024}
}

@inproceedings{bae2024training,
  author    = {J. Bae and W. Lin and J. Lorraine and R. Grosse},
  title     = {Training data attribution via approximate unrolled differentiation},
  booktitle = {Advances in Neural Information Processing Systems (NeurIPS)},
  year      = {2024}
}

@inproceedings{copet2023simple,
  author    = {J. Copet and F. Kreuk and I. Gat and T. Remez and D. Kant and G. Synnaeve and Y. Adi and A. D{\'e}fossez},
  title     = {Simple and controllable music generation},
  booktitle = {Advances in Neural Information Processing Systems (NeurIPS)},
  year      = {2023}
}

@misc{forsgren2022riffusion,
  author    = {S. Forsgren and H. Martiros},
  title     = {{Riffusion}: Stable diffusion for real-time music generation},
  year      = {2022},
  url       = {https://riffusion.com/about}
}

@misc{evans2024stable,
  author    = {Z. Evans and J. D. Parker and C. J. Carr and Z. Zukowski and J. Taylor and J. Pons},
  title     = {Stable {A}udio {O}pen},
  year      = {2024},
  note      = {arXiv preprint arXiv:2407.14358}
}

@inproceedings{bogdanov2019mtg,
  author    = {D. Bogdanov and M. Won and P. Tovstogan and A. Porter and X. Serra},
  title     = {The {MTG}-{J}amendo dataset for automatic music tagging},
  booktitle = {Machine Learning for Music Discovery Workshop, International Conference on Machine Learning (ICML)},
  year      = {2019}
}

@misc{grosse2023studyinglarge,
  author    = {R. Grosse and J. Bae and C. Anil and N. Elhage and A. Tamkin and A. Tajdini and B. Steiner and D. Li and E. Durmus and E. Perez and E. Hubinger and K. Lukosiute and K. Nguyen and N. Joseph and S. McCandlish and J. Kaplan and S. R. Bowman},
  title     = {Studying large language model generalization with influence functions},
  year      = {2023},
  note      = {arXiv preprint arXiv:2308.03296}
}

@inproceedings{zheng2024dtrak,
  author    = {X. Zheng and T. Pang and C. Du and Q. Liu and J. Jiang and M. Lin},
  title     = {Intriguing properties of data attribution on diffusion models},
  booktitle = {International Conference on Learning Representations (ICLR)},
  year      = {2024}
}

@inproceedings{mlodozeniec2025influence,
  author    = {Bruno Kacper Mlodozeniec and Runa Eschenhagen and Juhan Bae and Alexander Immer and David Krueger and Richard E. Turner},
  title     = {Influence functions for scalable data attribution in diffusion models},
  booktitle = {International Conference on Learning Representations (ICLR)},
  year      = {2025},
  url       = {https://openreview.net/forum?id=esYrEndGsr}
}

@inproceedings{vandenoord2017vqvae,
  author    = {A. van den Oord and O. Vinyals and K. Kavukcuoglu},
  title     = {Neural discrete representation learning},
  booktitle = {Advances in Neural Information Processing Systems (NeurIPS)},
  year      = {2017}
}

@inproceedings{dieleman2018challenge,
  author    = {S. Dieleman and A. van den Oord and K. Simonyan},
  title     = {The challenge of realistic music generation: modelling raw audio at scale},
  booktitle = {Advances in Neural Information Processing Systems (NeurIPS)},
  year      = {2018}
}

@misc{dhariwal2020jukebox,
  author    = {P. Dhariwal and H. Jun and C. Payne and J. W. Kim and A. Radford and I. Sutskever},
  title     = {{Jukebox}: A generative model for music},
  year      = {2020},
  note      = {arXiv preprint arXiv:2005.00341}
}

@inproceedings{ho2020ddpm,
  author    = {J. Ho and A. Jain and P. Abbeel},
  title     = {Denoising diffusion probabilistic models},
  booktitle = {Advances in Neural Information Processing Systems (NeurIPS)},
  year      = {2020}
}

@inproceedings{rombach2022highresolution,
  author    = {R. Rombach and A. Blattmann and D. Lorenz and P. Esser and B. Ommer},
  title     = {High-resolution image synthesis with latent diffusion models},
  booktitle = {Proceedings of the IEEE/CVF Conference on Computer Vision and Pattern Recognition (CVPR)},
  year      = {2022}
}

@inproceedings{caillon2025livemusic,
  author    = {A. Caillon and B. McWilliams and C. Tarakajian and I. Simon and I. Manco and J. Engel and N. Constant and Y. Li and others},
  title     = {Live music models},
  booktitle = {Advances in Neural Information Processing Systems (NeurIPS), Creative AI Track},
  year      = {2025},
  note      = {arXiv preprint arXiv:2508.04651}
}

@misc{gong2026acestep,
  author    = {J. Gong and Y. Song and W. Zhao and S. Wang and S. Xu and J. Guo and X. Yang},
  title     = {{ACE-Step 1.5}: Pushing the boundaries of open-source music generation},
  year      = {2026},
  note      = {arXiv preprint arXiv:2602.00744}
}

@misc{silberling2026suno,
  author       = {A. Silberling},
  title        = {{AI} music generator {S}uno hits {2M} paid subscribers and \$300{M} in annual recurring revenue},
  howpublished = {TechCrunch},
  year         = {2026},
  note         = {Published February 27, 2026},
  url          = {https://techcrunch.com/2026/02/27/ai-music-generator-suno-hits-2-million-paid-subscribers-and-300m-in-annual-recurring-revenue/}
}

@misc{deezer2026aiuploads,
  author       = {{Deezer}},
  title        = {{Deezer}: {AI}-generated tracks now represent 44\% of all new uploaded music},
  howpublished = {Deezer Newsroom},
  year         = {2026},
  note         = {Published April 20, 2026},
  url          = {https://newsroom.deezer.com/2026/04/ai-generated-tracks-represent-44-of-new-uploaded-music/}
}

@article{akerlof1970lemons,
  author    = {G. A. Akerlof},
  title     = {The market for ``lemons'': Quality uncertainty and the market mechanism},
  journal   = {The Quarterly Journal of Economics},
  volume    = {84},
  number    = {3},
  pages     = {488--500},
  year      = {1970}
}

@article{hirshleifer1971private,
  author    = {J. Hirshleifer},
  title     = {The private and social value of information and the reward to inventive activity},
  journal   = {The American Economic Review},
  volume    = {61},
  number    = {4},
  pages     = {561--574},
  year      = {1971}
}

@article{donahue2025hookpad,
  title={Hookpad Aria: A copilot for songwriters},
  author={Donahue, Chris and Wu, Shih-Lun and Kim, Yewon and Carlton, Dave and Miyakawa, Ryan and Thickstun, John},
  journal={arXiv preprint arXiv:2502.08122},
  year={2025}
}

@inproceedings{kim2025amuse,
  title={Amuse: Human-AI collaborative songwriting with multimodal inspirations},
  author={Kim, Yewon and Lee, Sung-Ju and Donahue, Chris},
  booktitle={Proceedings of the 2025 CHI conference on human factors in computing systems},
  pages={1--28},
  year={2025}
}
